\newtheorem{remark}{Remark}
\newtheorem{theorem}{Theorem}[section]
\begin{document}

\title{A Dual Alternating Direction Method of Multipliers for Image Decomposition and  Restoration}

\author{Qingsong Wang,
        ~Chengjing Wang,
        ~Peipei Tang,
        ~and Dunbiao Niu% <-this % stops a space ~\IEEEmembership{Life~Fellow,~IEEE}~Jane~Doe~Jane~Doe
\thanks{The work of Peipei Tang was supported by the Natural Science Foundation of Zhejiang Province of China under Grant LY19A010028 and the Science $\&$ Technology Development Project of Hangzhou, China under Grant 20170533B22.}
\thanks{Qingsong Wang, School of Mathematics, Southwest Jiaotong University, No.999, Xian Road, West Park, High-tech Zone, Chengdu 611756, China. ({\tt nothing2wang@hotmail.com}).}% <-this % stops a space
\thanks{Chengjing Wang, School of Mathematics, Southwest Jiaotong University, No.999, Xian Road, West Park, High-tech Zone, Chengdu 611756, China. ({\tt renascencewang@hotmail.com}).}% <-this % stops a space
\thanks{Peipei Tang, School of Computing Science, Zhejiang University City College, Hangzhou 310015, China. ({\tt tangpp@zucc.edu.cn}).}% <-this % stops a space
\thanks{Dunbiao Niu, College of Mathematics, Sichuan University, No.24 South Section 1, Yihuan Road, Chengdu 610065,  China. ({\tt dunbiaoniu\_sc@163.com}).}% <-this % stops a space
}

\maketitle

\begin{abstract}
In this paper, we develop a dual alternating direction method of multipliers (ADMM) for an image decomposition model. In this model, an image is divided into two meaningful components, i.e., a cartoon part and a texture part. The optimization algorithm that we develop not only gives the cartoon part and the texture part of an image but also gives the restored image (cartoon part + texture part). We also present the global convergence and the local linear convergence rate for the algorithm under some mild conditions. Numerical experiments demonstrate the efficiency and robustness of the dual ADMM (dADMM). Furthermore, we can obtain relatively higher signal-to-noise ratio (SNR) comparing to other algorithms. It shows that the choice of the algorithm is also important even for the same model.
\end{abstract}

% Note that keywords are not normally used for peerreview papers.
\begin{IEEEkeywords}
Alternating direction method of multipliers, Cartoon and texture, Deblurring, Image decomposition, Inpainting, Total variation.
\end{IEEEkeywords}

\IEEEpeerreviewmaketitle

\section{Introduction}

\IEEEPARstart{I}{n} computer science, digital image processing is the use of computer algorithms to perform image processing on images. As we know, image decomposition and restoration are the basic methods in image and computer vision science. Recently, they have been playing more and more important roles in digit image processing, machine learning, pattern recognition, and biomedical engineering, etc. As for image decomposition, an image can be divided into two parts, i.e.,  a cartoon part and a texture part. As for image restoration, it includes many types, such as image denoising, image inpainting, image deblurring, and so on. The aim of image decomposition and restoration is to enhance the quality of an image that is degraded by some reason.

The main tasks of this paper are extracting the cartoon and texture components from a degraded image (a blurry and/or missing pixels image) and getting a restored image from the decomposed results. Given an intensity function $f$ which is an image, the image decomposition is to derive $f = u+v$, where $u$ and $v$ represent the cartoon and texture component of the image $f$, respectively. In general, the cartoon part $u$ is an image formed by homogeneous regions and with sharp boundaries, the texture part $v$ is noise or small scale repeated details. Fig.~\ref{clean_decomposition_jpg} illustrates a image
example (the \emph{Weave} image in Fig.~\ref{test_jpg}) of cartoon and texture where the distinctions of between the two components can be visually discernable.
\begin{figure}[htbp]
	\centering  %ͼƬ¾ÓÖÐÅÅÁÐ
	\normalsize
	\subfloat[\emph{Weave}]{%Ç°ÃæÖÐÀ¨ºÅÀïÃæµÄaÊÇ×Óͼ±êÌâ
		\label{fig:subfig_a}%×ÓͼµÄÒýÓñêºÅ
		\begin{minipage}[t]{0.15\textwidth}%ÿ¸öͼÐδóС
			\centering
			\includegraphics[angle=0,width=1\textwidth]{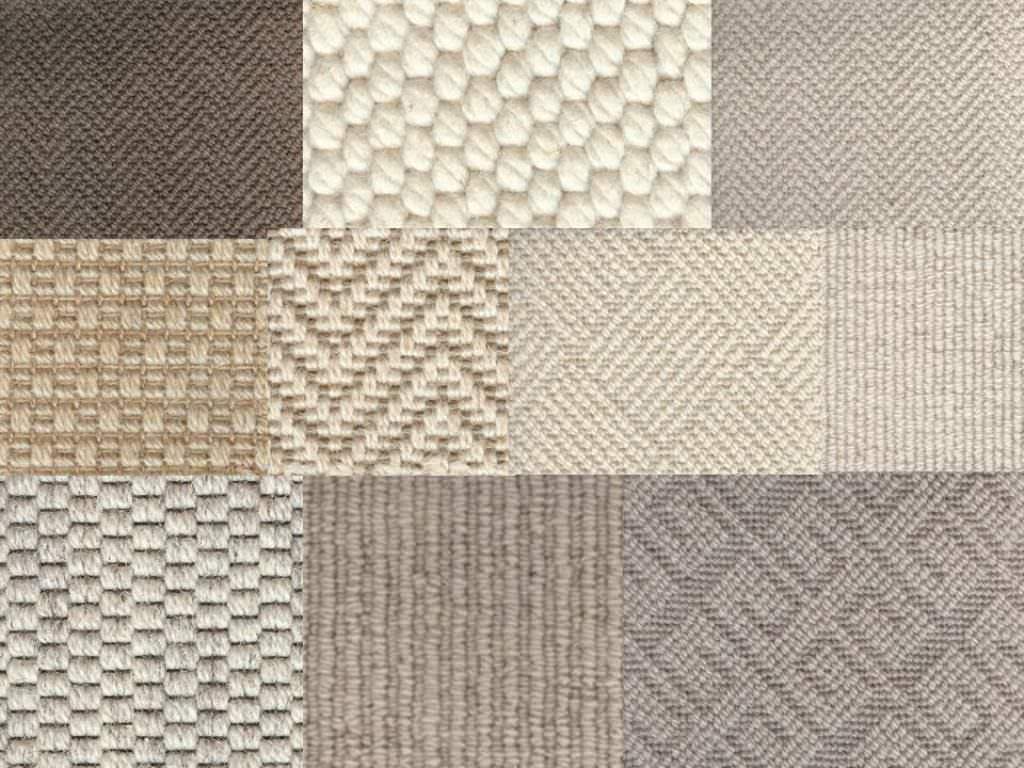}%²åÈëͼƬ,À¨ºÅÀïÃæÊÇͼƬ·¾¶
		\end{minipage}
	}
	\subfloat[cartoon part]{
		\label{fig:subfig_b}
		\begin{minipage}[t]{0.15\textwidth}
			\centering
			\includegraphics[angle=0,width=1\textwidth]{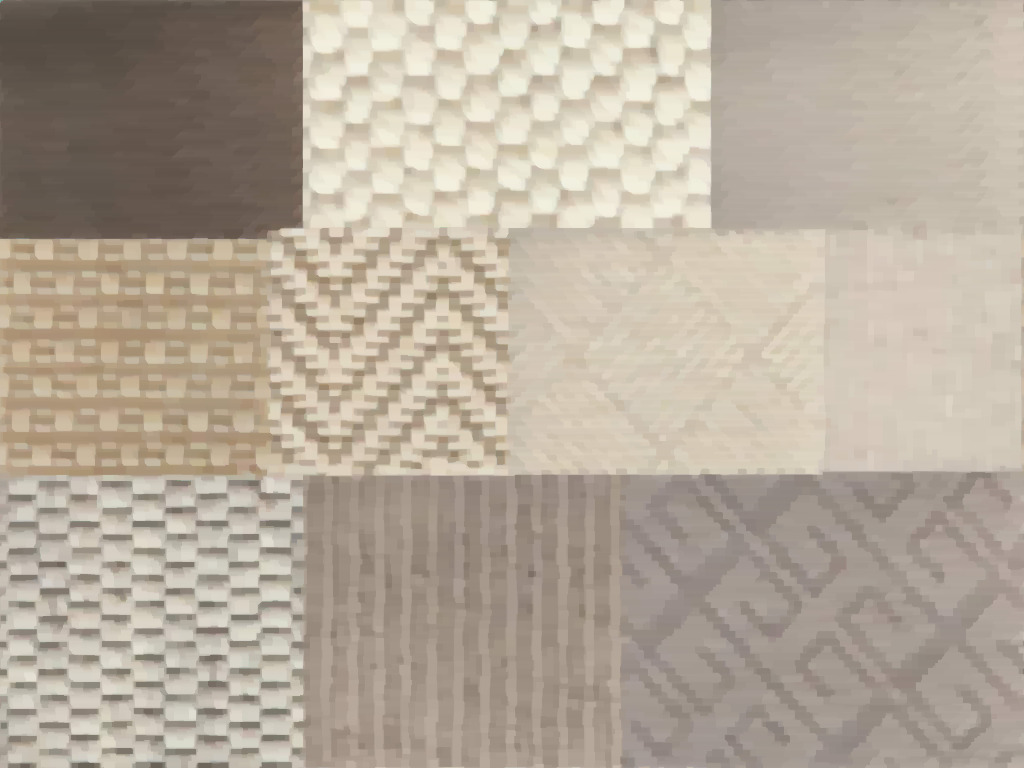}
		\end{minipage}
	}
	\subfloat[texture part]{
		\label{fig:subfig_c}
		\begin{minipage}[t]{0.15\textwidth}
			\centering
			\includegraphics[angle=0,width=1\textwidth]{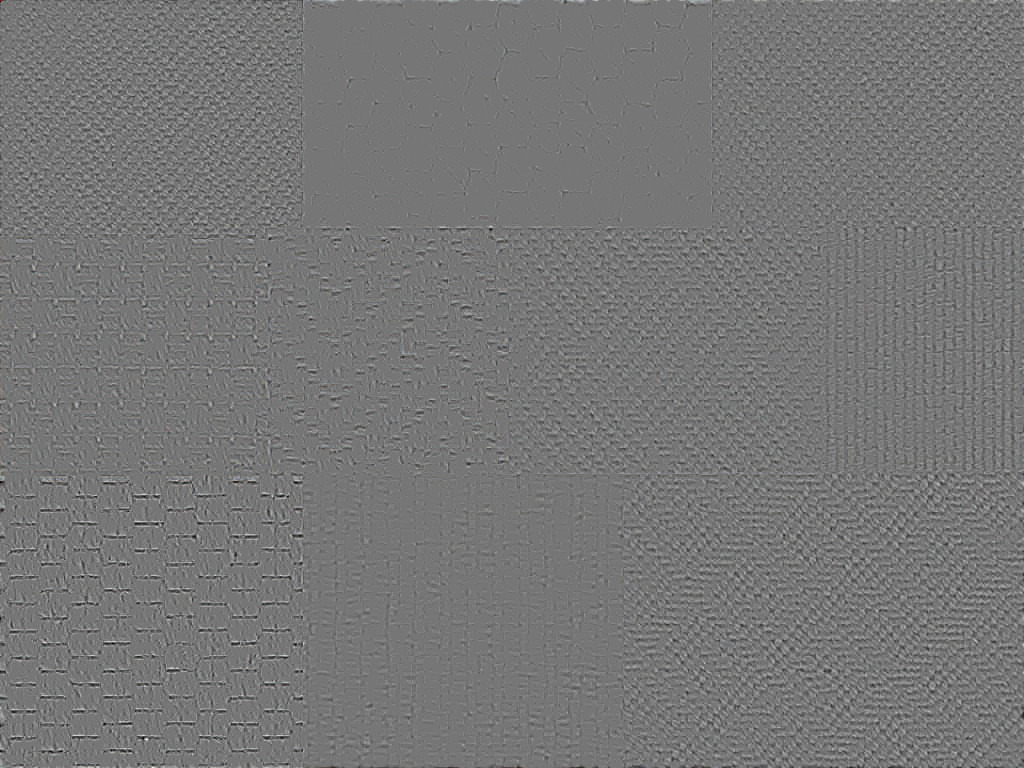}
		\end{minipage}
	}
	\caption{Image decomposition on \emph{Weave} image (\emph{c}) in Fig.~\ref{test_jpg}.}
	\label{clean_decomposition_jpg}
\end{figure}

According to some literatures, the first model about image decomposition is the classic image denoising model proposed in \cite{Rudin_Osher_Fatemi}, i.e., the well-known Rudin, Osher, and Fatemi (ROF) total variation (TV) minimization model for image denoising. The model is given by
\begin{eqnarray}\label{model1}
	\inf_{u}\int_{\Omega}|\nabla u| + \lambda\int_{\Omega}|f-u|^{2}, \label{eq:ROF_original}
\end{eqnarray}
where the set $\Omega$ is a domain of $\textbf{R}^{n}$, $\int_{\Omega}|\nabla u|$ denotes the TV of $u$ with the assumption that $u$ is of bounded variation: $u \in BV(\Omega)$. $\int_{\Omega}|f-u|^{2}$ is a fidelity term and $\lambda>0$ is a weight parameter. Although the problem \eqref{model1} is convex, it is still difficult to solve the functional programming problem. Then we write out the discrete version of the problem \eqref{eq:ROF_original} as follows
\begin{eqnarray}
\begin{aligned}
\min_{u\in\textbf{R}^{n},v\in\textbf{R}^{n}} \quad &\tau\||\nabla u|\|_{1} + \|v\|_{2}^{2} \\
\mbox{s.t.}\,\,\quad\quad & u+v=b,
\end{aligned} \label{eq:ROF_model}
\end{eqnarray}
where $b\in\textbf{R}^{n}$ is a given image, $\nabla :=\begin{pmatrix}\nabla _1\\ \nabla _2\end{pmatrix} : \textbf{R}^{n}\rightarrow \textbf{R}^{n}\times \textbf{R}^{n}$ denotes the discrete version of the first-order derivative operator (see \cite{Adams} for more details), and $\tau>0$ is a control parameter for the decomposition of $b$ into the cartoon part $u$ and the texture part $v$, $\||\nabla u|\|_{1}$ is the TV semi-norm for inducing the cartoon part $u\in \textbf{R}^{n}$, which is defined as

\[
\||\nabla u|\|_{1}:=\sum _{i=1}^n\left[ (\nabla _1u)^2_i+(\nabla _2u)^2_i\right] ^{1/2}.
\]
%or use the another way, which is given by
%\[
%||\,|\nabla u|\,||_{1}:=\sum_{i}|u_{i}-u_{i-1}|
%\]
It has been pointed out in the seminal book \cite{Meyer} that the $l_{1}$ norm or $l_{2}$ norm does not characterize the oscillatory components. This means that these components do not have small norms (see e.g. \cite{Vaillo_Caselles_Mazon}).  In \cite{Meyer}, a new model was proposed by replacing $\|\cdot\|_{2}^{2}$ of the ROF model \eqref{eq:ROF_model} with a weaker norm. The validity of this model has been verified by many numerical experiments in some papers, e.g., \cite{Vese_Osher}, \cite{Aujol_Chambolle}, and \cite{Maure_Aujol_Peyre}. Meyer's model can be equivalently written as
\begin{eqnarray}
\begin{aligned}
\min_{u\in\textbf{R}^{n},v\in\textbf{R}^{n}}\quad &\tau \||\nabla u|\|_{1} + \|v\|_{-1,\infty} \\
\mbox{s.t.}\,\,\quad\quad& u+v = b,
\end{aligned} \label{eq:-1,infinity}
\end{eqnarray}
where $\|v\|_{-1,\infty}$ is the negative semi-norm in Sobolev space for inducing the texture $v$. For any $s\in\left[1,+\infty\right]$,
\begin{equation}
||v||_{-1,s}:=\inf \{ \||g|\|_{s}\,|\,v=\mbox{div}g,\,\, g\in\textbf{R}^{n}\times\textbf{R}^{n} \}, \label{eq:-1,p}
\end{equation}
where $\||g|\|_s = (\sum_{i=1}^{n}|g|_{i}^{s})^\frac{1}{s}$, and $\mbox{div}:=-\nabla^{T} :  \textbf{R}^{n}\times \textbf{R}^{n} \rightarrow  \textbf{R}^{n} $ is the divergence operator. For the details about the operator $\mbox{div}$, one may refer to \cite{Adams}.

It is very difficult to get a numerical solution because the Euler-Lagrange equation of \eqref{eq:-1,infinity} can not be written out directly. There are many image decomposition models which have overcome this difficulty. For example, motivated by the approximation to the $l_\infty$ norm of  $|g|_{i}=\sqrt{(g_{1})^{2}_{i}+(g_{2})^{2}_{i}}$, $g=(g_1,g_2)\in \textbf{R}^{n}\times\textbf{R}^{n}$, and
\[
\||g|\|_{\infty}=\|\sqrt{g_{1}^{2}+g_{2}^{2}}\|_{\infty} = \lim_{s\rightarrow\infty}\|\sqrt{g_{1}^{2}+g_{2}^{2}}\|_{s},
\]
the following convex minimization problem is used in \cite{Vese_Osher}.
\begin{equation}
\min_{u\in\textbf{R}^{n},g\in\textbf{R}^{n}\times\textbf{R}^{n}}\, \tau \||\nabla u|\|_{1} + \frac{1}{2}\|u+\mbox{div} g-b\|_{2}^{2} + \mu\||g|\|_{s}, \label{eq:Euler-Lagrangian}
\end{equation}
where $u$ is the cartoon part of an image, $v=\mbox{div} g$ is the texture part of an image, $s\ge1$, and $\tau>0$, $\mu>0$ are two control parameters. The first and third terms are penalty terms, and the second term satisfies $b\approx u+\mbox{div} g$. This is the first practical method to overcome the difficulty.
In \cite{Vese_Osher}, the gradient method is employed to solve the Euler-Lagrange equation of \eqref{eq:Euler-Lagrangian}. It has also been demonstrated that
the solution of \eqref{eq:Euler-Lagrangian} can be used for texture discrimination and
segmentation.
%The optimization problem (4) can be approximated to model (2) if $p\rightarrow \infty$.

In \cite{Osher_Sole_Vese}, Osher \textit{et al.} proposed an alternative model by replacing $\|v\|_{-1,\infty}$ in the problem \eqref{eq:-1,infinity} with $\|v\|_{-1,2}$. Then the model is
\begin{eqnarray*}
\begin{aligned}
\min_{u\in\textbf{R}^{n},v\in\textbf{R}^{n}}\quad &\tau \||\nabla u|\|_{1} + \|v\|_{-1,2} \\
\mbox{s.t.}\,\,\quad\quad& u+v = b.
\end{aligned}
\end{eqnarray*}

In \cite{Aujol_Gilboa_Chan_Osher}, Aujol \textit{et al.} studied a constrained optimization model for image image decomposition, that is,
\begin{eqnarray*}
\begin{aligned}
\min_{u\in\textbf{R}^{n},v\in\textbf{R}^{n}}\quad &\||\nabla u|\|_{1} + \frac{1}{2\sigma}\|b-u-v\|^{2}_{2} \\
\mbox{s.t.}\,\,\quad\quad& \|v\|_{-1,\infty} \le \mu,
\end{aligned}
\end{eqnarray*}
where $\sigma$ and $\mu$ are two positive parameters to decompose the image $b$ into the cartoon part $u$ and texture part $v$. Yin \textit{et al.} \cite{Yin_Goldfarb_Osher} studied a second-order cone programming problem, and the problem \eqref{eq:-1,infinity} is a special case of this problem. An overview of image and signal decomposition by using sparsity and morphological diversity was given in \cite{Fadili_Starck_Bobin_Moudden}.

As for image restoration, it is often formulated as an inverse problem. For a degraded image $b$, we recover an unknown clean image $x$  from $b$ such that
\[
b\approx Hx,
\]
where the degradation operation $H:\textbf{R}^{n}\rightarrow\textbf{R}^{n}$ is a linear operator. $Hx$ may be contaminated by some noises, such as the additive noise (for example, Gaussian noise, or impulse noise, i.e., $b=Hx + \varepsilon$), Poisson noise and other multiplicative noise. $H$ can be an identity operator (for denoising), a convolution operator (for deblurring), a projection operator (for inpainting) or the mixing of these operators. When we decompose an image $x$ into two parts, i.e., the cartoon part $u$ and the texture part $v$, it satisfies
\begin{eqnarray*}
 b\approx H(u + v) = H(u + \mbox{div} g).
\end{eqnarray*}
In this paper, we study an image decomposition and reconstruction model for images with degradations. The optimization problem to be considered can be explicitly written as
\begin{equation}
\min_{u\in\textbf{R}^{n},g\in\textbf{R}^{n}\times\textbf{R}^{n}} \tau \||\nabla u|\|_{1} + \frac{1}{2}\|H(u+\mbox{div} g)-b\|_{2}^{2} + \mu\||g|\|_{s}, \label{eq:target}
\end{equation}
where $\tau>0$ and $\mu>0$ are the trade-off parameters between the cartoon part $u$ and the texture part $\mbox{div} g$, respectively;  and $s\ge 1$. Actually only the cases of $s=1$, $s=2$ and $s=\infty$ are considered in the numerical experiments (see Section III for details).

There are some algorithms to solve problem \eqref{eq:target}. The alternating direction method of multipliers (ADMM) \cite{Glowinski_Marrocco} and the ADMM with Gaussian back substitution in \cite{Ng_Yuan_Zhang} are often used methods. Most of the ADMM type methods are applied to the primal problem (pADMM). The proximal gradient (forward-backward splitting) method in \cite{Cai_Chan_Shen} is a special gradient descent method, which is mainly used to solve the optimization problem of non-differentiable objective function. Furthermore, the alternating minimization (AM) based method \cite{Aujol_Aubert} and the partial differential equation (PDE) based method \cite{Vese_Osher} have also been used to solve this kind of problem. Among these algorithms, by and large, the pADMM is relatively most effective and robust. However, for some problems the pADMM is less efficient which also affect to some extent the quality of the image decomposition and restoration.

The rest of this paper is organized as follows. In Section II, we develop a dual ADMM (dADMM) for a more general model problem. Next, some details about the algorithm are explained. We also present the global convergence and the local linear convergence rate of the algorithm, respectively. Some numerical experiments for image decomposition and restoration are presented in Section III. Finally, we give the conclusion in Section IV.

\subsection{Preliminaries}
In this subsection, we summarize some notations of convex optimization which will be used in the subsequent analysis.

For a given proper convex function $f:\textbf{R}^{n}\rightarrow (-\infty,+\infty]$, the proximal mapping $\mbox{Prox}_{\sigma f}(\cdot)$ of $f$ with positive parameter $\sigma$ is defined by
\[
\mbox{Prox}_{\sigma f}(x):= \arg\min_{u\in\textbf{R}^{n}}\,\,\{f(u) + \frac{1}{2\sigma}\|u-x\|_{2}^{2}\},\quad \forall\ x \in \mbox{dom}f,
\]
where $\mbox{dom}f$ is defined as $\mbox{dom}f:=\{x\in\textbf{R}^{n}\,|\,f(x)<+\infty\}$. Next, the Moreau identity is given by
\[
\mbox{Prox}_{\sigma f}(x) + \sigma\mbox{Prox}_{f^{*}/\sigma}(x/\sigma) = x.
\]

For a given function $f:\textbf{R}^{n}\rightarrow [-\infty,+\infty]$, the conjugate of $f$ is defined as
\[
f^{*}(y) = \sup_{x\in\textrm{dom}f}\{ \left<y,x\right> - f(x) \}.
\]
The function $f^{*}$ is closed and convex (even when $f$ is not).

For more details, one may refer to \cite{Rockafellar}.

\section{The Algorithm}
In this section, we first introduce a general optimization problem. Then the dADMM is applied to solve the problem. Based on the convergence result of the semi-proximal ADMM which was summerized in \cite{Fazel_Pong_Sun_Tseng} and the linear convergence rate result which was established in \cite{Han_Sun_Zhang}, we will establish the global convergence and the local linear convergence rate of our algorithm.

Now we consider a general optimization problem as the following form
\begin{equation}
\min_{x\in\mathcal{X},y\in\mathcal{Y}} \quad \frac{1}{2}\|Ax+By-b\|_{2}^{2} + p(x)+q(y), \label{eq:general}
\end{equation}
where $\mathcal{X}$, $\mathcal{Y}$ and $\mathcal{Z}$ are three finite-dimensional real Euclidean spaces each equipped with an inner product $\left<\cdot,\cdot\right>$ and its induced norm $\|\cdot\|$, $b\in \mathcal{Z}$ is a given variable, $A: \mathcal{X}\rightarrow \mathcal{Z}$ and $B: \mathcal{Y}\rightarrow \mathcal{Z}$ are two linear operators, $p(x)$ and $q(y)$ are two closed proper convex functions (which may be nonsmooth).

 It is quite clear that the problem \eqref{eq:target} is a special case of the problem \eqref{eq:general} with the variables $x\in\textbf{R}^{n}$, $y\in\textbf{R}^{n}\times\textbf{R}^{n}$, $b\in\textbf{R}^{n}$, the linear operators $A=H$ and $B=H\cdot\mbox{div}$, the functions $p(\cdot)=\tau \||\nabla (\cdot)|\|_{1}$ and $q(\cdot)=\mu\||\cdot|\|_{s}$.

By introducing a slack variable $z$, the optimization problem \eqref{eq:general} is equivalent to
\[
\begin{aligned}
\min_{x\in\mathcal{X},y\in\mathcal{Y},z\in\mathcal{Z}}\quad &\frac{1}{2}\| z \|_{2}^{2} + p(x) + q(y)\\
\mbox{s.t.}\quad\quad\quad & Ax+By-b=z.
\end{aligned} \tag{P}
\]
The Lagrangian function associated with problem $(P)$ is given by
\[
l(z,x,y;u)=\frac{1}{2}\| z \|_{2}^{2} + p(x) + q(y) + \langle u,Ax+By-z-b \rangle.
\]
Then
\[
\begin{aligned}
&\max_{u\in\mathcal{Z}}\inf_{z,x,y}l(z,x,y;u)\\
&=\max_{u\in\mathcal{Z}}\{\inf_{z}\{\frac{1}{2}\| z \|_{2}^{2} - \left<u,z\right> \} + \inf_{x}\{p(x)+\left<A^{*}u,x\right> \}\\
&\quad + \inf_{y}\{ q(y) +\left<B^{*}u,y\right> \} - \left<u,b\right>\}\\
&=\max_{u\in\mathcal{Z}}\{-\frac{1}{2}\| u \|_{2}^{2} - \sup_{x}\{\left<-A^{*}u,x\right>-p(x)\}\\
&\quad -\sup_{y}\{ \left<-B^{*}u,y\right> - q(y)\} - \left<u,b\right>\}\\
&=\max_{u\in\mathcal{Z}}\{-\frac{1}{2}\| u \|_{2}^{2} - p^{*}(-A^{*}u) - q^{*}(-B^{*}u) - \left<u,b\right>\}.
\end{aligned}
\]
Thus the dual of problem $(P)$ can be explicitly written as
\begin{equation}
\min_{u\in\mathcal{Z}}\quad \frac{1}{2}\| u \|_{2}^{2} + p^{*}(-A^{*}u) + q^{*}(-B^{*}u) + \left<u,b\right>. \label{eq:dual}
\end{equation}
By introducing two slack variables $v$ and $w$, \eqref{eq:dual} can be equivalently rewritten as
\[
\begin{aligned}
\min_{u\in\mathcal{Z},v\in\mathcal{X},w\in\mathcal{Y}}\quad &\frac{1}{2}\|u\|_{2}^{2} + \left<u,b\right> + p^{*}(v) + q^{*}(w)\\
\mbox{s.t.} \quad\quad\quad &-A^{*}u - v=0,\\
&-B^{*}u - w=0 .
\end{aligned} \tag{D}
\]
Then the augmented Lagrangian function associated with problem $(D)$ is given by
\[
\begin{aligned}
L_{\sigma}(u,v,w;x,y)=&\frac{1}{2}\| u \|_{2}^{2} + \left<u,b\right> + p^{*}(v) + q^{*}(w)\\
& + \left<x,-A^{*}u - v\right> +\left<y,-B^{*}u - w\right>\\
&+\frac{\sigma}{2}\| A^{*}u + v \|_{2}^{2} + \frac{\sigma}{2}\| B^{*}u + w \|_{2}^{2}.
\end{aligned}
\]

Now we describe the algorithmic framework of the dADMM as follows.
\begin{center}
	\fbox{\parbox{3.3in}{\textbf{The dADMM for problem $(D)$} :\\
			Given $\varepsilon > 0$. Let $\tau\in(0,(1+\sqrt{5})/2)$ be a scalar parameter, $\sigma > 0$ be a given arbitrary parameter, and choose $u^{0}, v^{0}, w^{0}, x^{0},y^{0}$. For $k=0,1,2,\dots$, perform the following steps in each iteration:\\
			\textbf{Step 1.} Compute the $u,v,w:$
			\[
			\begin{aligned}
			(I_{\mathcal{Z}}+\sigma AA^{*}+\sigma BB^{*})u^{k+1}&=Ax^{k}+By^{k} -b\\
			&\quad\,\,\, -\sigma Av^{k} - \sigma Bw^{k}, \\
			\left[
			\begin{matrix}
			v^{k+1}\\
			w^{k+1}
			\end{matrix}
			\right]
			&=
			\left[
			\begin{matrix}
			\mbox{Prox}_{p^{*}/\sigma}(\frac{x^{k}}{\sigma} - A^{*}u^{k+1} )\\
			\mbox{Prox}_{q^{*}/\sigma}(\frac{y^{k}}{\sigma} - B^{*}u^{k+1} )
			\end{matrix}
			\right],
			\end{aligned}
			\]
			\textbf{Step 2.} Update $x,y:$
			\[
			\begin{aligned}
			x^{k+1} &= x^{k} + \tau\sigma(-A^{*}u^{k+1} - v^{k+1}),\\
			y^{k+1} &= y^{k} + \tau\sigma(-B^{*}u^{k+1} - w^{k+1}),\\
			\end{aligned}
			\]
			\textbf{Step 3.} If a termination criterion is not satisfied, go to \textbf{Step 1}.			
	}}
\end{center}

Then we describe the details of how to solve the $u$-subproblem and $(v,\,w)$-subproblem of the algorithm.
\begin{itemize}
\item $u$- subproblem:
\[
\begin{aligned}
\bar{u}=\arg\min_{u\in\mathcal{Z}}\,\, &\frac{1}{2}\ \| u \|_{2}^{2} + \left<u,b\right> - \left<x,A^{*}u + v\right> \\&-\left<y,B^{*}u + w\right> + \frac{\sigma}{2}\| A^{*}u + v\|_{2}^{2}\\
& + \frac{\sigma}{2}\|B^{*}u + w\|_{2}^{2},
\end{aligned}
\]
which is equivalent to solve the following linear system
\begin{eqnarray}
(I_{\mathcal{Z}}+\sigma AA^{*}+\sigma BB^{*})u &=& Ax+By -b - \sigma Av\nonumber\\
&& - \sigma Bw.
\label{eq:linear system}
\end{eqnarray}
In \eqref{eq:linear system}, $I_{\mathcal{Z}}$ denotes the identity matrix in the space $\mathcal{Z}$. There are similar notations behind and we will not give more explanations.
\item $(v,\,w)$- subproblem:
\[
\begin{aligned}
\bar{v}&=\arg\min_{v\in\mathcal{X}}\,\, \{p^{*}(v)  - \left<x,A^{*}u + v\right> +\frac{\sigma}{2}\|A^{*}u + v\|_{2}^{2} \}\\
%&=\arg\min_{v}\,\,p^{*}(v)-\frac{\sigma}{2}\|A^{*}u + v-\frac{x}{\sigma}\|_{2}^{2}\\
&=\arg\min_{v\in\mathcal{X}}\,\,\{p^{*}(v) + \frac{\sigma}{2}\|v-(\frac{x}{\sigma}-A^{*}u)\|_{2}^{2}\}\\
&=\mbox{Prox}_{p^{*}/\sigma}(\frac{x}{\sigma} - A^{*}u ),
\end{aligned}
\]
\[
\begin{aligned}
\bar{w}&=\arg\min_{w\in\mathcal{Y}}\,\, \{q^{*}(w)  - \left<y,B^{*}u + w\right> +\frac{\sigma}{2}\|B^{*}u + w\|_{2}^{2} \}\\
%&=\arg\min_{w}\,\,p^{*}(v)-\frac{\sigma}{2}\|A^{*}u + v-\frac{x}{\sigma}\|_{2}^{2}\\
&=\arg\min_{w\in\mathcal{Y}}\,\,\{ q^{*}(w) + \frac{\sigma}{2}\|w-(\frac{y}{\sigma}-B^{*}u)\|_{2}^{2} \}\\
&=\mbox{Prox}_{q^{*}/\sigma}(\frac{y}{\sigma} - B^{*}u ).
\end{aligned}
\]
\end{itemize}
Based on the Moreau identity, we have
\[
\begin{aligned}
\mbox{Prox}_{p^{*}/\sigma}( \frac{x}{\sigma} -A^{*}u)&=\frac{x -\sigma A^{*}u - \mbox{Prox}_{\sigma p}(x -\sigma A^{*}u)}{\sigma},\\
\mbox{Prox}_{q^{*}/\sigma}(\frac{y}{\sigma} - B^{*}u)&=\frac{y -\sigma B^{*}u-\mbox{Prox}_{\sigma q}(y -\sigma B^{*}u)}{\sigma}.
\end{aligned}
\]

Next, we adapt the results developed in  \cite{Fazel_Pong_Sun_Tseng}, \cite{Han_Sun_Zhang} and \cite{Rockafellar} to establish the global convergence in Theorem II.1 and the local linear convergence rate of the dADMM in Theorem II.2, respectively.
\begin{theorem}
	We consider the situation of $\mathcal{S}=0$ and $\mathcal{T}=0$ in (\cite{Fazel_Pong_Sun_Tseng}, Theorem B.1). Assume that the solution set of $(D)$ is nonempty and that the constraint qualification holds. Let the sequence $\{(u^{k},v^{k},w^{k},x^{k},y^{k})\}$ be generated from the dADMM. If $\tau \in (0,(1+\sqrt{5})/2)$, then the sequence $\{ (u^{k},v^{k},w^{k}) \}$ converges to an optimal solution to the problem $(D)$ and the sequence $\{(x^{k},y^{k})\}$ converges to an optimal solution to the problem $(P)$.
\end{theorem}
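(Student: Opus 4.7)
The plan is to recast problem $(D)$ as a two-block convex composite program that fits into the semi-proximal ADMM framework of \cite{Fazel_Pong_Sun_Tseng}, Theorem B.1, with the semi-proximal operators $\mathcal{S}$ and $\mathcal{T}$ both set to zero as indicated in the statement, and then to extract primal convergence of $\{(x^{k},y^{k})\}$ via the standard correspondence between the multipliers of $(D)$ and the primal variables of $(P)$.

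First I would identify the two blocks: the first block is $u$ with the smooth strongly convex part $\tfrac{1}{2}\|u\|_{2}^{2}+\langle u,b\rangle$, and the second block is $(v,w)$ with the separable nonsmooth part $p^{*}(v)+q^{*}(w)$, coupled by the two linear equations $-A^{*}u-v=0$ and $-B^{*}u-w=0$ of $(D)$. Under this identification, the three substeps of the dADMM are exactly the (unrelaxed) $u$-subproblem, the $(v,w)$-subproblem, and the Glowinski--Fortin multiplier update for the augmented Lagrangian $L_{\sigma}$ with relaxation parameter $\tau$.

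Next I would verify the hypotheses of \cite{Fazel_Pong_Sun_Tseng}, Theorem B.1, with $\mathcal{S}=0$ and $\mathcal{T}=0$. Properness, closedness, and convexity of $p^{*}$ and $q^{*}$ follow from the analogous properties of $p$ and $q$. For the $u$-subproblem to admit a unique minimizer with $\mathcal{S}=0$, it suffices that its Hessian $I_{\mathcal{Z}}+\sigma AA^{*}+\sigma BB^{*}$ be positive definite, which holds for \emph{any} linear operators $A,B$ because the identity term $I_{\mathcal{Z}}$ is itself positive definite; hence the linear system \eqref{eq:linear system} is uniquely solvable. For the $(v,w)$-subproblem, the augmented objective separates across $v$ and $w$, so each component reduces to the strongly convex proximal subproblem defining $\mbox{Prox}_{p^{*}/\sigma}$ and $\mbox{Prox}_{q^{*}/\sigma}$ displayed in Step~$1$, and is therefore uniquely solvable with $\mathcal{T}=0$. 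Combining these observations with the assumed nonemptiness of the solution set of $(D)$ and the constraint qualification --- which supplies a KKT saddle point for the Lagrangian of $(D)$ --- and with $\tau\in(0,(1+\sqrt{5})/2)$, \cite{Fazel_Pong_Sun_Tseng}, Theorem B.1, yields convergence of $\{(u^{k},v^{k},w^{k})\}$ to some optimal solution of $(D)$ together with convergence of the multiplier sequence $\{(x^{k},y^{k})\}$ to an associated Lagrange multiplier.

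Finally, I would translate that multiplier limit into a primal optimal solution of $(P)$. Since $(P)$ is a convex program whose Lagrange dual is, after the simplification carried out in Section~II, exactly $(D)$, strong duality (ensured by the constraint qualification) implies that any Lagrange multiplier for the two constraints of $(D)$ at an optimal $(u,v,w)$ is a primal optimal solution of $(P)$. The $(x^{k},y^{k})$ update in Step~$2$ is the standard dual ascent step with stepsize $\tau\sigma$ on those constraints, so the multiplier iterates in \cite{Fazel_Pong_Sun_Tseng}, Theorem B.1, coincide with ours, and their limit therefore solves $(P)$. The only delicate step, and the one I expect to require the most careful bookkeeping, is this last primal--dual identification: writing out the KKT system of $(D)$ and matching it term by term with the KKT system of $(P)$ so as to conclude that the $(x^{k},y^{k})$ limit is indeed primal optimal; the remaining ingredients are immediate consequences of the closedness and convexity of $p,q$ and of the invertibility of $I_{\mathcal{Z}}+\sigma AA^{*}+\sigma BB^{*}$.
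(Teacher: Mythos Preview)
Your proposal is correct and follows exactly the approach of the paper, which simply invokes \cite{Fazel_Pong_Sun_Tseng}, Theorem~B.1, with $\mathcal{S}=0$ and $\mathcal{T}=0$ and omits all details. Your write-up in fact supplies the verification of the hypotheses and the primal--dual identification that the paper leaves implicit.
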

\begin{proof}
	We can prove the conclusion of this theorem based on Theorem B.1 in \cite{Fazel_Pong_Sun_Tseng} directly. Thus, we omit the details here.
\end{proof}
Before going to the statement of the following theorem, we denote $m:=(u,v,w,x,y)$ with $u\in\mathcal{Z}$, $v,x\in\mathcal{X}$ and $w,y\in\mathcal{Y}$, and $\mathcal{M}:=\mathcal{Z}\times\mathcal{X}\times\mathcal{Y}\times\mathcal{X}\times\mathcal{Y}$. Besides, define the KKT  (Karush-Kuhn-Tucker) mapping \cite{Kuhn_Tucker} $R: \mathcal{M}\rightarrow \mathcal{M}$ as
\[
R(m):=\left[
\begin{matrix}
u+b-Ax-By \\
\left[
\begin{matrix}
v - \mbox{Prox}_{p^{*}}(v + x) \\
w - \mbox{Prox}_{q^{*}}(w + y) \\
\end{matrix}
\right]\\
\left[
\begin{matrix}
-A^{*}u - v \\
-B^{*}u- w \\
\end{matrix}
\right]\\
\end{matrix}
\right],\quad \forall\ m\in \mathcal{M}.
\]
If there exists $(\bar{u},\bar{v},\bar{w},\bar{x},\bar{y})$ which satisfies $R(m)=0$, then $(\bar{u},\bar{v},\bar{w},\bar{x},\bar{y})$ is called a KKT point for the problem $(D)$.

Define a self-adjoint linear operator $\mathcal{P}$ as below
\[
\mathcal{P} := \mbox{Diag}(I_{\mathcal{Z}}, \sigma I_{\mathcal{X}\times\mathcal{Y}}, (\tau\sigma)^{-1}I_{\mathcal{X}\times\mathcal{Y}}) + s_{\tau}\sigma\Gamma\Gamma^{*},
\]
%where $I_{\mathcal{Z}}$ and $I_{\mathcal{X}\times\mathcal{Y}}$ are the identity operators in $\mathcal{Z}$ and $\mathcal{X}\times\mathcal{Y}$ spaces, respectively,
where
\[
s_{\tau} := \frac{5-\tau-3\mbox{min}\{\tau,\tau^{-1}\}}{4},\quad \forall\, \tau\in(0,\infty),
\]
and $\mbox{Diag}(\cdot,\cdot,\cdot)$ is a $3\times 3$ block diagonal linear operator.
Note that $1/4 \le s_{\tau}\le 5/4$, $\forall\, \tau\in(0,(1+\sqrt{5})/2)$,
and let $\Gamma : \mathcal{X}\times \mathcal{Y}\rightarrow \mathcal{M}$ be a linear operator such that its adjoint $\Gamma^{*}$ satisfies
\[
\begin{aligned}
\Gamma^{*}(m) &=
\left[\begin{matrix}
-A^{*}\\
-B^{*}
\end{matrix}
\right]\times
u+ \left[\begin{matrix}
-I_{\mathcal{X}}& 0\\
0 &-I_{\mathcal{Y}}
\end{matrix}
\right]\times
\left[\begin{matrix}
v\\
w
\end{matrix}
\right]\\
&=\left[\begin{matrix}
-A^{*}u-v\\
-B^{*}u-w
\end{matrix}
\right],\quad \forall\, m\in \mathcal{M}.
\end{aligned}
\]
Then we denote $\|m\|_{\mathcal{P}}:=\sqrt{\langle m, \mathcal{P}m \rangle}$ and  $\mbox{dist}_{\mathcal{P}}(m,\Omega):=\inf_{\omega \in \Omega}\|m-\omega\|_{\mathcal{P}} $ for any $m\in \mathcal{M}$ and any set $\Omega \subseteq \mathcal{M}$.
\begin{theorem}
	We consider the situation of $\mathcal{S}=0$ and $\mathcal{T}=0$ in (\cite{Han_Sun_Zhang}, Corollary 1). Let $\tau \in (0,(1+\sqrt{5})/2)$.  Assume that the solution set  $ \bar{\Omega}=\{\bar{m}:=(\bar{u},\bar{v},\bar{w},\bar{x},\bar{y})\}$ is nonempty. Suppose that the mapping $R: \mathcal{M}\rightarrow \mathcal{M}$ is piecewise polyhedral. Then there exist a constant $\hat{\eta}>0$ such that the infinite sequence $\{(u^{k},v^{k},w^{k},x^{k},y^{k})\}$ generated from the dADMM satisfies that for all $k\ge 1$,
	\[
	\begin{aligned}
	\mbox{dist}(m^{k},\bar{\Omega}) &\le \hat{\eta}\|R(m^{k})\|_{2}, \\
	\mbox{dist}_{\mathcal{P}}(m^{k+1},\bar{\Omega})  &\le \hat{\mu}\,\mbox{dist}_{\mathcal{P}}(m^{k},\bar{\Omega}),
	\end{aligned}
	\]
	where $0<\hat{\mu}<1$ is a constant parameter.
\end{theorem}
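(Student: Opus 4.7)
The plan is to reduce the statement directly to Corollary 1 of Han, Sun and Zhang by recognising the dADMM as a special instance of the semi-proximal ADMM framework with vanishing proximal operators $\mathcal{S}=\mathcal{T}=0$. First I would match the structural data: view $(D)$ as a two-block linearly constrained problem in which $u$ is the first block with smooth strongly convex term $\tfrac{1}{2}\|u\|_2^2+\langle u,b\rangle$, $(v,w)$ is the second block with separable nonsmooth term $p^*(v)+q^*(w)$, and the coupling constraint is $\Gamma^*(u,v,w)=0$ with $\Gamma^*$ the adjoint displayed in the excerpt. Inspecting the algorithmic box, neither block carries an extra proximal regulariser, so the semi-proximal terms in the Han-Sun-Zhang framework are identically zero, and the step-size range $\tau\in(0,(1+\sqrt{5})/2)$ is exactly the admissible range in that corollary.

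Second, I would check that the displayed mapping $R$ is the KKT residual of $(D)$. The dual stationarity condition in $u$ gives $u+b-Ax-By=0$; the two constraints of $(D)$ produce $-A^*u-v=0$ and $-B^*u-w=0$; and the subgradient inclusions $x\in\partial p^*(v)$, $y\in\partial q^*(w)$ can be rewritten in fixed-point form as $v=\mathrm{Prox}_{p^*}(v+x)$ and $w=\mathrm{Prox}_{q^*}(w+y)$. Collecting these entries reproduces $R$. By Theorem II.1 the sequence $\{m^k\}$ is bounded and its limit points lie in $\bar{\Omega}$, so attention can be restricted to a neighbourhood of $\bar{\Omega}$ on which the error-bound argument below is valid.

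Third, the hypothesis that $R$ is piecewise polyhedral triggers Robinson's classical upper Lipschitz property for polyhedral multifunctions, yielding a constant $\hat{\eta}>0$ with $\mathrm{dist}(m,\bar{\Omega})\le\hat{\eta}\|R(m)\|_2$ for $m$ near $\bar{\Omega}$. This is the first displayed inequality. To obtain the Q-linear rate I would combine this error bound with the Fejér-type descent inequality
\[
\mathrm{dist}_{\mathcal{P}}(m^{k+1},\bar{\Omega})^2 \;\le\; \mathrm{dist}_{\mathcal{P}}(m^{k},\bar{\Omega})^2 - c\,\|R(m^{k})\|_2^2,
\]
which is the content of the core estimate in Fazel-Pong-Sun-Tseng specialised to $\mathcal{S}=\mathcal{T}=0$; here the constant $c>0$ depends on $\tau,\sigma$ and on the factor $s_\tau$ appearing in the definition of $\mathcal{P}$, and the bound $1/4\le s_\tau\le 5/4$ ensures $\mathcal{P}$ is uniformly positive definite. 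Substituting the error bound into this recursion gives $\mathrm{dist}_{\mathcal{P}}(m^{k+1},\bar{\Omega})^2 \le (1-c/(\hat{\eta}^2\|\mathcal{P}\|))\,\mathrm{dist}_{\mathcal{P}}(m^{k},\bar{\Omega})^2$, i.e.\ the second inequality with some $\hat{\mu}\in(0,1)$.

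The main obstacle, and the only step that is not purely bookkeeping, is verifying that all of the structural hypotheses of Han-Sun-Zhang (in particular the positive definiteness of $\mathcal{P}$ on the relevant subspace and the compatibility of the KKT residual $R$ with the ADMM increments $A^*u^{k+1}+v^{k+1}$ etc.) indeed reduce to those already checked for convergence in Theorem II.1. Once this correspondence is laid out, the statement follows by a direct specialisation of Corollary 1 of Han-Sun-Zhang to our notation, and the details can be safely omitted as in the proof of Theorem II.1.
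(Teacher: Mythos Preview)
Your proposal is correct and follows exactly the same route as the paper: the paper's proof consists of a single sentence invoking Corollary~1 in \cite{Han_Sun_Zhang} and omitting all details, so your reduction to that corollary via the two-block identification, the verification that $R$ is the KKT residual, and the Robinson error-bound-plus-Fej\'er-descent sketch is precisely the argument the paper defers to that reference, only spelled out more explicitly than the paper itself does.
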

\begin{proof}
	By using the Corollary 1 in \cite{Han_Sun_Zhang}, the desired conclusions of this theorem can be obtained readily. So we omit the details here.
\end{proof}

\section{Numerical Experiments}
In this section, we perform the numerical experiments for the dADMM to solve the problem $(\mbox{D})$ with different choices of $A$. More specifically, we test four cases in this section:
\begin{itemize}
	\item[(1)] $A=I$, where $I$ is a identity matrix;
	\item[(2)] $A=S$, where $S$ is a blurring matrix;
	\item[(3)] $A=K$, where $K$ is a binary matrix which indicates the missing pixels by zero entries;
	\item[(4)] $A=KS$, where $KS$ is the composition of a binary matrix and a blurring matrix.
	%\item[(5)] $A=KS$,\,\, this is tested on the noised image.
\end{itemize}
All of the experiments were implemented in {\sc Matlab} R2018a x64 on a PC with an Intel i5-8600K 3.6 GHz processor and 8GB memory.
In order to measure the quality of the image decomposition, it is assumed that the cartoon part and the texture part of an image are uncorrelated according to the paper \cite{Aujol_Gilboa_Chan_Osher}. The correlation between the cartoon part $u$ and the texture part $v$ is computed by
\[
\mbox{Corr}(u,v) :=\frac{\mbox{cov}(u,v)} {\sqrt{\mbox{var}(u)\mbox{var}(v)}},
\]
where $\mbox{var}(\cdot)$ and $\mbox{cov}(\cdot,\cdot)$ refer to the sample variance and covariance of given data, respectively. We use the PSNR value which is defined by
\[
\begin{aligned}
\mbox{PSNR}&=10\log_{10}\left(\frac{\mbox{I}^{2}_{\max}}{\mbox{MSE}}\right), \\
%&=20\log_{10}(I_{max})-10log_{10}(MSE)
\end{aligned}
\]
to measure the performance of image restoration,
where $\mbox{I}_{\max}$ is the maximum intensity of the original image, and MSE is defined by
\[
\mbox{MSE}=\frac{1}{mn}\sum_{i=1}^{m}\sum_{j=1}^{n}[\mbox{I}(i,j)-\mbox{N}(i,j)]^{2},
\]
with $\mbox{I}$ being a noise-free $m\times n$ monochrome image and $\mbox{N}$ being the noisy approximation image of \mbox{I}.

\begin{figure*}[htbp]
	\centering   %ͼƬ¾ÓÖÐÅÅÁÐ
	\subfloat[\emph{Lena}]{%Ç°ÃæÖÐÀ¨ºÅÀïÃæµÄaÊÇ×Óͼ±êÌâ
		\label{fig:subfig_a}%×ÓͼµÄÒýÓñêºÅ
		\begin{minipage}[t]{0.18\textwidth}%ÿ¸öͼÐδóС
			\centering
			\includegraphics[angle=0,width=1\textwidth]{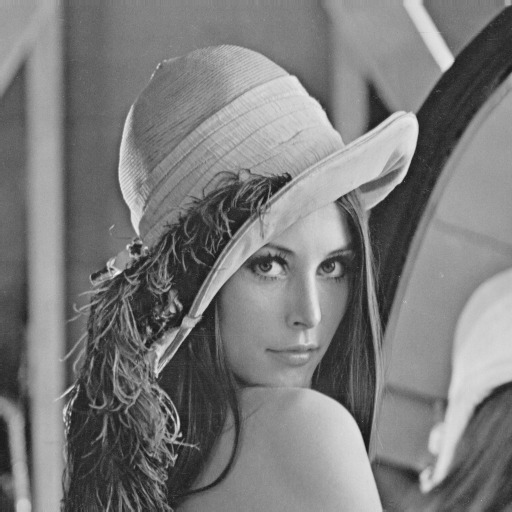}%²åÈëͼƬ,À¨ºÅÀïÃæÊÇͼƬ·¾¶
		\end{minipage}
	}
	\subfloat[\emph{Wool}]{
		\label{fig:subfig_b}
		\begin{minipage}[t]{0.18\textwidth}
			\centering
			\includegraphics[angle=0,width=1\textwidth]{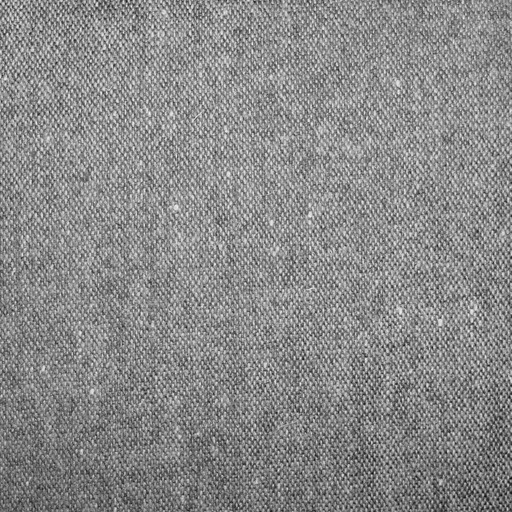}
		\end{minipage}
	}
	\subfloat[\emph{Mixed}]{
		\label{fig:subfig_c}
		\begin{minipage}[t]{0.18\textwidth}
			\centering
			\includegraphics[angle=0,width=1\textwidth]{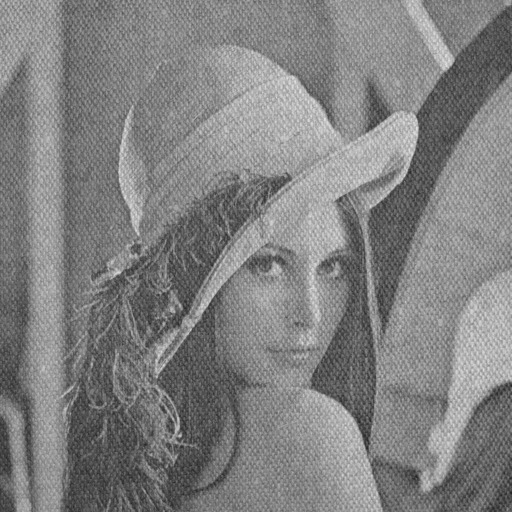}
		\end{minipage}
	}
	\subfloat[\emph{Weave}]{
		\label{fig:subfig_d}
		\begin{minipage}[t]{0.18\textwidth}
			\centering
			\includegraphics[angle=0,width=1\textwidth]{weave_new.jpg}
		\end{minipage}
	}
	\vfill % row two
	\subfloat[\emph{Barbara\_RGB}]{%Ç°ÃæÖÐÀ¨ºÅÀïÃæµÄaÊÇ×Óͼ±êÌâ
		\label{fig:subfig_e}%×ÓͼµÄÒýÓñêºÅ
		\begin{minipage}[t]{0.18\textwidth}%ÿ¸öͼÐδóС
			\centering
			\includegraphics[angle=0,width=1\textwidth]{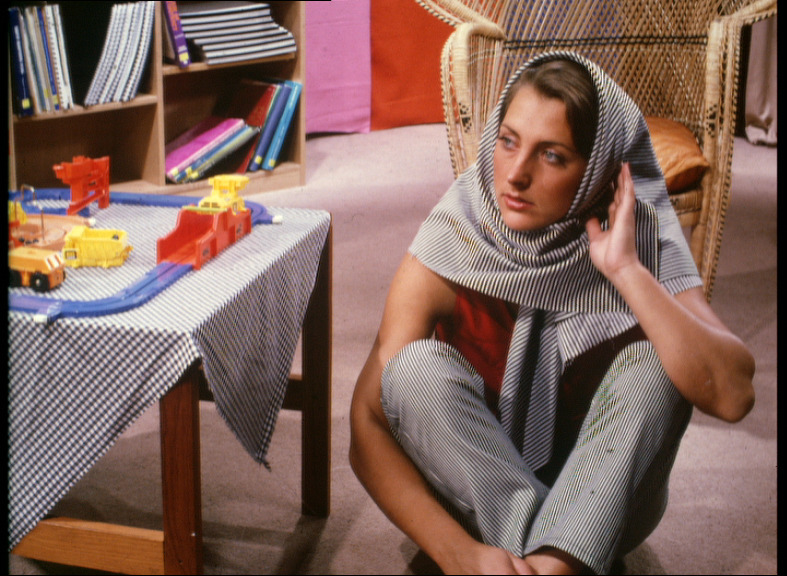}%²åÈëͼƬ,À¨ºÅÀïÃæÊÇͼƬ·¾¶
		\end{minipage}
	}
	\subfloat[\emph{Barbara}]{
		\label{fig:subfig_f}
		\begin{minipage}[t]{0.18\textwidth}
			\centering
			\includegraphics[angle=0,width=1\textwidth]{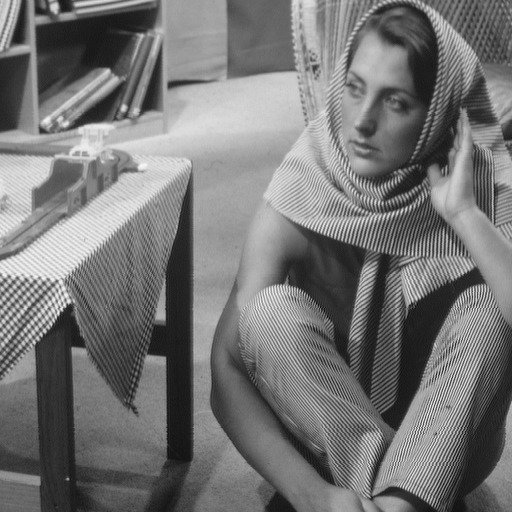}
		\end{minipage}
	}
	\subfloat[\emph{Brick}]{
		\label{fig:subfig_g}
		\begin{minipage}[t]{0.18\textwidth}
			\centering
			\includegraphics[angle=0,width=1\textwidth]{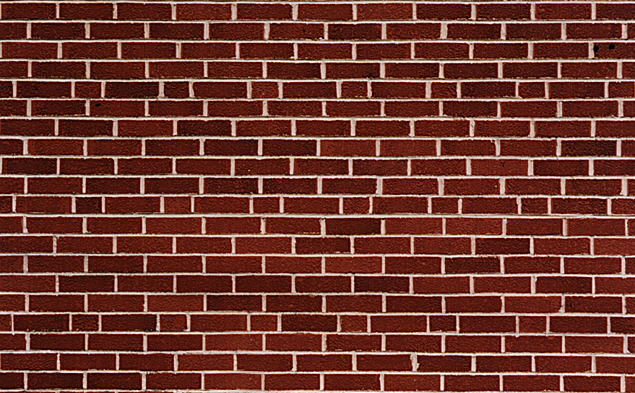}
		\end{minipage}
	}
	\subfloat[\emph{Wood}]{
		\label{fig:subfig_h}
		\begin{minipage}[t]{0.18\textwidth}
			\centering
			\includegraphics[angle=0,width=1\textwidth]{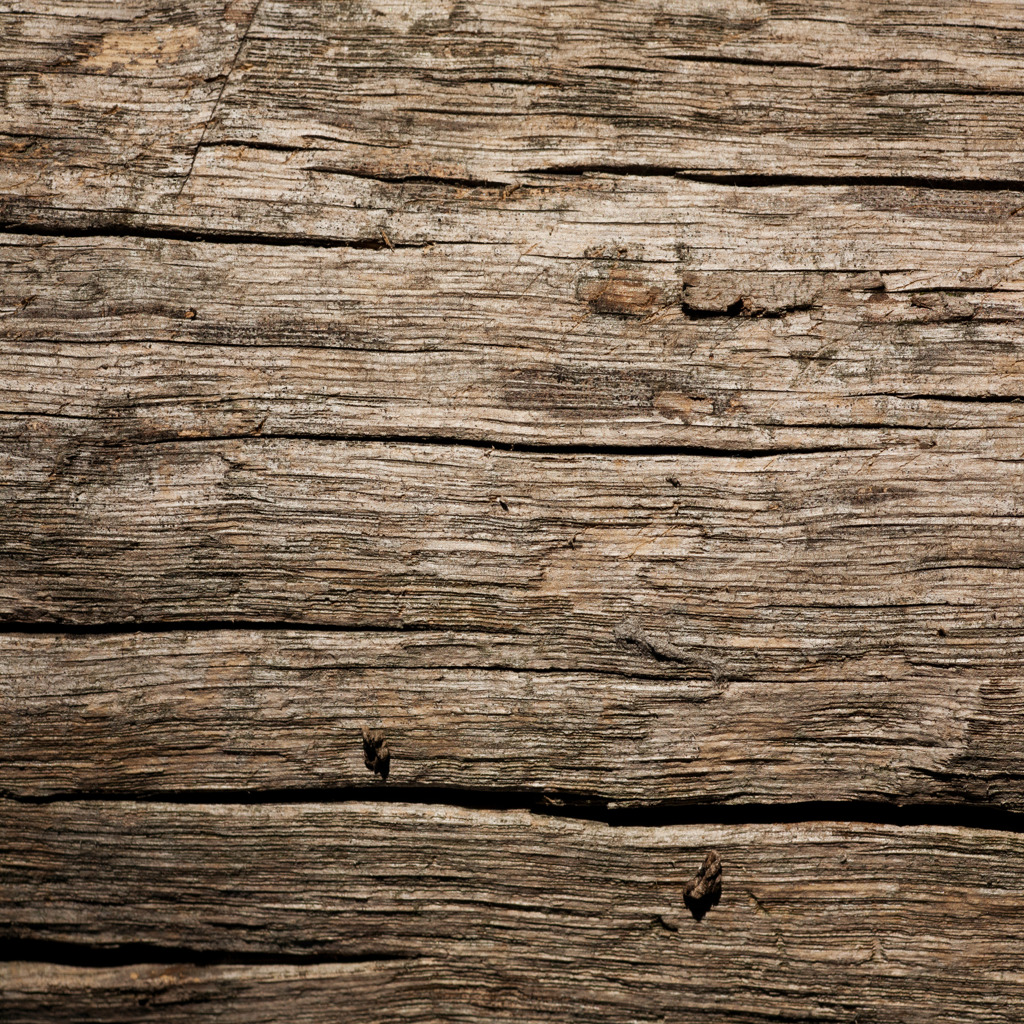}
		\end{minipage}
	}
	\caption{Testing images: (\emph{a}) 512 $\times$ 512 \emph{Lena} image, (\emph{b}) 512 $\times$ 512 \emph{Wool} image, (\emph{c}) combined 512 $\times$ 512 \emph{Lena} and \emph{Wool} image (denoted \emph{Mixed} image),  (\emph{d}) 768 $\times$ 1024 $\times$ 3 \emph{Weave} image, (\emph{e}) 576 $\times$ 787 $\times$ 3 \emph{Barbara\_RGB} image, (\emph{f}) 512 $\times$ 512 a part of \emph{Barbara} image, (\emph{g}) 393 $\times$ 635 $\times$ 3 \emph{Brick} image, (\emph{h})  1024 $\times$ 1024 $\times$ 3 \emph{Wood} image.}
	\label{test_jpg}
\end{figure*}
The images used in this section are displayed in Fig.~\ref{test_jpg}. Note that the \emph{Mixed} image (\emph{c}) in Fig.~\ref{test_jpg} is derived by combining (\emph{a}) and (\emph{b}) with the ratio of 6:4.

We denote
\[
\begin{aligned}
R_{P} &= \frac{\|u+b-Ax-By\|_{2}}{1+\|A\|_{2}}, \\
R_{D} &= \frac{\|A^{T}u + v\|_{2} + \|B^{T}u + w\|_{2}}{1+\|A\|_{2}},\\
R_{C} &= \frac{\|v - \mbox{Prox}_{p^{*}}(v + x)\|_{2} + \|w - \mbox{Prox}_{q^{*}}(w + y)\|_{2}}{1+\|A\|_{2}}.
\end{aligned}
\]
%%%%%%%%%%%%%%%%%%%%%%%%%%%%%%%%%%%%%%%%%????
%\iffalse
%The equivalent optimization problem of $(6)$ from \cite{Ng_Yuan_Zhang} is given by
%\[
%\begin{aligned}
%\min \quad &\tau \||x|\|_{1} + \frac{1}{2}\|Hy-b\|_{2}^{2} + \mu \||z|\|_{s}\\
%\mbox{s.t.}\quad & x=\nabla u\\
%&y=u+\mbox{div}g\\
%&z=g.
%\end{aligned}
%\]
%Then the KKT condition for the ADME and ADMGB is given here
%\[
%\begin{aligned}
%R_{P} &= \frac{\|H^{*}(Hy-b)-\lambda_{2}\|_{2}}{1+\|H\|_{2}}, \\
%R_{D} &= \frac{\|\nabla u -x\|_{2} + \|u+\mbox{div} g-y\|_{2} + \|z-g\|_{2}}{1+\|H\|_{2}},\\
%R_{C} &= \frac{\|x - \mbox{Prox}_{f}(x - \lambda_{1})\|_{2} + \|z - \mbox{Prox}_{h}(z + \lambda_{3})\|_{2}}{1+\|H\|_{2}},
%\end{aligned}\tag{KKT-2}
%\]
%where $f(x):=\tau \||x|\|_{1}$ and $h(z):=\mu \||z|\|_{s}$.
%\fi
%%%%%%%%%%%%%%%%%%%%%%%%%%%%%%%
The numerical experiments are terminated if the stopping criterion
\[
\mbox{Tol} = \max\{R_{P},R_{D},R_{C} \}\le 10^{-3}
\]
is satisfied or the maximal iterations reaches 70. This stopping criterion is also used for the other two algorithms (ADME and ADMGB) in \cite{Ng_Yuan_Zhang}.  For all the experiments, the initial iteration point is set to be zero.

\begin{remark}
	Some elements of the texture part $v\in \textbf{R}^{n\times m}$ are less than zero in the experiments. We normalize the matrix $v$ in order to show the texture part of the image clearly. The step of normalization is defined by
	\[
	v_{ij}=\frac{v_{ij}-v^{\mbox{min}}}{v^{\mbox{max}}-v^{\mbox{min}}},\quad \forall\ i=1,\ldots,n,\ j=1,\ldots,m.
	\]
where $v^{\mbox{min}}=\min\limits_{1\leq i\leq n,1\leq j\leq m} \{v_{ij}\}$ and $v^{\mbox{max}}=\max\limits_{1\leq i\leq n,1\leq j\leq m} \{v_{ij}\}$.
\end{remark}
%%%%%%%%%%%%%%%%%%%%%%%%%%%%%%%%%%%%%%%%%%%%%%%%%%%%%%%%%%%%%%%%%%%%%%%
\begin{figure*}[htbp]
	\centering  %ͼƬ¾ÓÖÐÅÅÁÐ
	%\normalsize
	\subfloat{%Ç°ÃæÖÐÀ¨ºÅÀïÃæµÄaÊÇ×Óͼ±êÌâ
		\label{fig:subfig_a}%×ÓͼµÄÒýÓñêºÅ
		\begin{minipage}[t]{0.18\textwidth}%ÿ¸öͼÐδóС
			\centering
			\includegraphics[angle=0,width=1\textwidth]{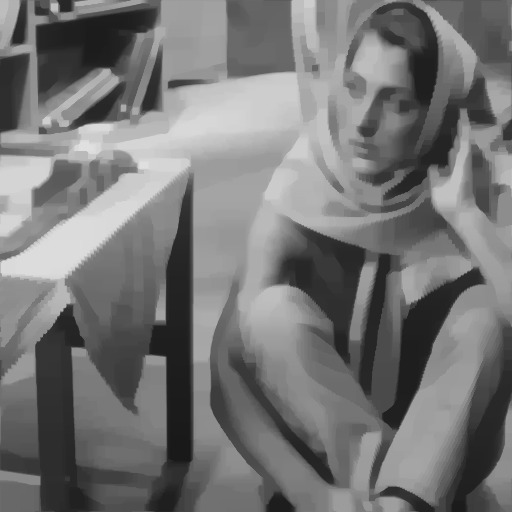}% ²åÈëͼƬ,À¨ºÅÀïÃæÊÇͼƬ·¾¶
		\end{minipage}
	}
	\subfloat{
		\label{fig:subfig_b}
		\begin{minipage}[t]{0.18\textwidth}
			\centering
			\includegraphics[angle=0,width=1\textwidth]{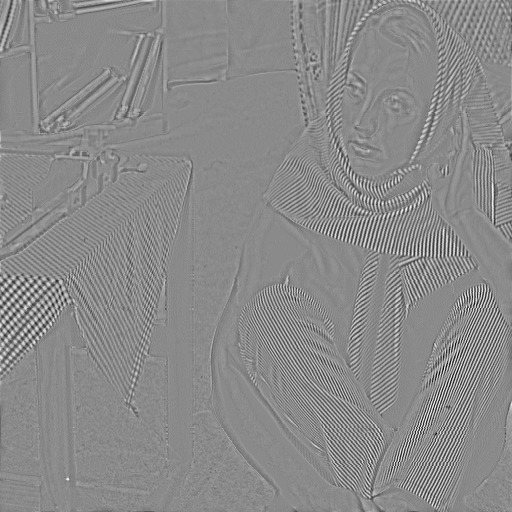}
		\end{minipage}
	}
	\subfloat{
		\label{fig:subfig_c}
		\begin{minipage}[t]{0.18\textwidth}
			\centering
			\includegraphics[angle=0,width=1\textwidth]{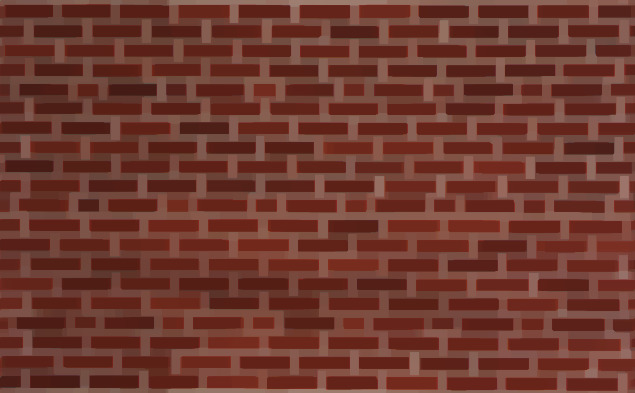}
		\end{minipage}
	}
	\subfloat{
		\label{fig:subfig_d}
		\begin{minipage}[t]{0.18\textwidth}
			\centering
			\includegraphics[angle=0,width=1\textwidth]{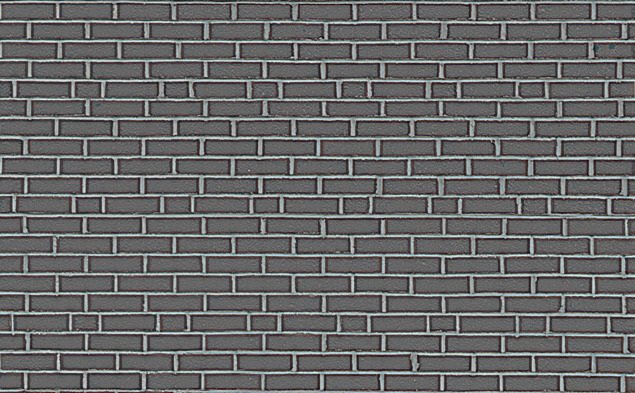}
		\end{minipage}
	}
	\vfill % row two
	\subfloat{%Ç°ÃæÖÐÀ¨ºÅÀïÃæµÄaÊÇ×Óͼ±êÌâ
		\label{fig:subfig_e}%×ÓͼµÄÒýÓñêºÅ
		\begin{minipage}[t]{0.18\textwidth}%ÿ¸öͼÐδóС
			\centering
			\includegraphics[angle=0,width=1\textwidth]{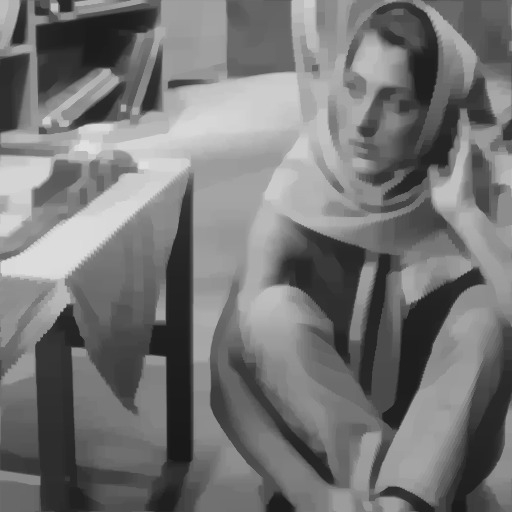}% ²åÈëͼƬ,À¨ºÅÀïÃæÊÇͼƬ·¾¶
		\end{minipage}
	}
	\subfloat{
		\label{fig:subfig_f}
		\begin{minipage}[t]{0.18\textwidth}
			\centering
			\includegraphics[angle=0,width=1\textwidth]{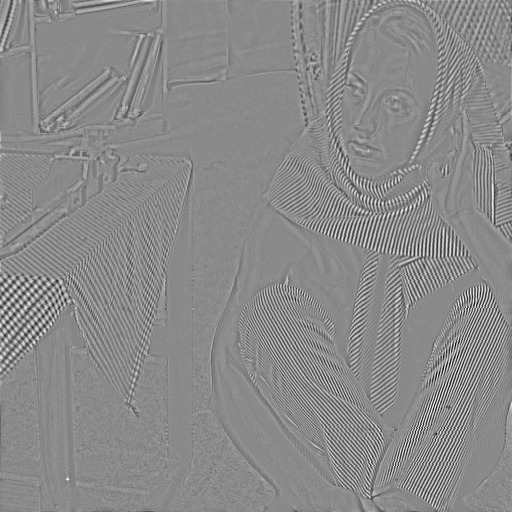}
		\end{minipage}
	}
	\subfloat{
		\label{fig:subfig_g}
		\begin{minipage}[t]{0.18\textwidth}
			\centering
			\includegraphics[angle=0,width=1\textwidth]{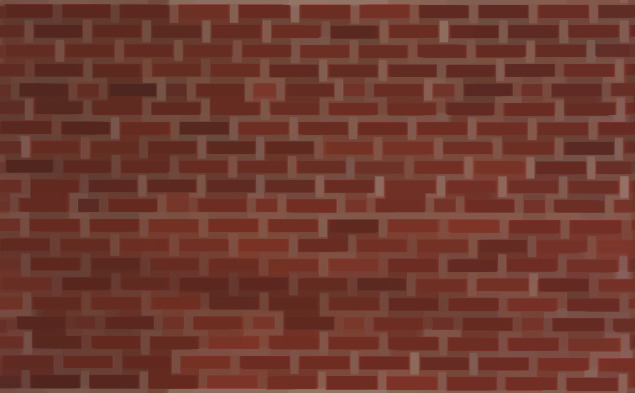}
		\end{minipage}
	}
	\subfloat{
		\label{fig:subfig_h}
		\begin{minipage}[t]{0.18\textwidth}
			\centering
			\includegraphics[angle=0,width=1\textwidth]{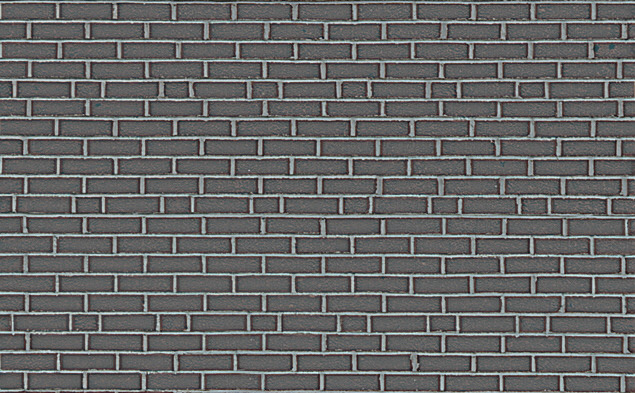}
		\end{minipage}
	}
	
	\vfill % row three
	
	\subfloat{%Ç°ÃæÖÐÀ¨ºÅÀïÃæµÄaÊÇ×Óͼ±êÌâ
		\label{fig:subfig_e}%×ÓͼµÄÒýÓñêºÅ
		\begin{minipage}[t]{0.18\textwidth}%ÿ¸öͼÐδóС
			\centering
			\includegraphics[angle=0,width=1\textwidth]{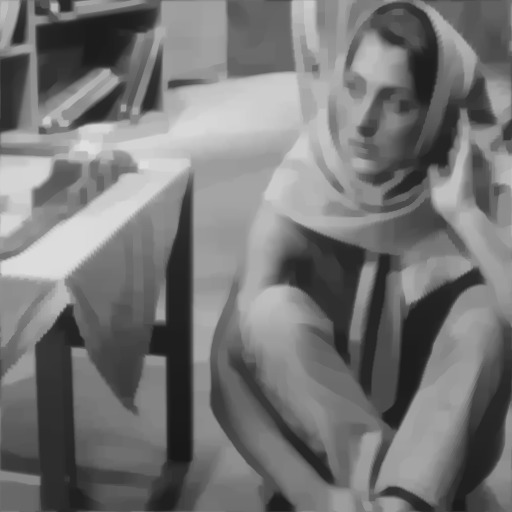}%²åÈëͼƬ,À¨ºÅÀïÃæÊÇͼƬ·¾¶
		\end{minipage}
	}
	\subfloat{
		\label{fig:subfig_f}
		\begin{minipage}[t]{0.18\textwidth}
			\centering
			\includegraphics[angle=0,width=1\textwidth]{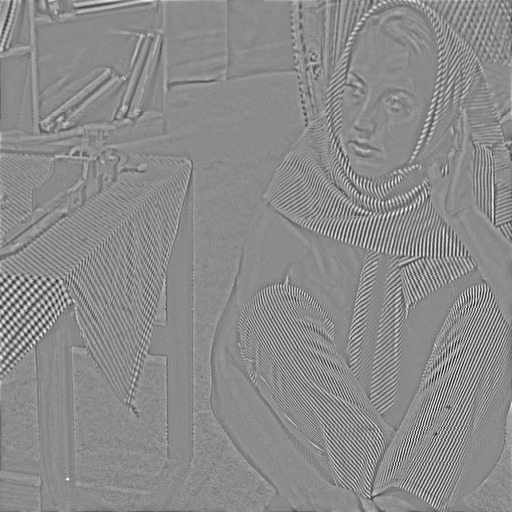}
		\end{minipage}
	}
	\subfloat{
		\label{fig:subfig_g}
		\begin{minipage}[t]{0.18\textwidth}
			\centering
			\includegraphics[angle=0,width=1\textwidth]{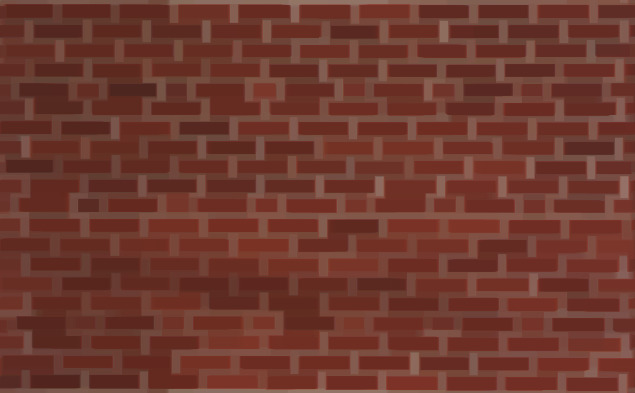}
		\end{minipage}
	}
	\subfloat{
		\label{fig:subfig_h}
		\begin{minipage}[t]{0.18\textwidth}
			\centering
			\includegraphics[angle=0,width=1\textwidth]{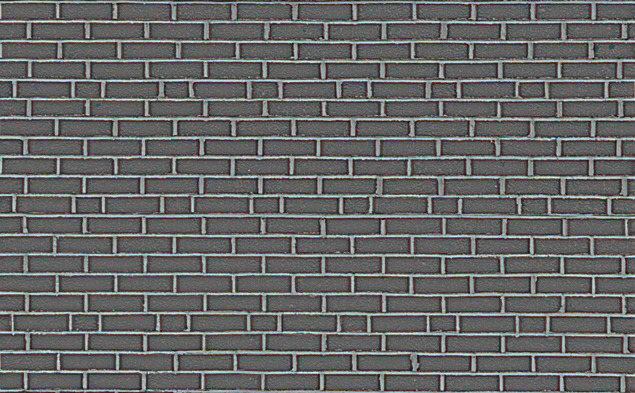}
		\end{minipage}
	}
	\caption{Image decomposition on clean images ((\emph{f}) and (\emph{g}) in Fig.~\ref{test_jpg}, respectively) with different values of $s$. From left to right: the cartoon part, the texture part of (\emph{f}) and (\emph{g}), respectively. The top row: $s=1$. The center row: $s=2$. The bottom row: $s=\infty$.}
	\label{p1_p2_pinf_jpg}
\end{figure*}

%%%%%%%%%%%%%%%%%%%%%%%%%%%%%%%%%%%%%%%%%%%%%%%%%%%%%%%%
\subsection{Case 1: $A=I$}
We consider the case of $A=I$ in this subsection, i.e., a image will be decomposed into a clean and an additive noise images. For the images (\emph{f}) and (\emph{g}) in Fig.~\ref{test_jpg}, we implement the ADME, ADMGB and dADMM in this case. For the ADME and ADMGB, we take the  parameters as $\tau = 1\times10^{-1}$, $\mu =1\times 10^{-3}$ and $\beta_{1} = \beta_{2} =\beta_{3} = 10$. For the dADMM, we set the parameters as $\tau =1 \times 10^{-1}$, $\mu = 3\times 10^{-2}$ and $\sigma= 0.8$.

\begin{figure}[htbp]
	%\centering  %ͼƬ¾ÓÖÐÅÅÁÐ
	\normalsize
	\subfloat{%Ç°ÃæÖÐÀ¨ºÅÀïÃæµÄaÊÇ×Óͼ±êÌâ
		\label{fig:subfig_a}%×ÓͼµÄÒýÓñêºÅ
		\begin{minipage}[t]{0.45\textwidth}%ÿ¸öͼÐδóС
			\centering
			\includegraphics[angle=0,width=1\textwidth]{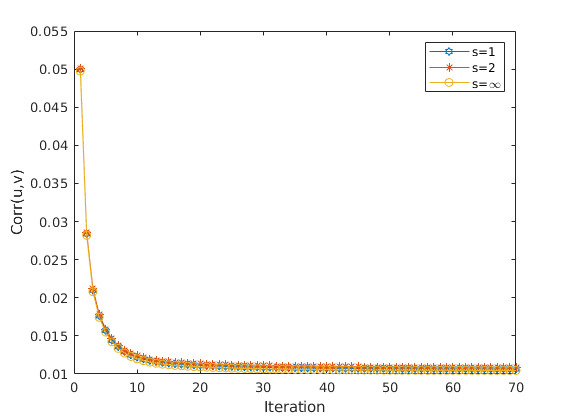}%²åÈëͼƬ,À¨ºÅÀïÃæÊÇͼƬ·¾¶
		\end{minipage}
	}
	\caption{Variations of Corr$(u,v)$ with respect to the iterations for the image (\emph{f}) \emph{Barbara} in Fig.~\ref{test_jpg} with $s=1$, $s=2$, and $s=\infty$.}
	\label{corr_barbara_KI_jpg}
\end{figure}

In Fig.~\ref{p1_p2_pinf_jpg}, the decomposed results (the cartoon part and the texture part) of the images (\emph{f}) and (\emph{g}) with different values of $s$ are displayed. We can hardly see the difference with different values of $s$ from Fig.~\ref{p1_p2_pinf_jpg} directly. In order to measure the quality of the image decomposition, the correlation between the cartoon and the texture is computed. The corresponding results are displayed in Fig.~\ref{corr_barbara_KI_jpg}. The results in Fig.~\ref{corr_barbara_KI_jpg} show a tiny difference of different $s$, which coincides with the conclusion from \cite{Vese_Osher}. These results show that the effectiveness of the optimization problem $(5)$ is not sensitive to different values of $s$. So in the remaining experiments, one of $s=1$, $s=2$ and $s=\infty$ is used to  the numerical simulations for convenience.

Next, we compare the dADMM with the other two algorithms (the ADME and ADMGB in \cite{Ng_Yuan_Zhang}) on image decomposition. For this comparison, we focus on the images (\emph{c}) and (\emph{g}) in Fig.~\ref{test_jpg}. The decomposed results about the image (\emph{c}) in Fig.~\ref{test_jpg} are showed in Fig.~\ref{ADME_ADMGB_ADMMD_jpg} and the variations of the correlation values of the images (\emph{c}) and (\emph{g}) are displayed in Fig.~\ref{corr_clean_KI_jpg}.
\begin{figure}[htbp]
	\centering
	\normalsize
	\subfloat{
		\label{fig:subfig_a}
		\begin{minipage}[t]{0.2\textwidth}
			\centering
			\includegraphics[angle=0,width=1\textwidth]{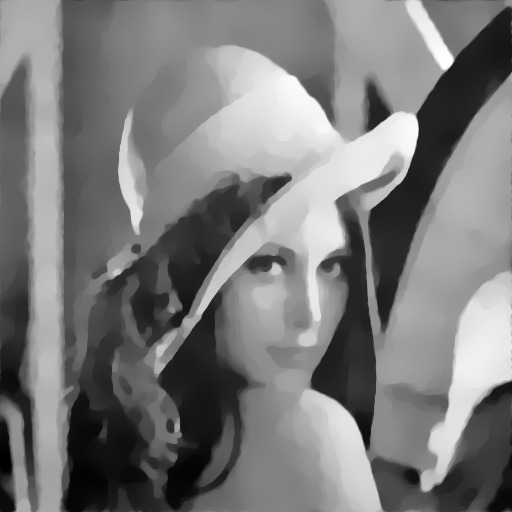}
		\end{minipage}
	}
	\subfloat{
		\label{fig:subfig_b}
		\begin{minipage}[t]{0.2\textwidth}
			\centering
			\includegraphics[angle=0,width=1\textwidth]{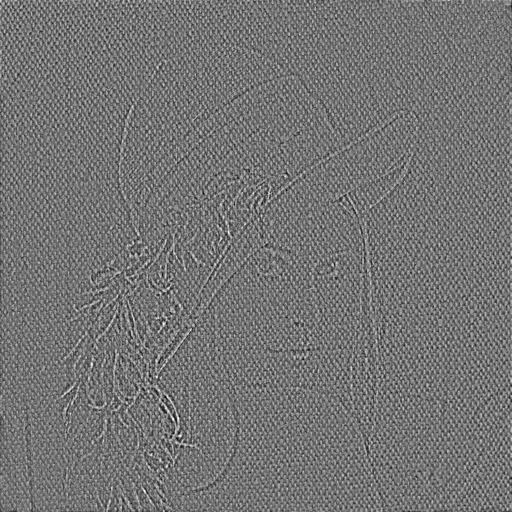}
		\end{minipage}
	}
	\vfill
	\subfloat{
		\label{fig:subfig_c}
		\begin{minipage}[t]{0.2\textwidth}
			\centering
			\includegraphics[angle=0,width=1\textwidth]{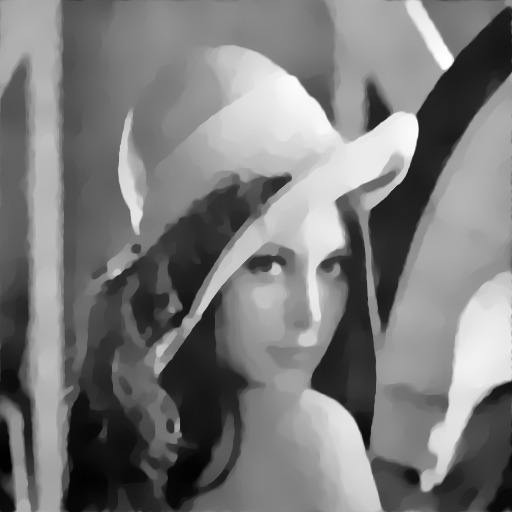}
		\end{minipage}
	}
	\subfloat{
		\label{fig:subfig_d}
		\begin{minipage}[t]{0.2\textwidth}
			\centering
			\includegraphics[angle=0,width=1\textwidth]{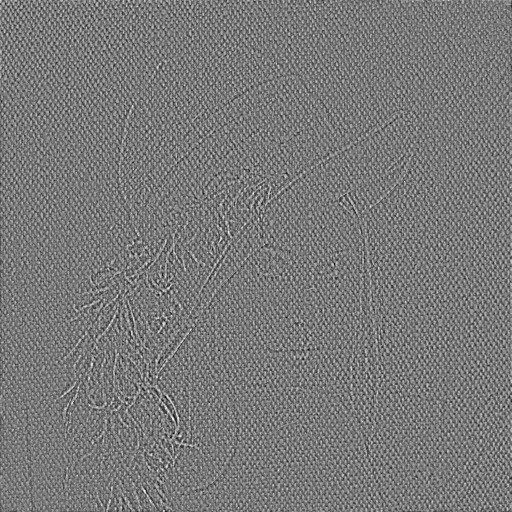}
		\end{minipage}
	}
	\vfill
	\subfloat{
		\label{fig:subfig_e}
		\begin{minipage}[t]{0.2\textwidth}
			\centering
			\includegraphics[angle=0,width=1\textwidth]{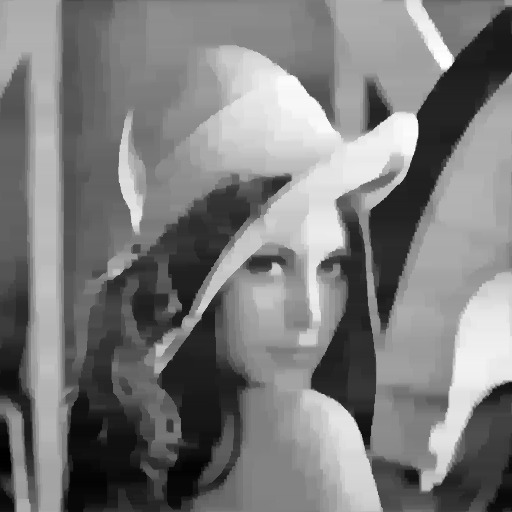}
		\end{minipage}
	}
	\subfloat{
		\label{fig:subfig_f}
		\begin{minipage}[t]{0.2\textwidth}
			\centering
			\includegraphics[angle=0,width=1\textwidth]{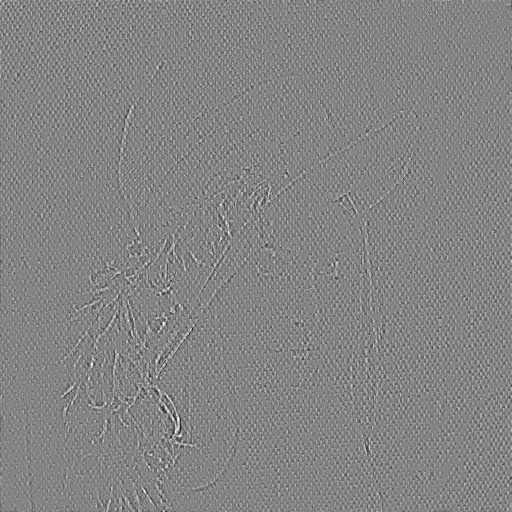}
		\end{minipage}
	}
	\caption{Image decomposition on the \emph{Mixed} image (\emph{c}) in Fig.~\ref{test_jpg} with $A=I$. The first column: the cartoon part. The second column: the texture part. From top to bottom are the decomposed results by the ADME, ADMGB, and dADMM.}
	\label{ADME_ADMGB_ADMMD_jpg}
\end{figure}
%%%%%%%%%%%%%%%%%%%%%%%%%%%%%%%
\begin{figure}[htbp]
	%\centering  %ͼƬ¾ÓÖÐÅÅÁÐ
	\normalsize
	\subfloat{%Ç°ÃæÖÐÀ¨ºÅÀïÃæµÄaÊÇ×Óͼ±êÌâ
		\label{fig:subfig_a}%×ÓͼµÄÒýÓñêºÅ
		\begin{minipage}[t]{0.45\textwidth}%ÿ¸öͼÐδóС
			\centering
			\includegraphics[angle=0,width=1\textwidth]{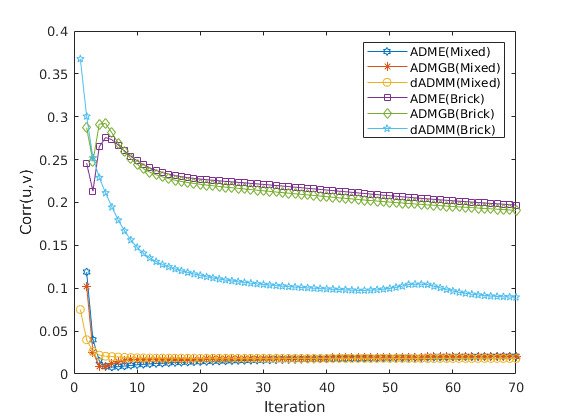}%²åÈëͼƬ,À¨ºÅÀïÃæÊÇͼƬ·¾¶
		\end{minipage}
	}
	\caption{Variations of Corr$(u,v)$ with respect to the iterations for  \emph{Mixed} image (\emph{c}) and  \emph{Brick} image (\emph{g}) in Fig.~\ref{test_jpg}.}
	\label{corr_clean_KI_jpg}
\end{figure}

%\iffalse
Now we add some additive noise to the image ($e$) in Fig.~\ref{test_jpg}, which is $b=(u+v)+\varepsilon$, where the additive noise $\varepsilon$ is Gaussian noise. The noised image $b$ is decomposed by the model (5) with three algorithms. In this experiment, the Gaussian white noise $\varepsilon$ is generated by {\sc Matlab} function \texttt{imnoise(Im,`gaussian',0,0.1)}. The results are displayed in Fig.~\ref{noised_ADME_ADMGB_ADMMD_jpg} and Fig.~\ref{corr_noised_KI_jpg}, respectively.
From Fig.~\ref{noised_ADME_ADMGB_ADMMD_jpg}, we note that we can obtain a better cartoon part and a more clear texture part by the dADMM than the other two algorithms (ADME and ADMGB). We plot the variations of $\textrm{Corr}(u,v)$ with respect to the iterations in Fig.~\ref{corr_noised_KI_jpg}. It shows that dADMM can reach a relatively lower correlation value with fewer iterations than the other two algorithms.
\begin{figure}[htbp]
	\centering
	%\normalsize
	\subfloat{
		\label{fig:subfig_a}
		\begin{minipage}[t]{0.15\textwidth}
			\centering
			\includegraphics[angle=0,width=1\textwidth]{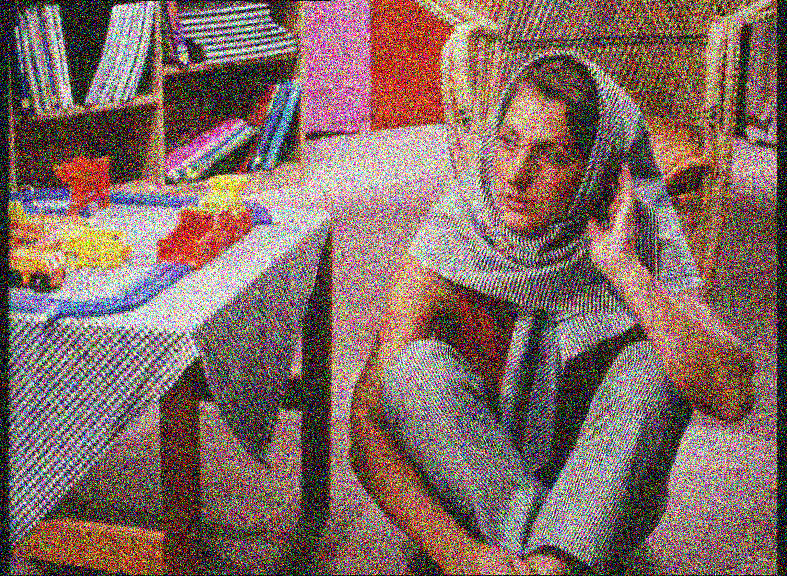}
		\end{minipage}
	}
\vfill
	\subfloat{
		\label{fig:subfig_b}
		\begin{minipage}[t]{0.15\textwidth}
			\centering
			\includegraphics[angle=0,width=1\textwidth]{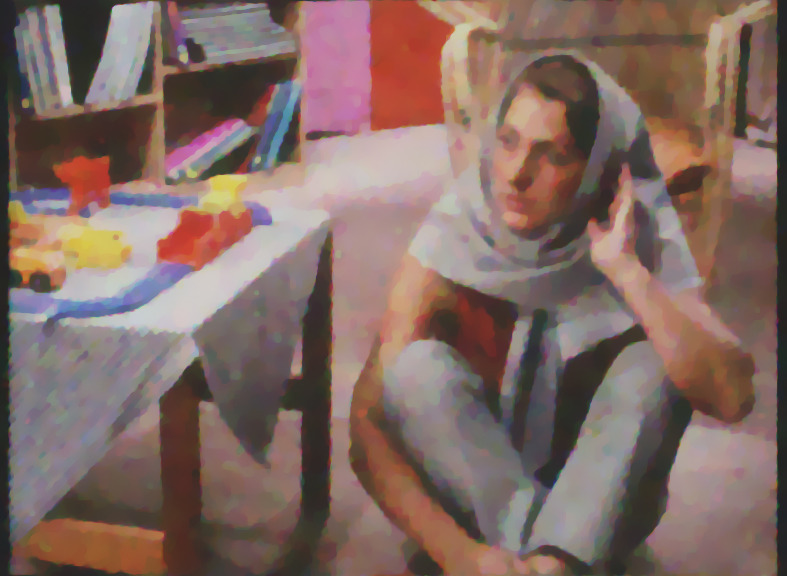}
		\end{minipage}
	}
	\subfloat{
	\label{fig:subfig_f}
	\begin{minipage}[t]{0.15\textwidth}
		\centering
		\includegraphics[angle=0,width=1\textwidth]{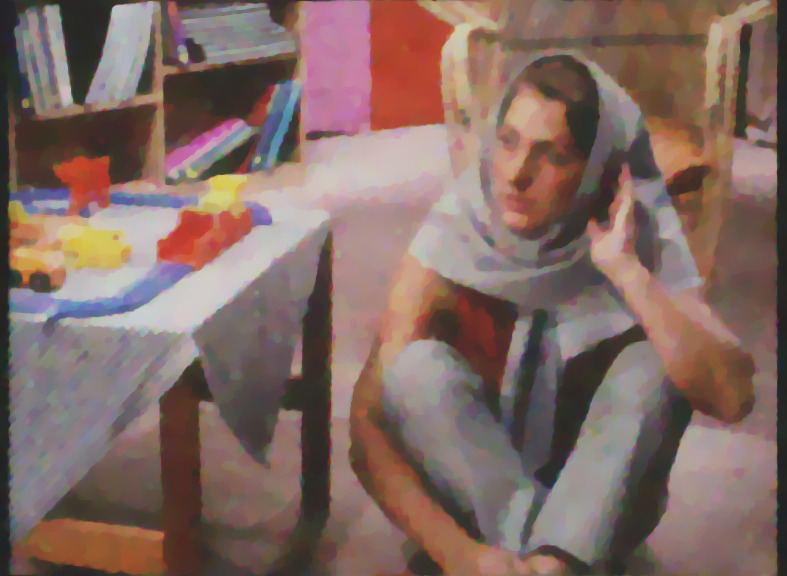}
	\end{minipage}
}
	\subfloat{
	\label{fig:subfig_f}
	\begin{minipage}[t]{0.15\textwidth}
		\centering
		\includegraphics[angle=0,width=1\textwidth]{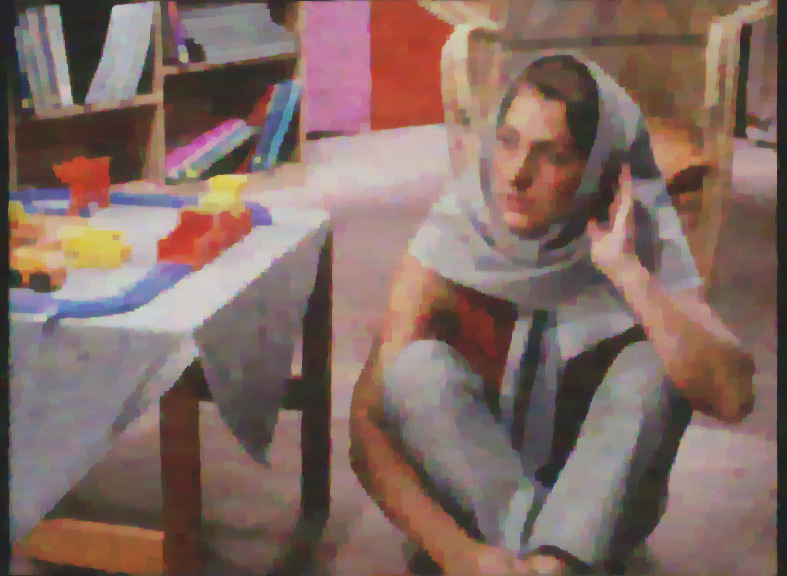}
	\end{minipage}
}
\vfill
	\subfloat{
		\label{fig:subfig_c}
		\begin{minipage}[t]{0.15\textwidth}
			\centering
			\includegraphics[angle=0,width=1\textwidth]{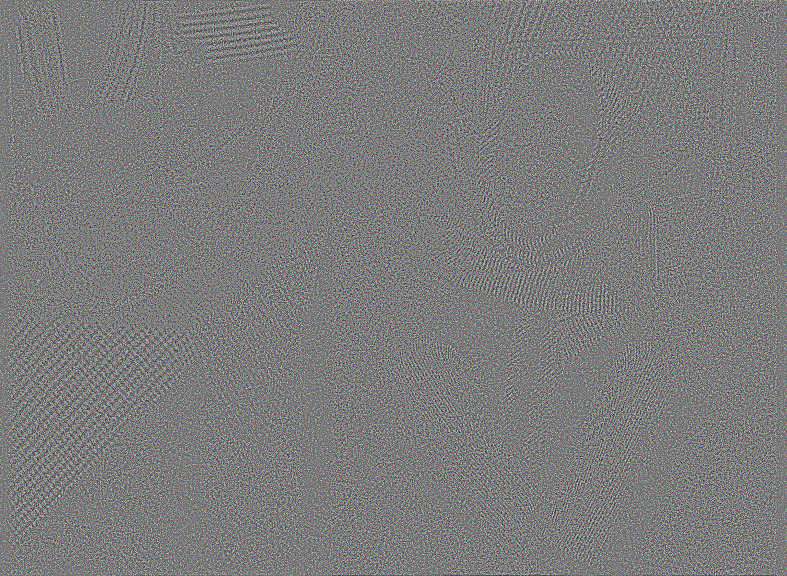}
		\end{minipage}
	}
	\subfloat{
		\label{fig:subfig_g}
		\begin{minipage}[t]{0.15\textwidth}
			\centering
			\includegraphics[angle=0,width=1\textwidth]{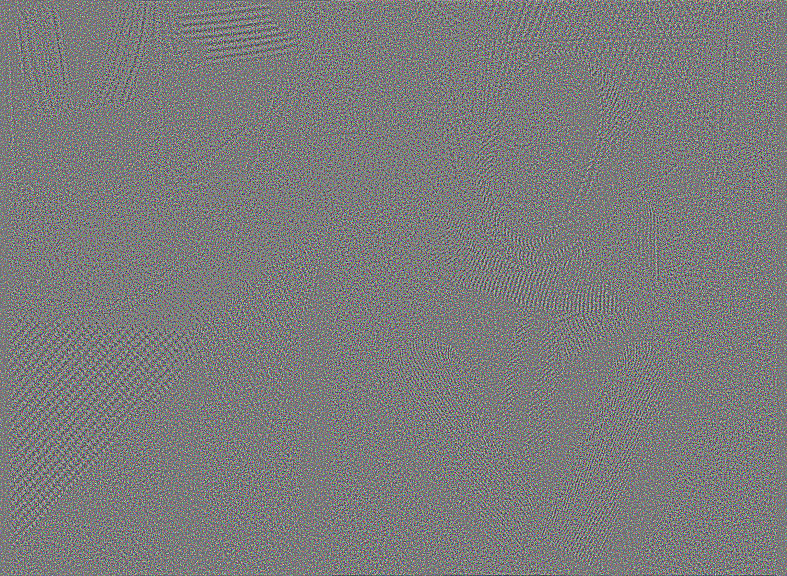}
		\end{minipage}
	}
	\subfloat{
		\label{fig:subfig_g}
		\begin{minipage}[t]{0.15\textwidth}
			\centering
			\includegraphics[angle=0,width=1\textwidth]{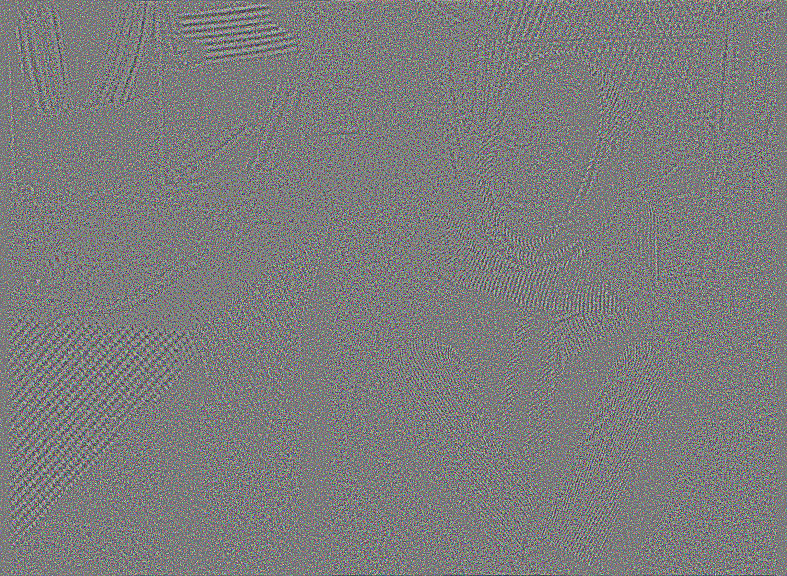}
		\end{minipage}
	}
	\caption{Image decomposition on the noised image (\emph{e}) in Fig.~\ref{test_jpg} with $A=I$. From top to bottom are the noised image, cartoon parts and texture parts. From left to right are the decompositions by the ADME, ADMGB and dADMM.}
	\label{noised_ADME_ADMGB_ADMMD_jpg}
\end{figure}

\begin{figure}[htbp]
	%\centering  %ͼƬ¾ÓÖÐÅÅÁÐ
	\normalsize
	\subfloat{%Ç°ÃæÖÐÀ¨ºÅÀïÃæµÄaÊÇ×Óͼ±êÌâ
		\label{fig:subfig_a}%×ÓͼµÄÒýÓñêºÅ
		\begin{minipage}[t]{0.45\textwidth}%ÿ¸öͼÐδóС
			\centering
			\includegraphics[angle=0,width=1\textwidth]{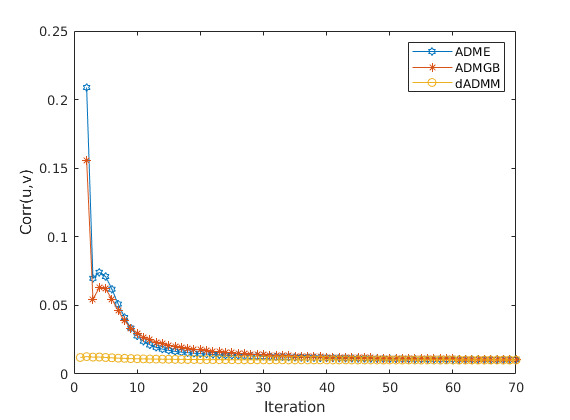}%²åÈëͼƬ,À¨ºÅÀïÃæÊÇͼƬ·¾¶
		\end{minipage}
	}
	\caption{Variations of Corr$(u,v)$ with respect to the iterations for the ADME, ADMGB and dADMM on the image (\emph{e}) in Fig.~\ref{test_jpg} with Gaussian noise.}
	\label{corr_noised_KI_jpg}
\end{figure}
%\fi

We can see that the dADMM has a better performance than the other two algorithms (i.e., ADME and ADMGB in \cite{Ng_Yuan_Zhang}) in the experiments in case of $A=I$.
%%%%%%%%%%%%%%%%%%%%%%%%%%%%%%%%%%%%%%%%%%%%%%%%%%%%%%%%%%%%%%%%%%%%%%%%%%%%%%%%%

\subsection{Case 2: $A=S$}
In this case, $S$ is a blurring matrix, i.e., $A$ in \eqref{eq:general} is a
convolution operator. We test both the Out-of-focus blur and the Gaussian blur in this subsection with the images (\emph{a}) and (\emph{d}) in Fig.~\ref{test_jpg} and compare the performances of the algorithms (the ADME, ADMGB and dADMM). To implement the ADME and ADMGB, we take the parameters as $\tau = 5\times10^{-5}$, $\mu =1\times 10^{-5}$ and $\beta_{1} = \beta_{2} =\beta_{3} = 1\times 10^{-2}$. To implement the dADMM, we set the parameters as $\tau =8 \times 10^{-6}$, $\mu = 4 \times 10^{-4}$ and $\sigma= 2\times10^{2}$.
\begin{figure}[htbp]
	\centering  %ͼƬ¾ÓÖÐÅÅÁÐ
	\subfloat{
		\label{fig:subfig_b}
		\begin{minipage}[t]{0.2\textwidth}
			\centering
			\includegraphics[angle=0,width=1\textwidth]{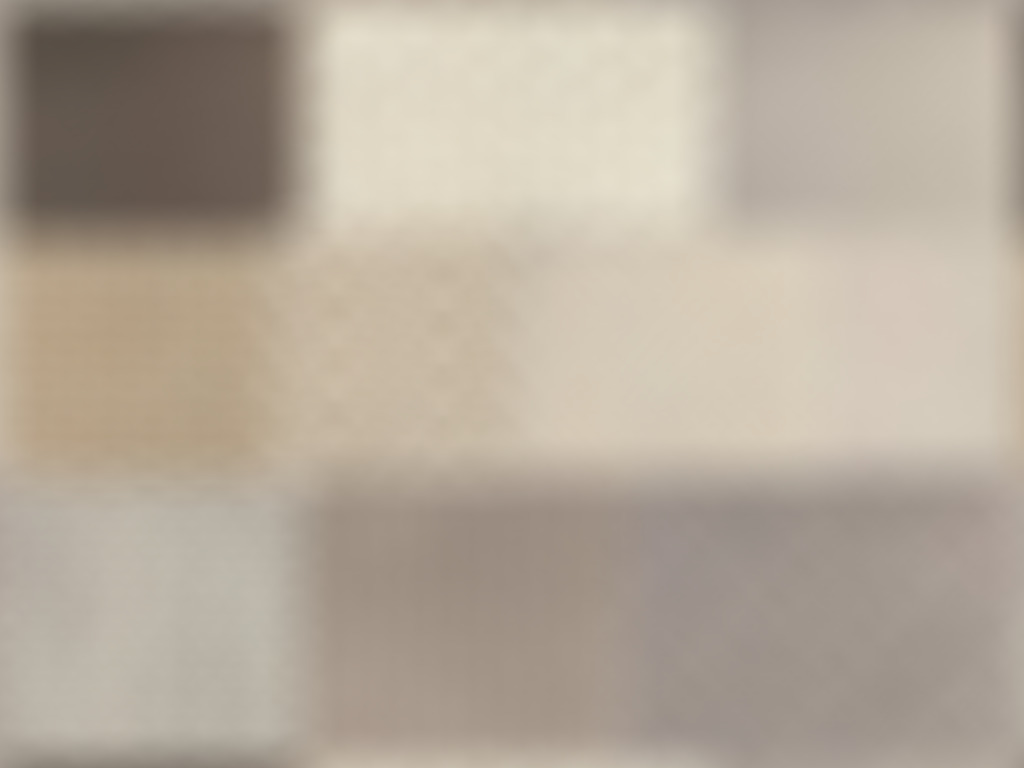}
		\end{minipage}
	}
\vfill % row two
	\subfloat{
		\label{fig:subfig_c}
		\begin{minipage}[t]{0.2\textwidth}
			\centering
			\includegraphics[angle=0,width=1\textwidth]{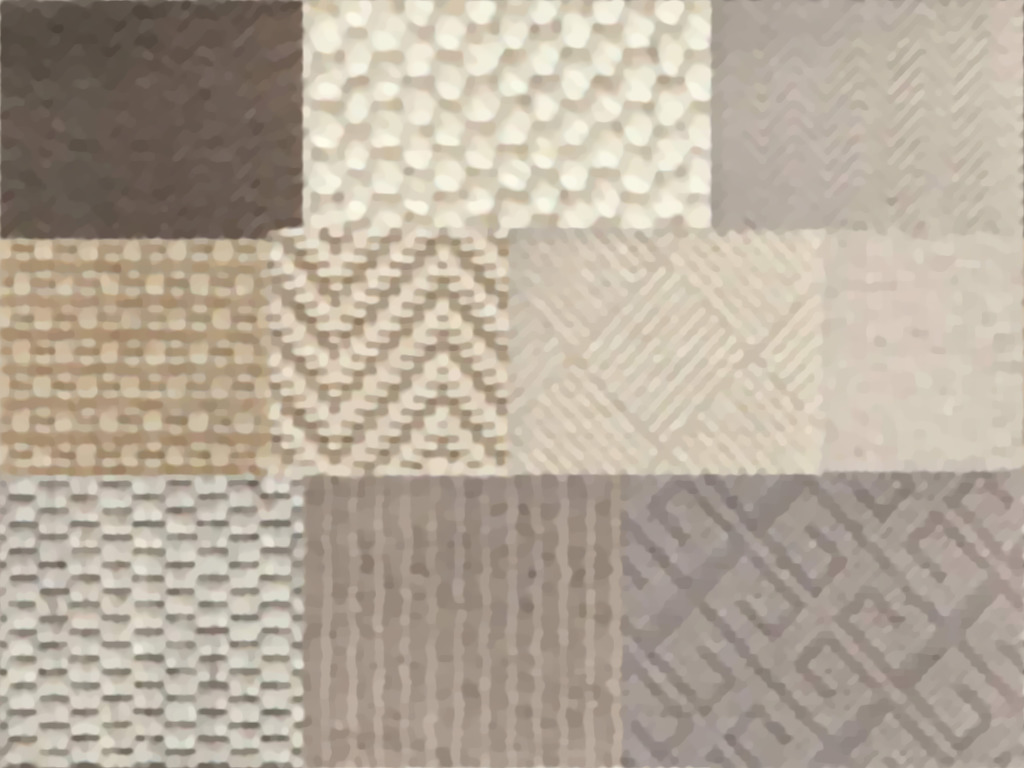}
		\end{minipage}
	}
	\subfloat{
		\label{fig:subfig_d}
		\begin{minipage}[t]{0.2\textwidth}
			\centering
			\includegraphics[angle=0,width=1\textwidth]{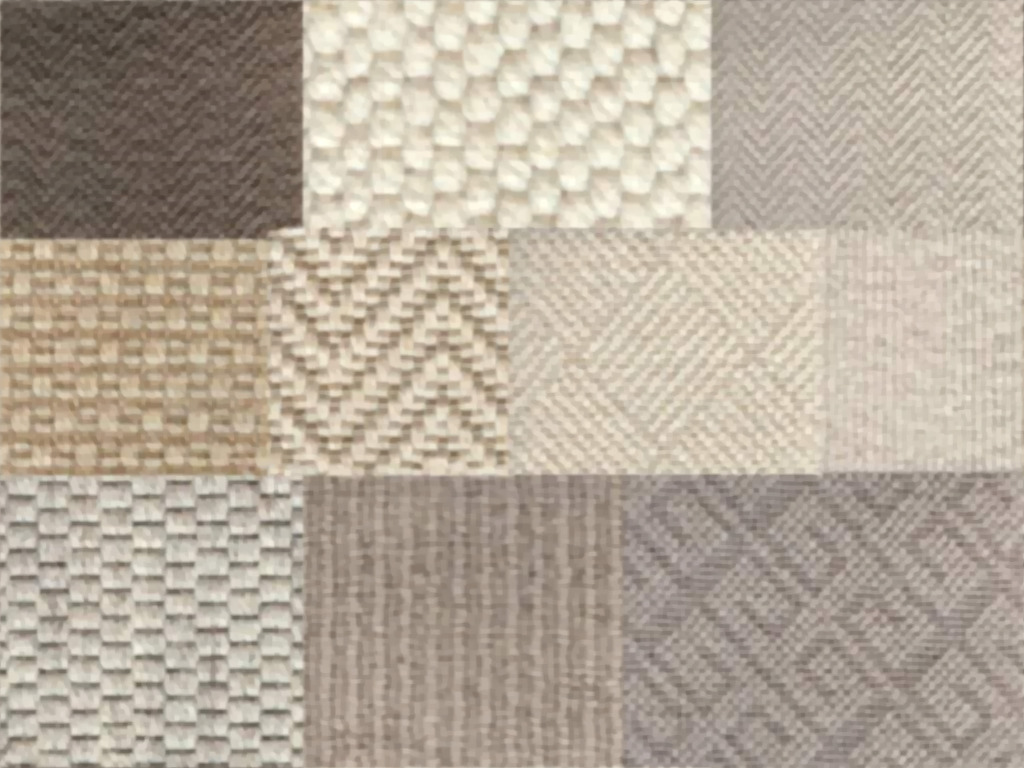}
		\end{minipage}
	}
	\vfill % row three
	\subfloat{%Ç°ÃæÖÐÀ¨ºÅÀïÃæµÄaÊÇ×Óͼ±êÌâ
		\label{fig:subfig_e}%×ÓͼµÄÒýÓñêºÅ
		\begin{minipage}[t]{0.2\textwidth}%ÿ¸öͼÐδóС
			\centering
			\includegraphics[angle=0,width=1\textwidth]{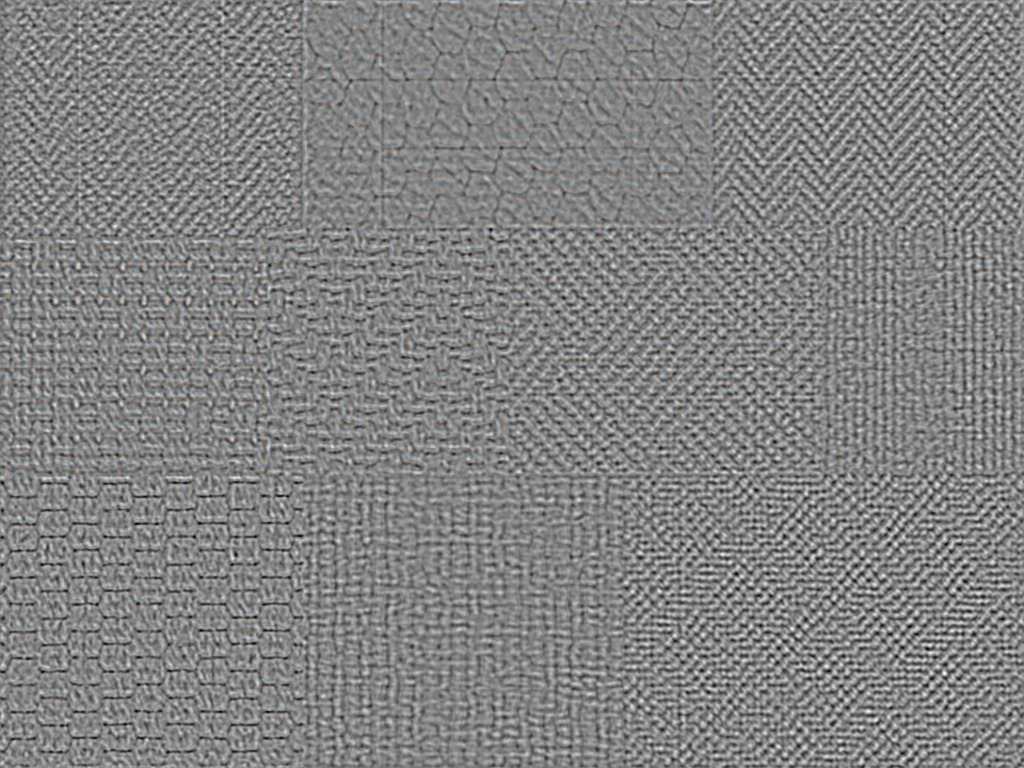}%²åÈëͼƬ,À¨ºÅÀïÃæÊÇͼƬ·¾¶
		\end{minipage}
	}
	\subfloat{
		\label{fig:subfig_f}
		\begin{minipage}[t]{0.2\textwidth}
			\centering
			\includegraphics[angle=0,width=1\textwidth]{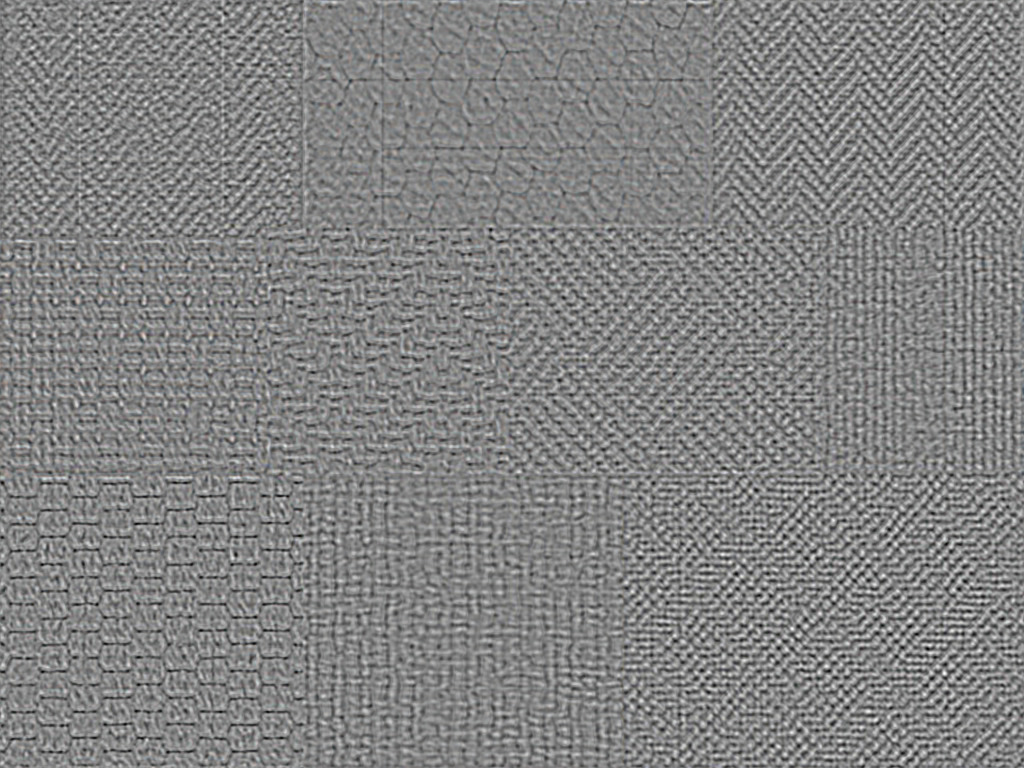}
		\end{minipage}
	}
\vfill % row four
	\subfloat{
		\label{fig:subfig_g}
		\begin{minipage}[t]{0.2\textwidth}
			\centering
			\includegraphics[angle=0,width=1\textwidth]{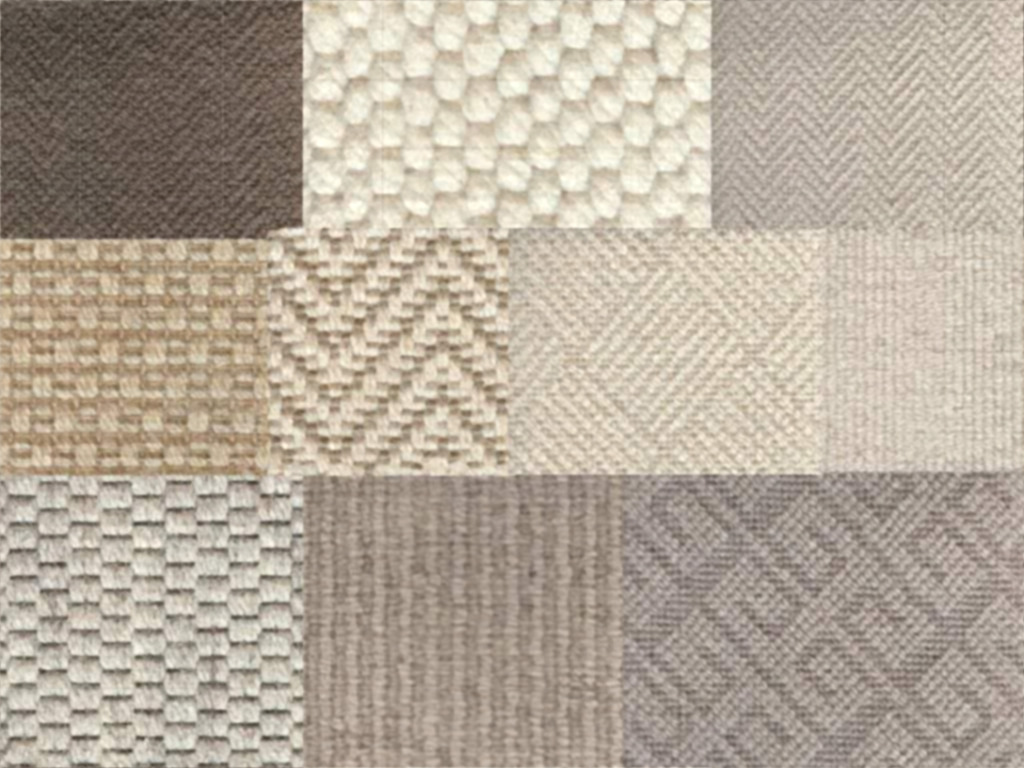}
		\end{minipage}
	}
	\subfloat{
		\label{fig:subfig_h}
		\begin{minipage}[t]{0.2\textwidth}
			\centering
			\includegraphics[angle=0,width=1\textwidth]{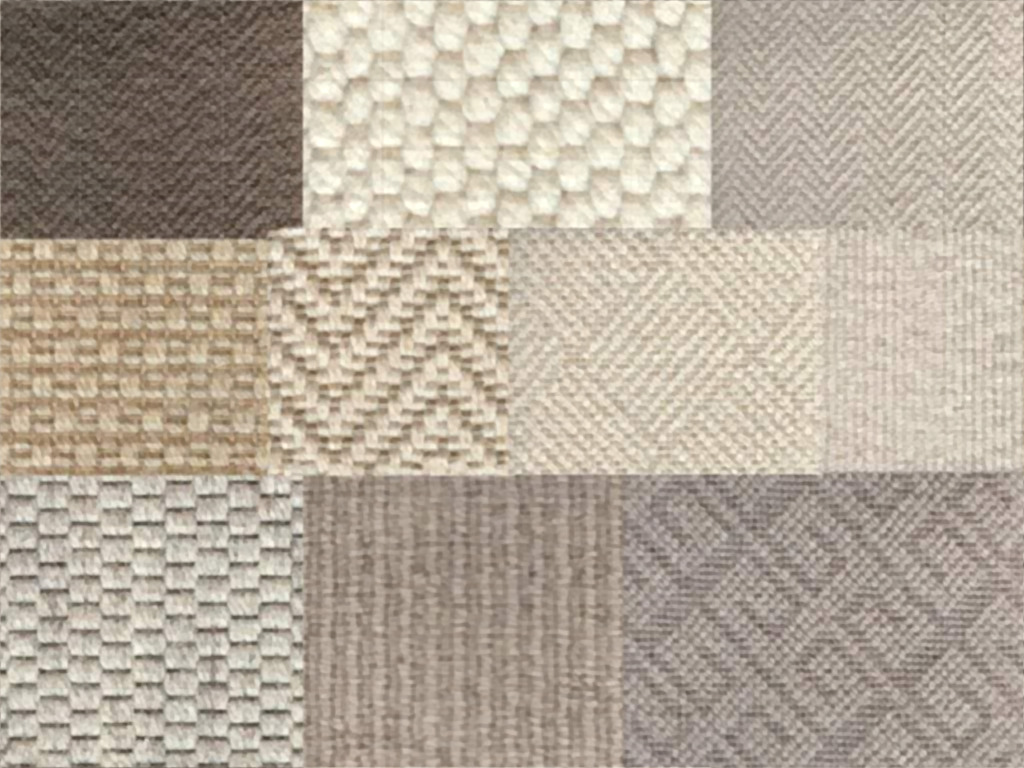}
		\end{minipage}
	}
	
	\caption{Image decomposition and restoration on \emph{blurred} image with ``Out-of-focus(40)'' about image (\emph{d}) in Fig.~\ref{test_jpg}. From top to bottom are the blurred image, cartoons, textures and restored images.  From left to right are the results obtained by the ADMGB and dADMM, respectively.}
	\label{de_re_deblurring_jpg}
\end{figure}
\renewcommand{\arraystretch}{1.1}
\begin{table*}[htbp]	
	\centering
	\fontsize{7}{8}\selectfont
	\caption{Image decomposition on the images with blurry: ``Iter'' -- the number of iterations; ``Tol'' : -- tolerance for the stopping criterion; ``Time'' -- computing time (in seconds); ``$\mbox{PSNR}^{0}$'' (``PSNR'') -- the PSNR value between the blurred (restored) image and the original image, respectively. And a=``ADME''; b=``ADMGB''; c= ``dADMM''.}
	\label{deblurring_ADME_ADMGB_ADMMD_table}
	\begin{tabular}{|c||c|c||c|c|c|c|}
		\hline
		Image&Blur&$\mbox{PSNR}^{0}$&Iter (a$\,|\,$b$\,|\,$c)&Tol (a$\,|\,$b$\,|\,$c)&Time (a$\,|\,$b$\,|\,$c)&PSNR (a$\,|\,$b$\,|\,$c)\cr\hline
		\multirow{4}{*}{\emph{Lena}}
		&Gaussian(20,20)&22.25&70$\,|\,$70$\,|\,$23&3.0e-1$\,|\,$3.0e-1$\,|\,$9.8e-4&5.19$\,|\,$5.69$\,|\,$1.39&30.41$\,|\,$30.42$\,|\,$31.38\\
		&Gaussian(30,30)&20.61&70$\,|\,$70$\,|\,$21&2.1e-1$\,|\,$2.1e-1$\,|\,$8.6e-4&5.20$\,|\,$5.59$\,|\,$1.25&28.23$\,|\,$28.25$\,|\,$28.87\\
		&Out-of-focus(20)&19.95&70$\,|\,$70$\,|\,$19&3.5e-3$\,|\,$1.0e-2$\,|\,$9.5e-4&5.48$\,|\,$5.84$\,|\,$1.12&29.80$\,|\,$29.82$\,|\,$30.73\\
		&Out-of-focus(30)&18.48&70$\,|\,$70$\,|\,$16&3.4e-3$\,|\,$2.7e-2$\,|\,$9.2e-4&5.20$\,|\,$5.73$\,|\,$0.97&27.76$\,|\,$27.78$\,|\,$28.51\\ \hline
		\multirow{4}{*}{\emph{Weave}}
		&Gaussian(40,40)&19.60&70$\,|\,$70$\,|\,$49&2.2e-1$\,|\,$2.2e-1$\,|\,$9.9e-4&77.04$\,|\,$86.54$\,|\,$66.73&26.40$\,|\,$26.42$\,|\,$27.21\\
		&Gaussian(50,50)&19.44&70$\,|\,$70$\,|\,$47&1.9e-1$\,|\,$1.9e-1$\,|\,$9.8e-4&76.85$\,|\,$86.07$\,|\,$64.16&24.95$\,|\,$24.96$\,|\,$26.68\\
		&Out-of-focus(40)&19.30&70$\,|\,$70$\,|\,$39&1.7e-2$\,|\,$4.3e-2$\,|\,$9.9e-4&75.86$\,|\,$86.10$\,|\,$49.23&26.85$\,|\,$26.87$\,|\,$27.05\\
		&Out-of-focus(55)&19.10&70$\,|\,$70$\,|\,$39&1.3e-2$\,|\,$3.8e-2$\,|\,$9.9e-4&76.89$\,|\,$61.87$\,|\,$49.15&24.78$\,|\,$24.80$\,|\,$26.51\\ \hline
	\end{tabular}
\end{table*}

The numerical results of the ADME, ADMGB and dADMM on the images (\emph{a}) and (\emph{d}) with different blur kernels are showed in Table~\ref{deblurring_ADME_ADMGB_ADMMD_table}. In the table, ``Gaussian(40,40)'' means the blur kernel is generated by {\sc Matlab} function \texttt{fspecial(`gaussian',40,40)}, and ``Out-of-focus(40)'' means the Out-of-focus blur kernel with a radius of $40$ given by {\sc Matlab} function \texttt{fspecial('disk',40)}.

In Table~\ref{deblurring_ADME_ADMGB_ADMMD_table} we report the detailed numerical results for the ADME, ADMGB and dADMM in the image decomposition and reconstruction problems. One can observe from Table~\ref{deblurring_ADME_ADMGB_ADMMD_table} that the dADMM takes less time and less iterations to get a higher PSNR value of the restored image than the ADME and ADMGB for most of the tested examples. And we display the decomposed results of the cartoon parts, the texture parts and the restored images of \emph{Weave} image ((\emph{d}) in Fig.~\ref{test_jpg}) in the case of ``Out-of-focus(40)'' in Table \ref{deblurring_ADME_ADMGB_ADMMD_table} by two algorithms (ADMGB and dADMM) in Fig.~\ref{de_re_deblurring_jpg}. The numerical performances indicate that the dADMM is a robust, high-performance algorithm for the optimization problem \eqref{eq:target}.

%%%%%%%%%%%%%%%%%%%%%%%%%%%%%%%%%%%%%%%%%%%%%%%%%%%%%%%%%%%%%%%%%%%%%%%%%%%%%%%%%%%%%%%
\subsection{Case 3: $A=K$}
This subsection is devoted to the more difficult part of image decomposition and restoration on an image with missing pixels. For a binary matrix $K\in \textbf{R}^{m\times n}$ and a clean image $M\in \textbf{R}^{m\times n}$. The degraded image $D = K\circ M$, where the operator $\circ$ denotes the Hadamard product, i.e., $D_{i,j} = K_{i,j}\times M_{i,j}$ ($i=1,\dots, m;\,j=1,\dots,n$). The images (\emph{f}) and (\emph{h}) in Fig.~\ref{test_jpg} are used in this subsection. And $K$ is set to be a $512\times 512$ matrix and a $1024\times 1024$ matrix for the image (\emph{f}) and (\emph{h}), respectively.

\begin{figure}[htbp]
	\centering  %ͼƬ¾ÓÖÐÅÅÁÐ
	\normalsize
	\subfloat{%Ç°ÃæÖÐÀ¨ºÅÀïÃæµÄaÊÇ×Óͼ±êÌâ
		\label{fig:subfig_a}%×ÓͼµÄÒýÓñêºÅ
		\begin{minipage}[t]{0.45\textwidth}%ÿ¸öͼÐδóС
			\centering
			\includegraphics[angle=0,width=1\textwidth]{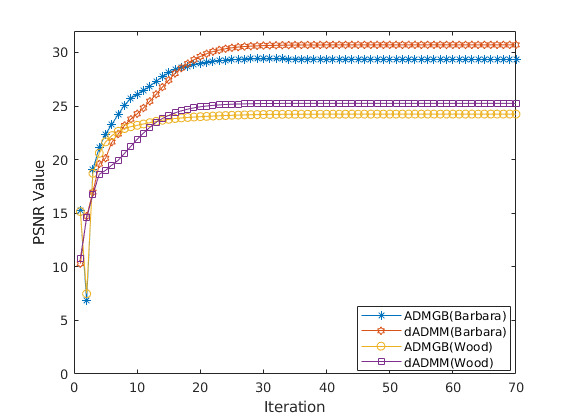}%²åÈëͼƬ,À¨ºÅÀïÃæÊÇͼƬ·¾¶
		\end{minipage}
	}
	\caption{Variations of the PSNR values with respect to the iterations for the ADMGB and dADMM on the images (\emph{f}) and (\emph{h}) in Fig.~\ref{test_jpg} with missing pixels.}
	\label{psnr_KS_jpg}
\end{figure}

\renewcommand{\arraystretch}{1.3}
\begin{table}[htbp]	
	\centering
	\fontsize{7}{8}\selectfont
	\caption{Image decomposition and restoration on the images with missing pixels:  ``Iter'' -- the number of iterations; ``Tol'' -- tolerance for the stopping criterion; ``Time'' -- computing time (in seconds); ``$\mbox{PSNR}^{0}$'' (``PSNR'') -- the PSNR value between the painted (restored) image and the original image, respectively. And a=``ADMGB''; b= ``dADMM''.}
	\label{inpainting_ADMGB_ADMMD_table}
	\begin{tabular}{|c|c||c|c|c|c|}
		\hline
		Image&$\mbox{PSNR}^{0}$&Iter (a$|$b)&Tol (a$|$b)&Time (a$|$b)&PSNR (a$|$b)\cr\hline
		\emph{Barbara}&14.13&70$|$39&7.5e-3$|$9.3e-4&4.95$|$3.23&29.34$|$30.71\\
		\emph{Wood}&14.81&70$|$39&2.1e-2$|$9.0e-4&97.63$|$79.16&24.24$|$25.25\\
		\hline
	\end{tabular}
\end{table}

\begin{figure*}[htbp]
	\centering
	\subfloat{
		\label{fig:subfig_e}
		\begin{minipage}[t]{0.2\textwidth}
			\centering
			\includegraphics[angle=0,width=1\textwidth]{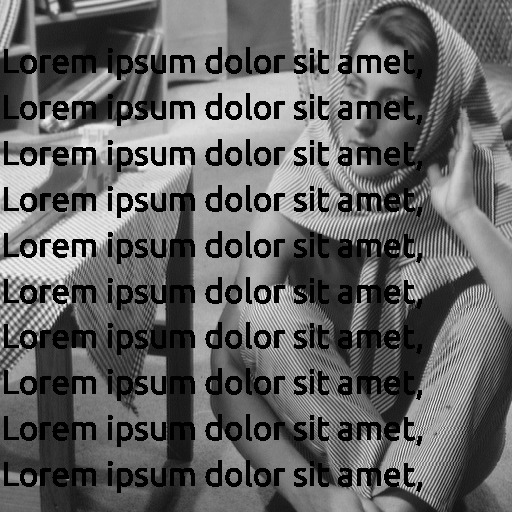}
		\end{minipage}
	}
	\subfloat{
	\label{fig:subfig_e}
	\begin{minipage}[t]{0.2\textwidth}
		\centering
		\includegraphics[angle=0,width=1\textwidth]{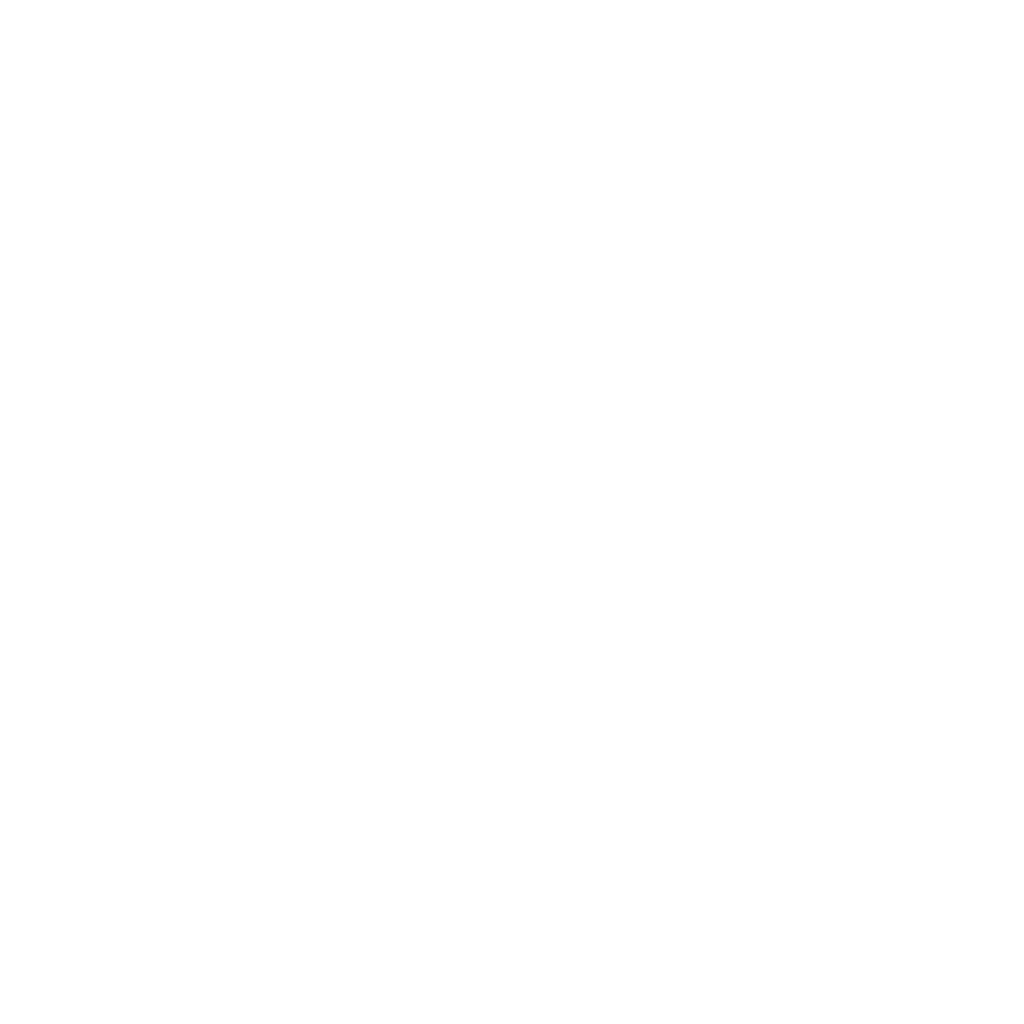}
	\end{minipage}
}	
\subfloat{%Ç°ÃæÖÐÀ¨ºÅÀïÃæµÄaÊÇ×Óͼ±êÌâ
	\label{fig:subfig_e}%×ÓͼµÄÒýÓñêºÅ
	\begin{minipage}[t]{0.2\textwidth}%ÿ¸öͼÐδóС
		\centering
		\includegraphics[angle=0,width=1\textwidth]{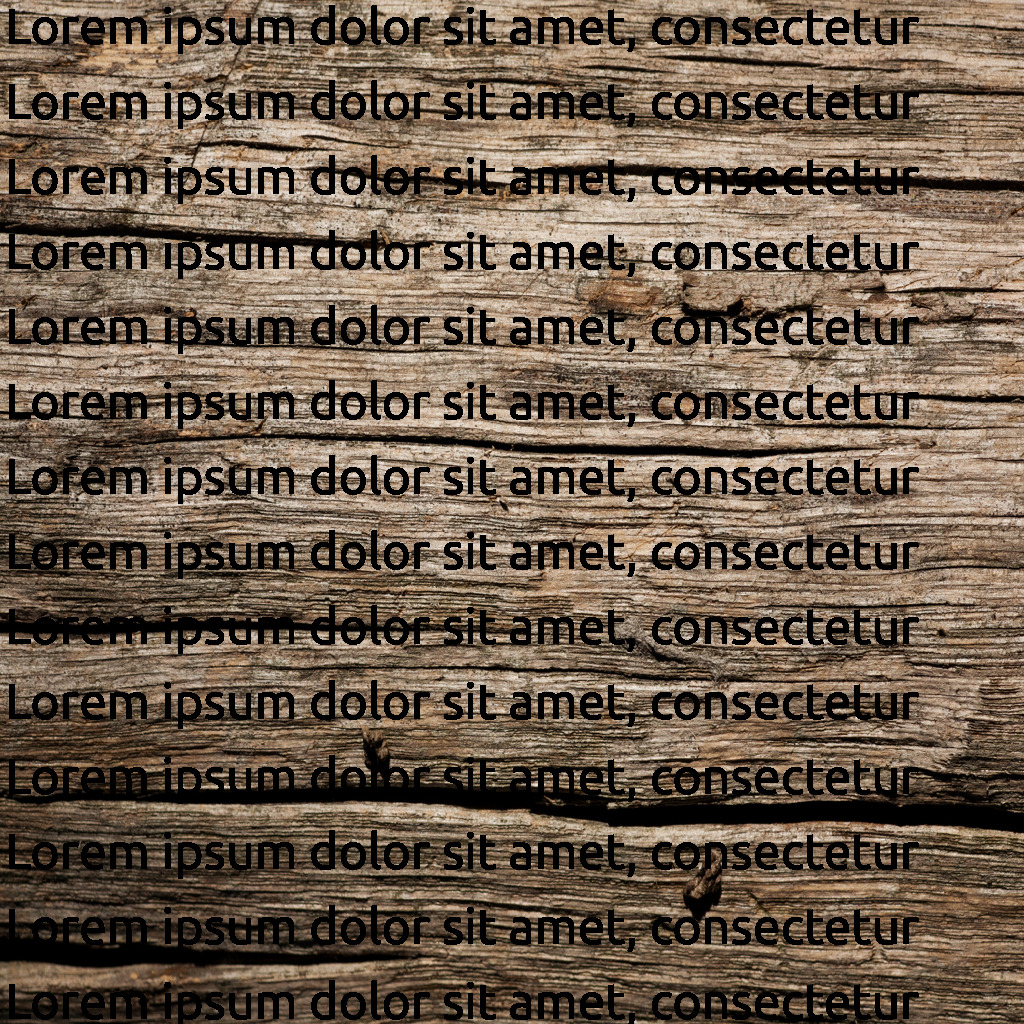}%²åÈëͼƬ,À¨ºÅÀïÃæÊÇͼƬ·¾¶
	\end{minipage}
}
\vfill
	\subfloat{
		\label{fig:subfig_f}
		\begin{minipage}[t]{0.2\textwidth}
			\centering
			\includegraphics[angle=0,width=1\textwidth]{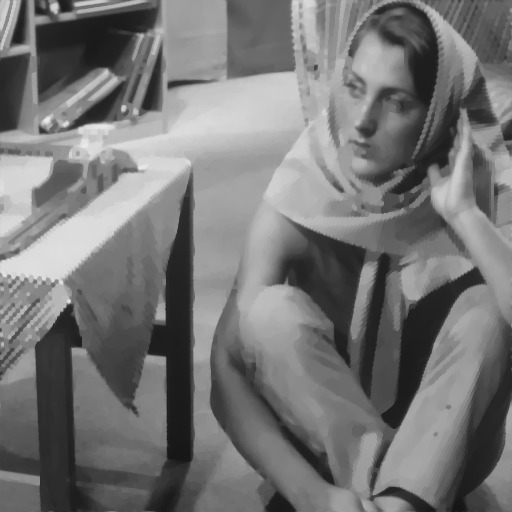}
		\end{minipage}
	}
	\subfloat{
	\label{fig:subfig_b}
	\begin{minipage}[t]{0.2\textwidth}
		\centering
		\includegraphics[angle=0,width=1\textwidth]{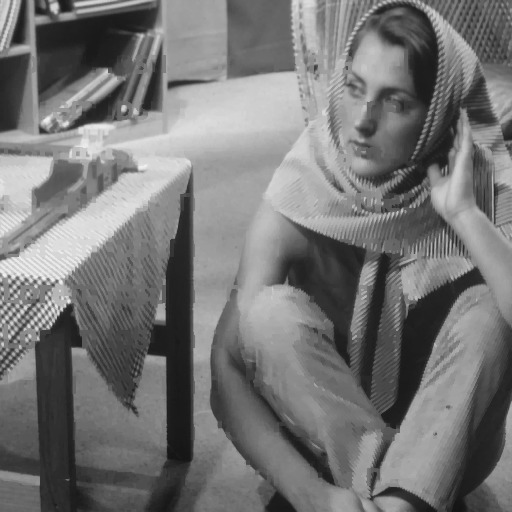}
	\end{minipage}
}
	\subfloat{
	\label{fig:subfig_f}
	\begin{minipage}[t]{0.2\textwidth}
		\centering
		\includegraphics[angle=0,width=1\textwidth]{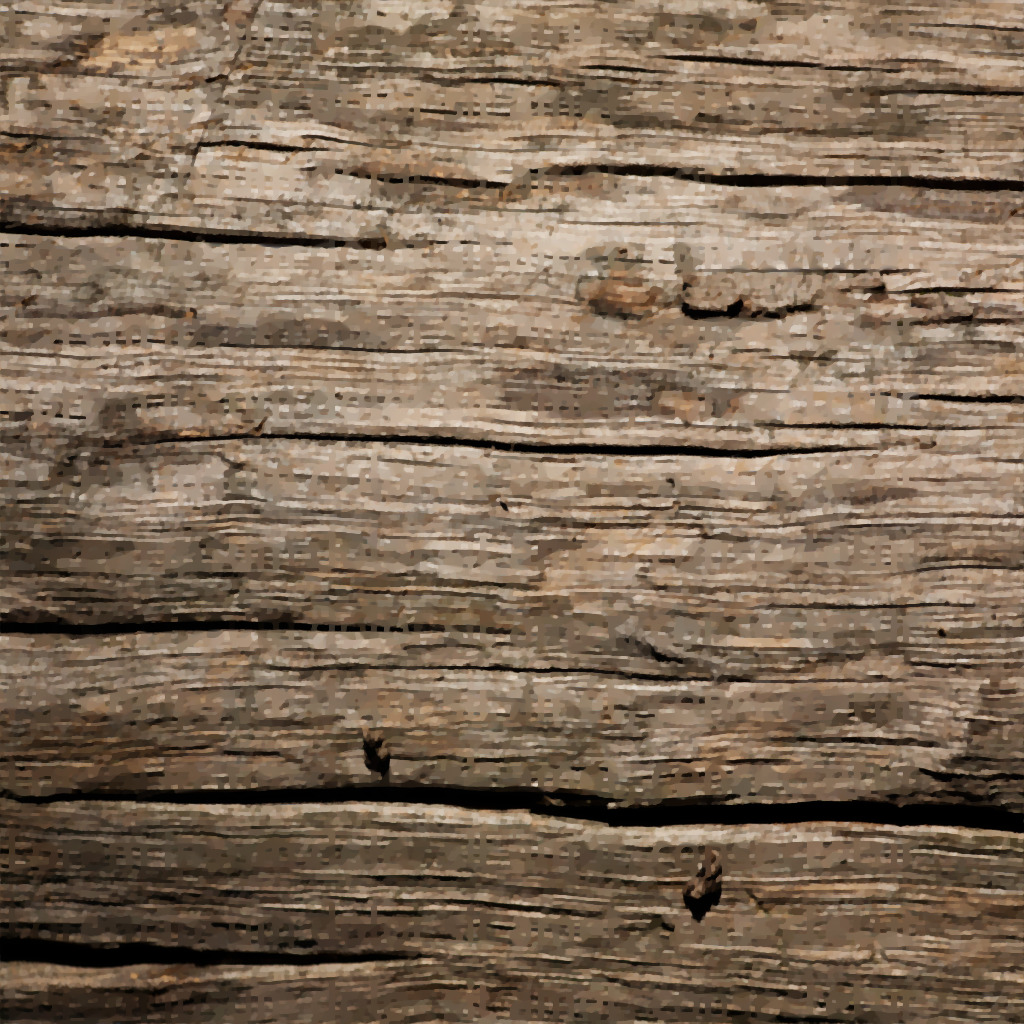}
	\end{minipage}
}
\subfloat{
	\label{fig:subfig_f}
	\begin{minipage}[t]{0.2\textwidth}
		\centering
		\includegraphics[angle=0,width=1\textwidth]{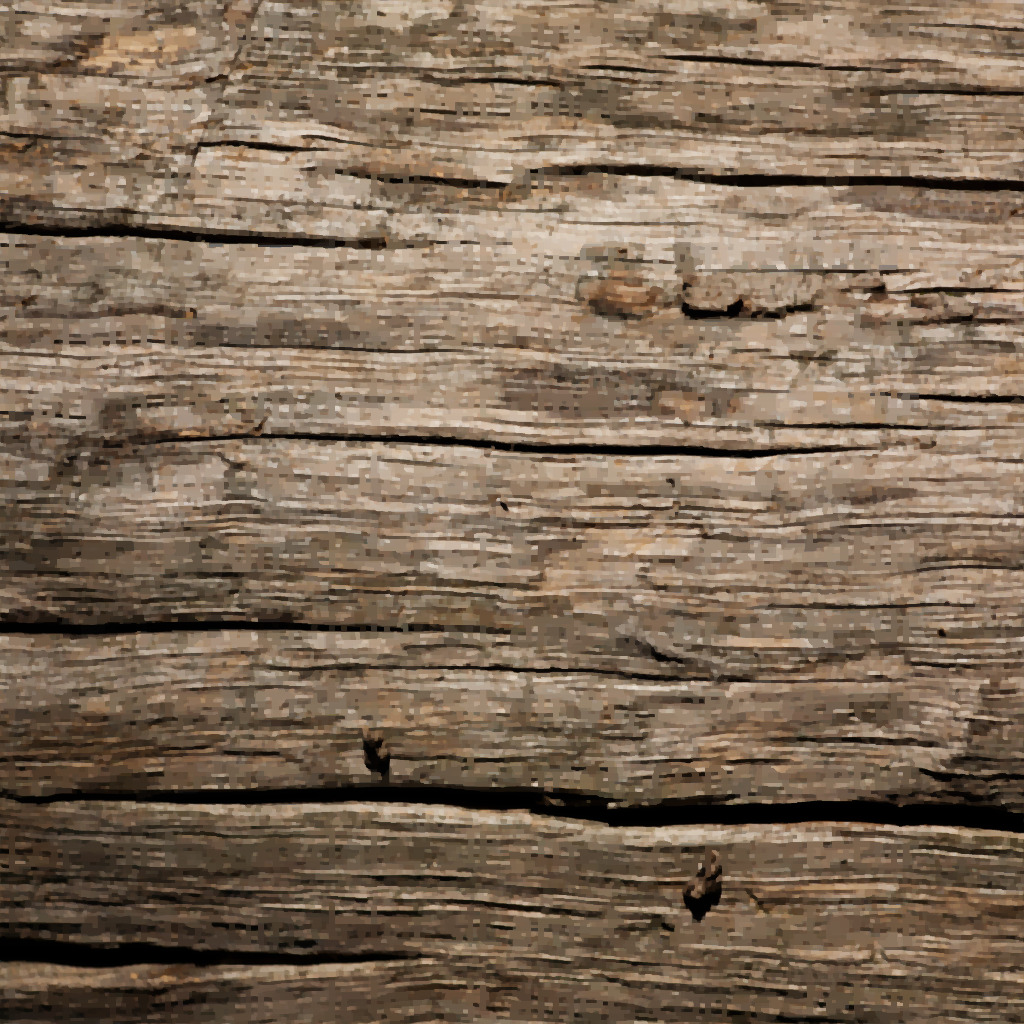}
	\end{minipage}
}
\vfill
	\subfloat{
		\label{fig:subfig_g}
		\begin{minipage}[t]{0.2\textwidth}
			\centering
			\includegraphics[angle=0,width=1\textwidth]{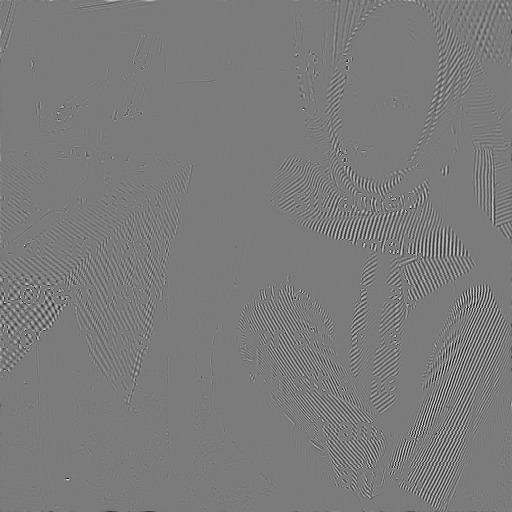}
		\end{minipage}
	}
	\subfloat{
	\label{fig:subfig_c}
	\begin{minipage}[t]{0.2\textwidth}
		\centering
		\includegraphics[angle=0,width=1\textwidth]{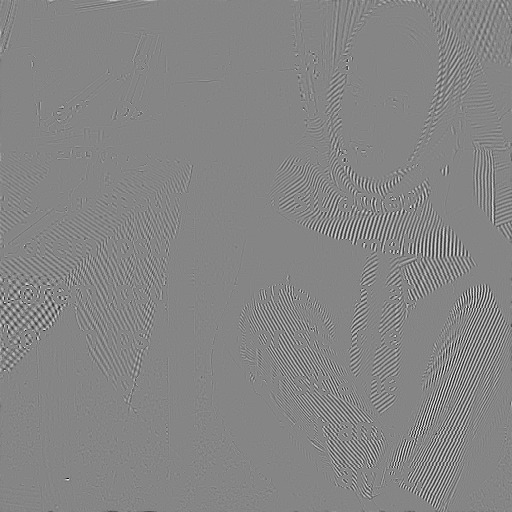}
	\end{minipage}
}
	\subfloat{
	\label{fig:subfig_g}
	\begin{minipage}[t]{0.2\textwidth}
		\centering
		\includegraphics[angle=0,width=1\textwidth]{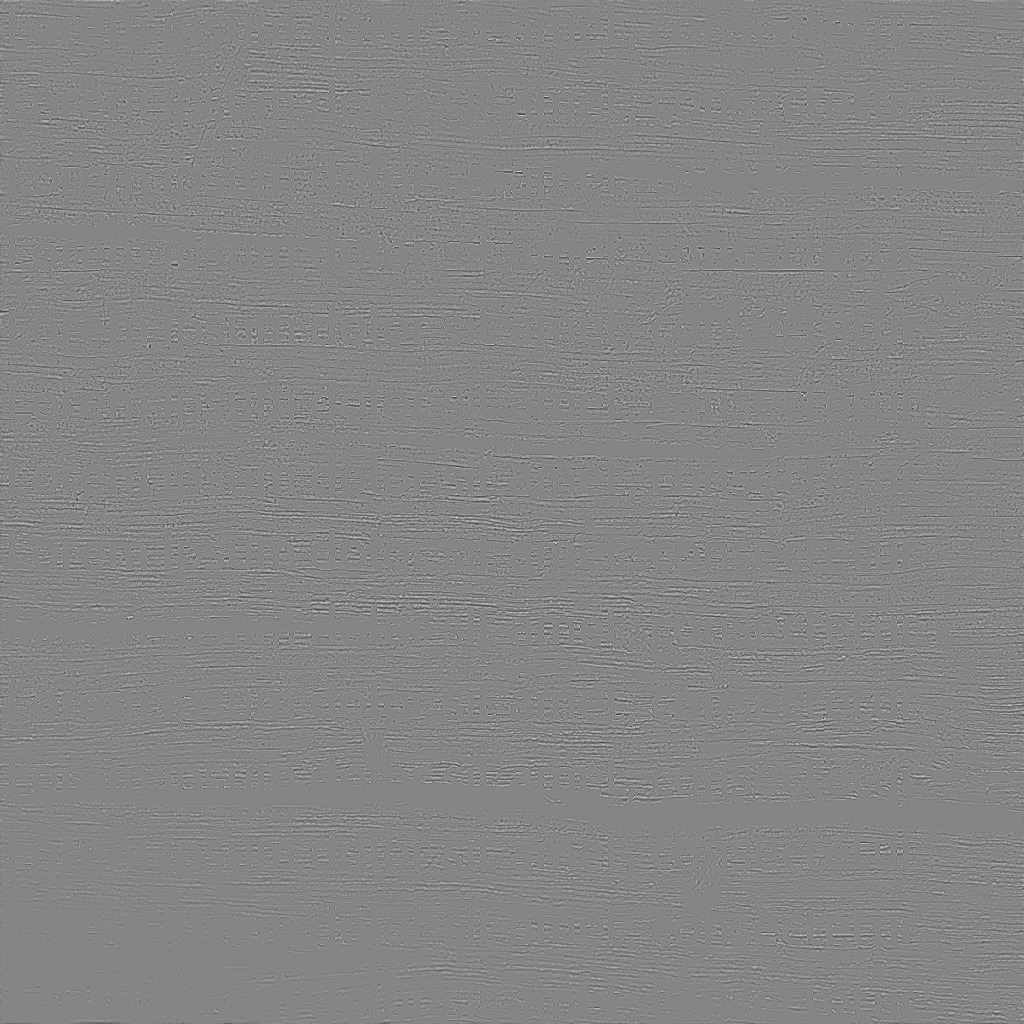}
	\end{minipage}
}
\subfloat{
	\label{fig:subfig_g}
	\begin{minipage}[t]{0.2\textwidth}
		\centering
		\includegraphics[angle=0,width=1\textwidth]{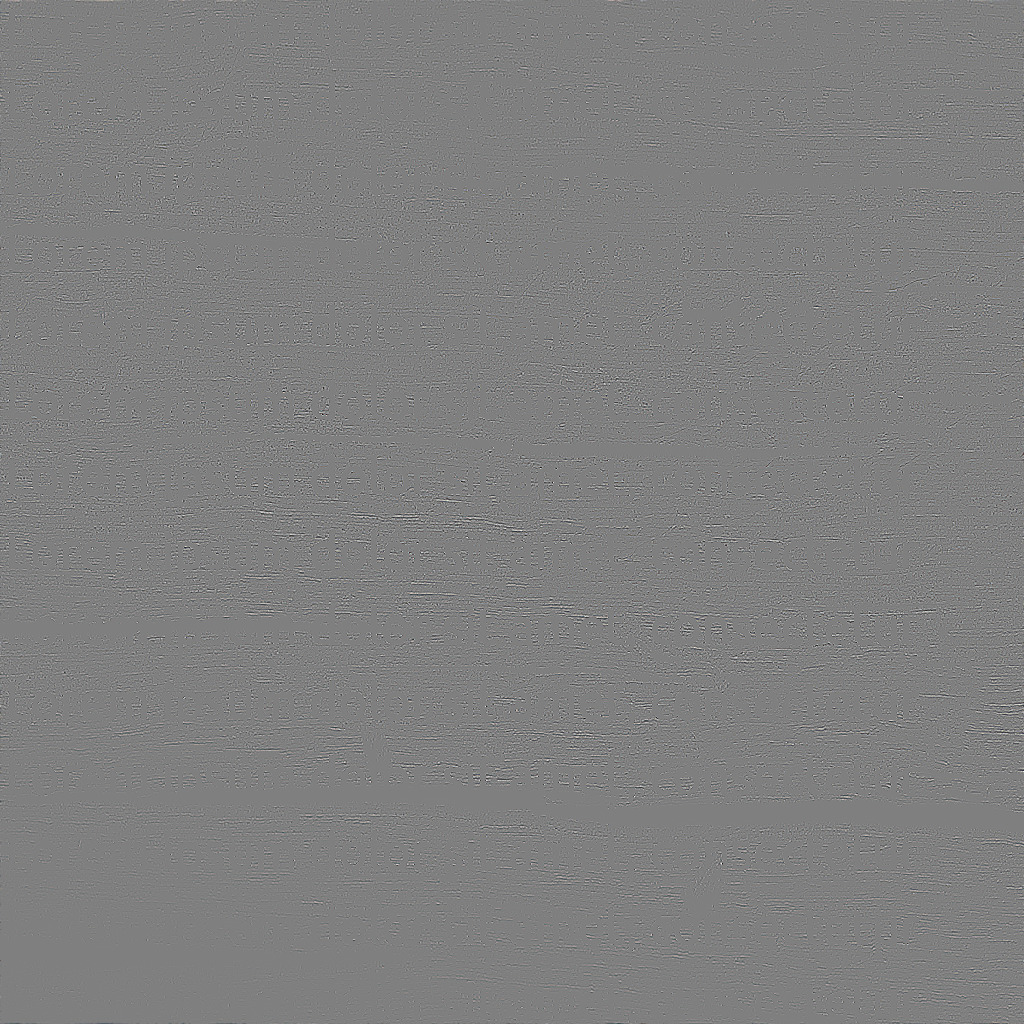}
	\end{minipage}
}
\vfill
	\subfloat{
		\label{fig:subfig_h}
		\begin{minipage}[t]{0.2\textwidth}
			\centering
			\includegraphics[angle=0,width=1\textwidth]{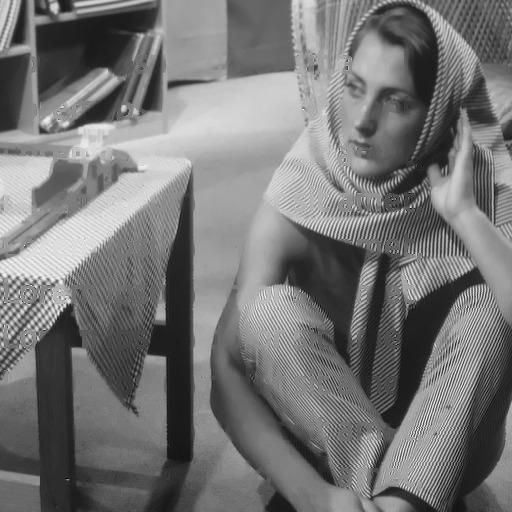}
		\end{minipage}
	}
	\subfloat{
		\label{fig:subfig_d}
		\begin{minipage}[t]{0.2\textwidth}
			\centering
			\includegraphics[angle=0,width=1\textwidth]{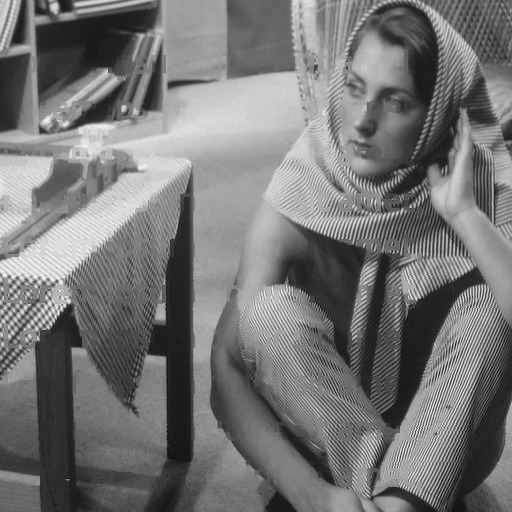}
		\end{minipage}
	}
	\subfloat{
		\label{fig:subfig_h}
		\begin{minipage}[t]{0.2\textwidth}
			\centering
			\includegraphics[angle=0,width=1\textwidth]{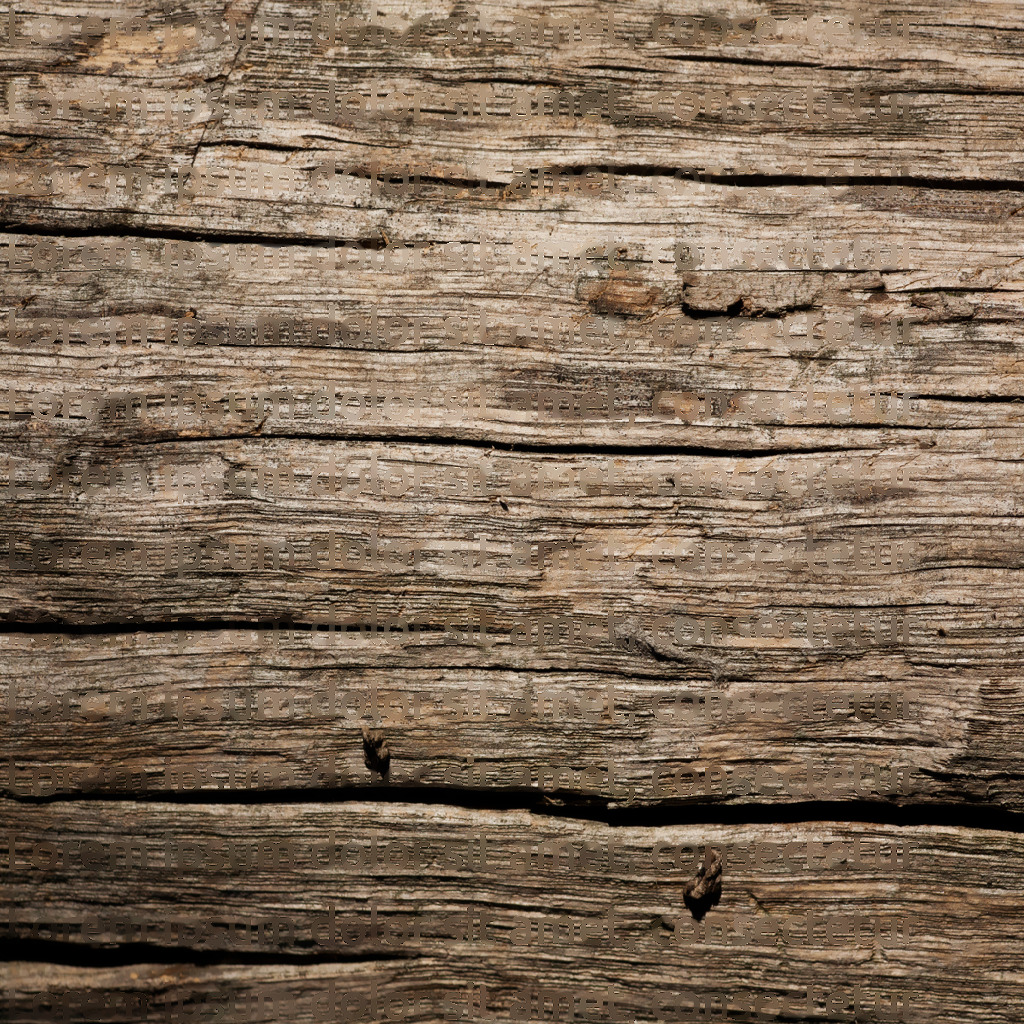}
		\end{minipage}
	}
\subfloat{
	\label{fig:subfig_h}
	\begin{minipage}[t]{0.2\textwidth}
		\centering
		\includegraphics[angle=0,width=1\textwidth]{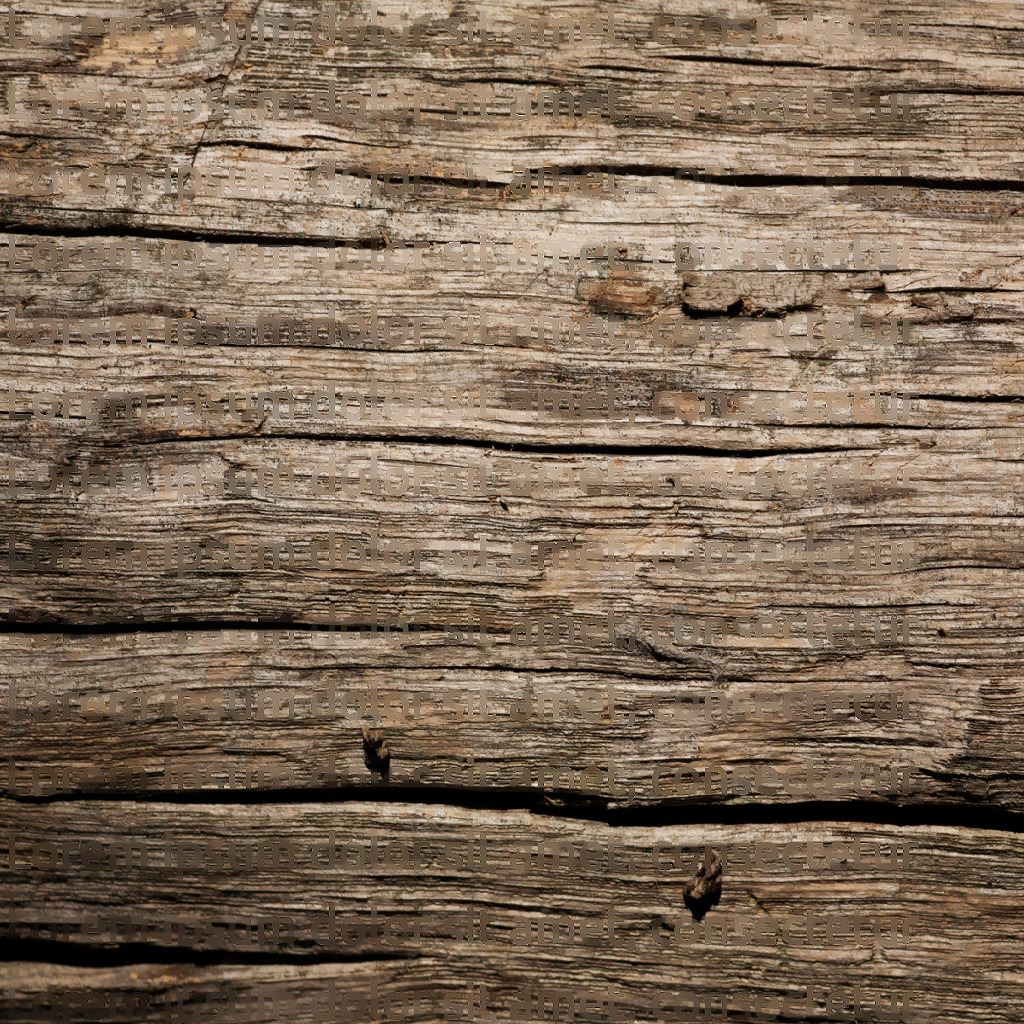}
	\end{minipage}
}
	\caption{Image decomposition and restoration on images with missing pixels. From top row to bottom row are degraded images, cartoon parts, texture parts and restored images. $512\times 512$ \emph{Barbara} image with missing pixels, decomposition and restoration  image by the ADMGB  and dADMM on first two column respectively. $1024\times 1024 \times 3$ \emph{Wood} image with missing pixels, decomposition and restoration image by the ADMGB and dADMM on last two column respectively.}
	\label{de_re_inpainted_jpg}
\end{figure*}

We compare the performances of the ADMGB and dADMM in this case.
The parameters of the ADMGB are set as $\tau = 1\times 10^{-2}$, $\mu = 5\times 10^{-3}$ and $\beta_{1} = \beta_{2} =\beta_{3} = 5\times10^{-2}$. To implement the dADMM, we take $\tau=4\times10^{-3}$, $\mu = 1\times10^{-3}$, and $\sigma = 3\times10^{3}$. Finally, the PSNR values with respect to the iterations for the images ((\emph{f}) and (\emph{h}) in Fig.~\ref{test_jpg}) with missing pixels are showed in Fig.~\ref{psnr_KS_jpg}. We display the decomposed images and the restored images of two algorithms (ADMGB and dADMM) for the images ((\emph{f}) and (\emph{h}) in Fig.~\ref{test_jpg}) in Fig.~\ref{de_re_inpainted_jpg}.

From Fig.~\ref{psnr_KS_jpg} and Table~\ref{inpainting_ADMGB_ADMMD_table} we can see that the PSNR value of the reconstructed image generated by the dADMM is higher than that generated by the ADMGB. Moreover, we can get a better PSNR value with less time and less iterations by the dADMM.
%%%%%%%%%%%%%%%%%%%%%%%%%%%%%%%%%%%%%%%%%%%%%%%%%%%%%%%%%%%%%%%%%%%%%%%%%%%%%%%%%

\subsection{Case 4: $A=KS$}
In this subsection, we solve the problem \eqref{eq:general} with $A=KS$, where $S$ is a blurring matrix, and $K$ is a binary matrix. That is, we consider decomposing images with both blurry and missing pixels. The images (\emph{a}) and (\emph{h}) in Fig.~\ref{test_jpg} are used in this part. The blurring matrices $S$ (the Out-of-focus blur and the Gaussian blur) in  Table~\ref{deblurring_ADME_ADMGB_ADMMD_table} are used as blurring matrix, and $K$ is set to be a $512\times 512$ matrix and a $1024\times 1024$ matrix for image (\emph{f}) and (\emph{h}), respectively.

\renewcommand{\arraystretch}{1.2}
\begin{table*}[htbp]	
	\centering
	\fontsize{7}{8}\selectfont
	\caption{Image decomposition and restoration of the images with blurry and missing pixels: ``Iter'' -- the number of iterations; ``Tol'' -- tolerance for the stopping criterion; ``Time'' -- computing time (in seconds); ``$\mbox{PSNR}^{0}$'' (``PSNR'') -- the PSNR value between the blurred (restored) image and the original image, respectively. And  a=``ADMGB''; b= ``dADMM''.}
	\label{deblurring_inpainting_ADMGB_ADMMD_table}
	\begin{tabular}{|c|c|c||c|c|c|c|}
		\hline
		Image&Blur + Missing pixels&$\mbox{PSNR}^{0}$&Iter (a$\,|\,$b)&Tol (a$\,|\,$b)&Time (a$\,|\,$b)&PSNR (a$\,|\,$b)\cr\hline
		\multirow{4}{*}{\emph{Lena}}
		&Gaussian(15,15) + Missing pixels  &11.54&70$\,|\,$16&4.9e-2$\,|\,$7.5e-4&16.20$\,|\,$2.77&24.88$\,|\,$26.44\\
		&Gaussian(20,20) + Missing pixels  &11.47&70$\,|\,$15&5.6e-2$\,|\,$8.2e-4&16.22$\,|\,$2.62 &25.48$\,|\,$26.20\\
		&Out-of-focus(10) + Missing pixels&12.98&70$\,|\,$15&5.3e-2$\,|\,$7.8e-4&15.43$\,|\,$2.59 &25.76$\,|\,$26.76\\
		&Out-of-focus(15) + Missing pixels &11.38&70$\,|\,$15&5.8e-2$\,|\,$7.2e-4&14.85$\,|\,$2.56 &24.05$\,|\,$25.67\\ \hline
		\multirow{4}{*}{\emph{Wood}}
		&Gaussian(20,20) + Missing pixels  &12.70&70$\,|\,$15&8.6e-2$\,|\,$9.1e-4&423.97$\,|\,$101.35&18.54$\,|\,$20.93\\
		&Gaussian(30,30) + Missing pixels  &12.48&70$\,|\,$15&8.5e-2$\,|\,$9.1e-4&441.78$\,|\,$100.35&17.75$\,|\,$19.66\\
		&Out-of-focus(15) + Missing pixels&12.58&70$\,|\,$15&6.8e-2$\,|\,$9.1e-4&395.44$\,|\,$100.55&17.48$\,|\,$19.23\\
		&Out-of-focus(20) + Missing pixels&12.43&70$\,|\,$15&7.2e-2$\,|\,$9.0e-4&395.85$\,|\,$99.96&17.04$\,|\,$18.35\\
		\hline
	\end{tabular}
\end{table*}
\begin{figure}[htbp]
	\centering  %ͼƬ¾ÓÖÐÅÅÁÐ	
	\subfloat{%Ç°ÃæÖÐÀ¨ºÅÀïÃæµÄaÊÇ×Óͼ±êÌâ
		\label{fig:subfig_e}%×ÓͼµÄÒýÓñêºÅ
		\begin{minipage}[t]{0.2\textwidth}%ÿ¸öͼÐδóС
			\centering
			\includegraphics[angle=0,width=1\textwidth]{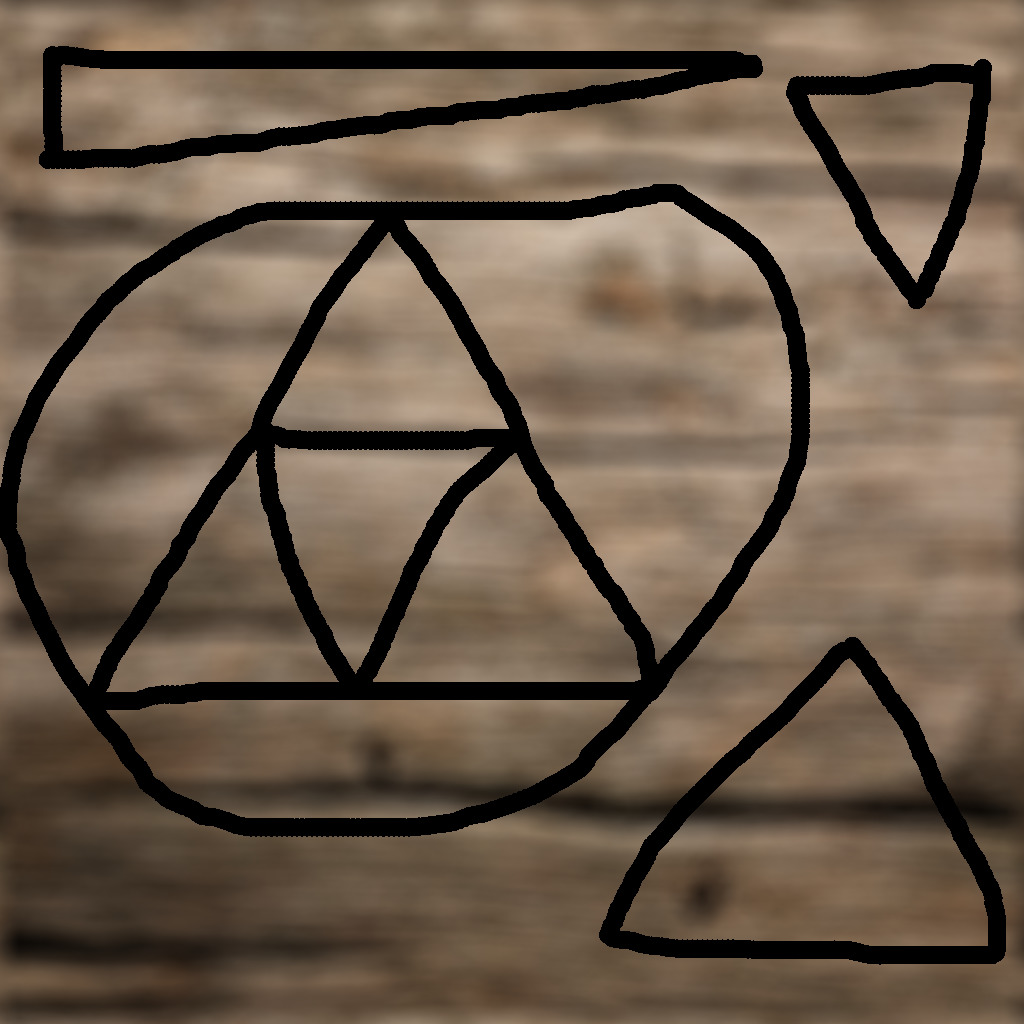}%²åÈëͼƬ,À¨ºÅÀïÃæÊÇͼƬ·¾¶
		\end{minipage}
	}
	\vfill
	\subfloat{
		\label{fig:subfig_f}
		\begin{minipage}[t]{0.2\textwidth}
			\centering
			\includegraphics[angle=0,width=1\textwidth]{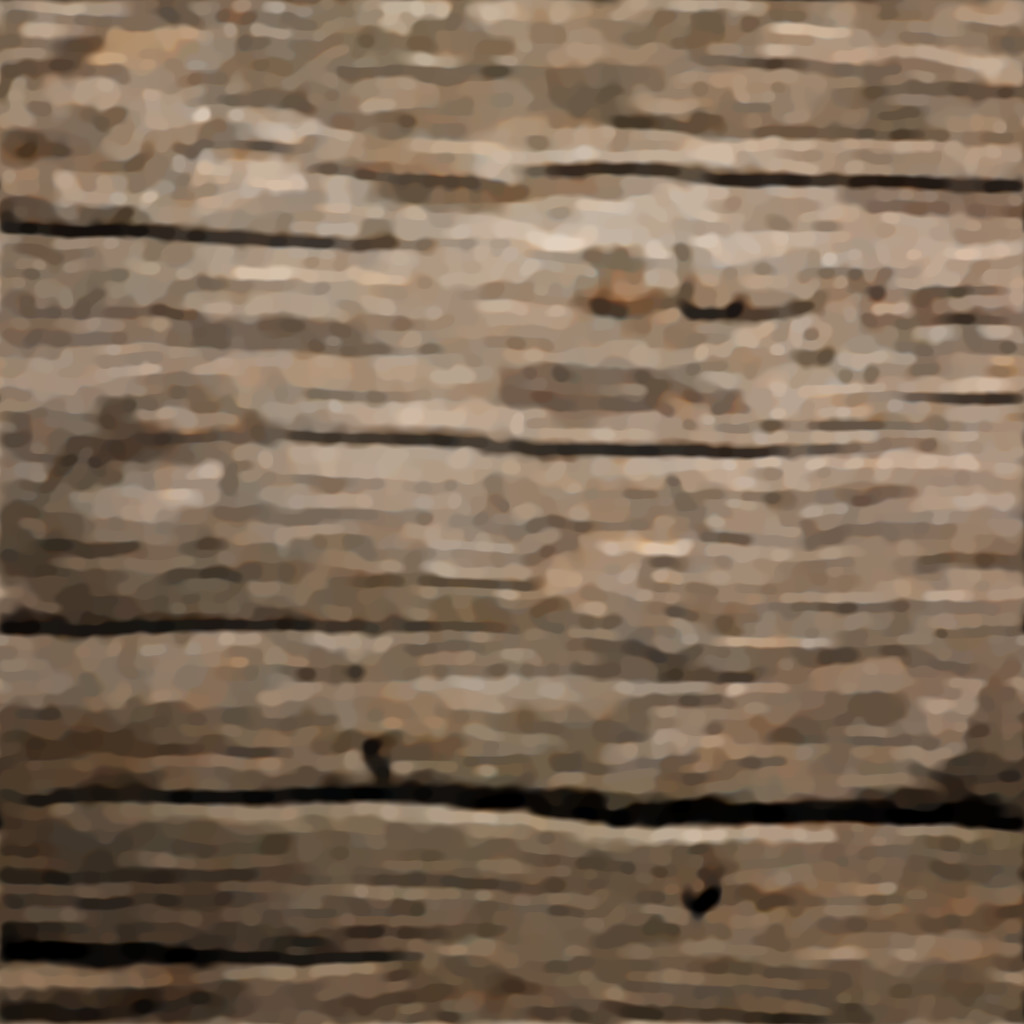}
		\end{minipage}
	}
	\subfloat{
	\label{fig:subfig_f}
	\begin{minipage}[t]{0.2\textwidth}
		\centering
		\includegraphics[angle=0,width=1\textwidth]{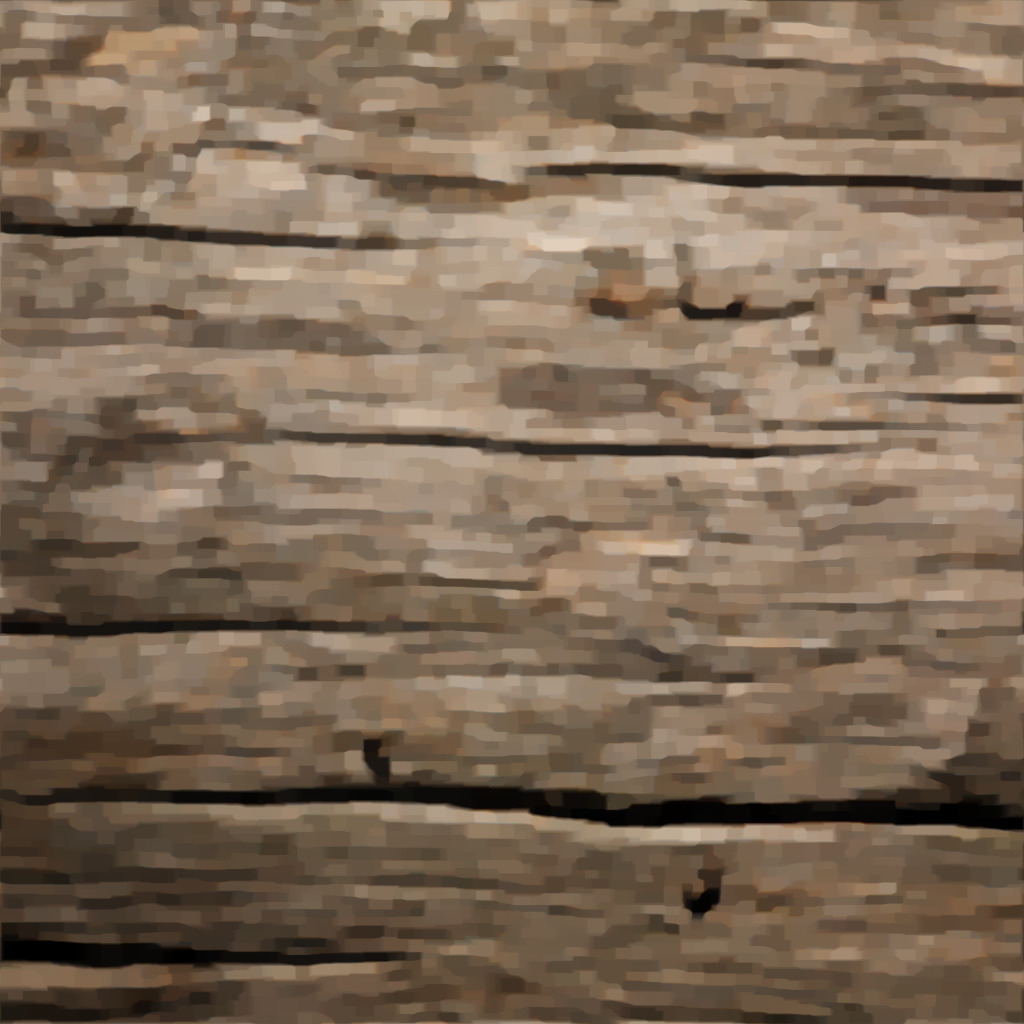}
	\end{minipage}
}
\vfill
	\subfloat{
		\label{fig:subfig_g}
		\begin{minipage}[t]{0.2\textwidth}
			\centering
			\includegraphics[angle=0,width=1\textwidth]{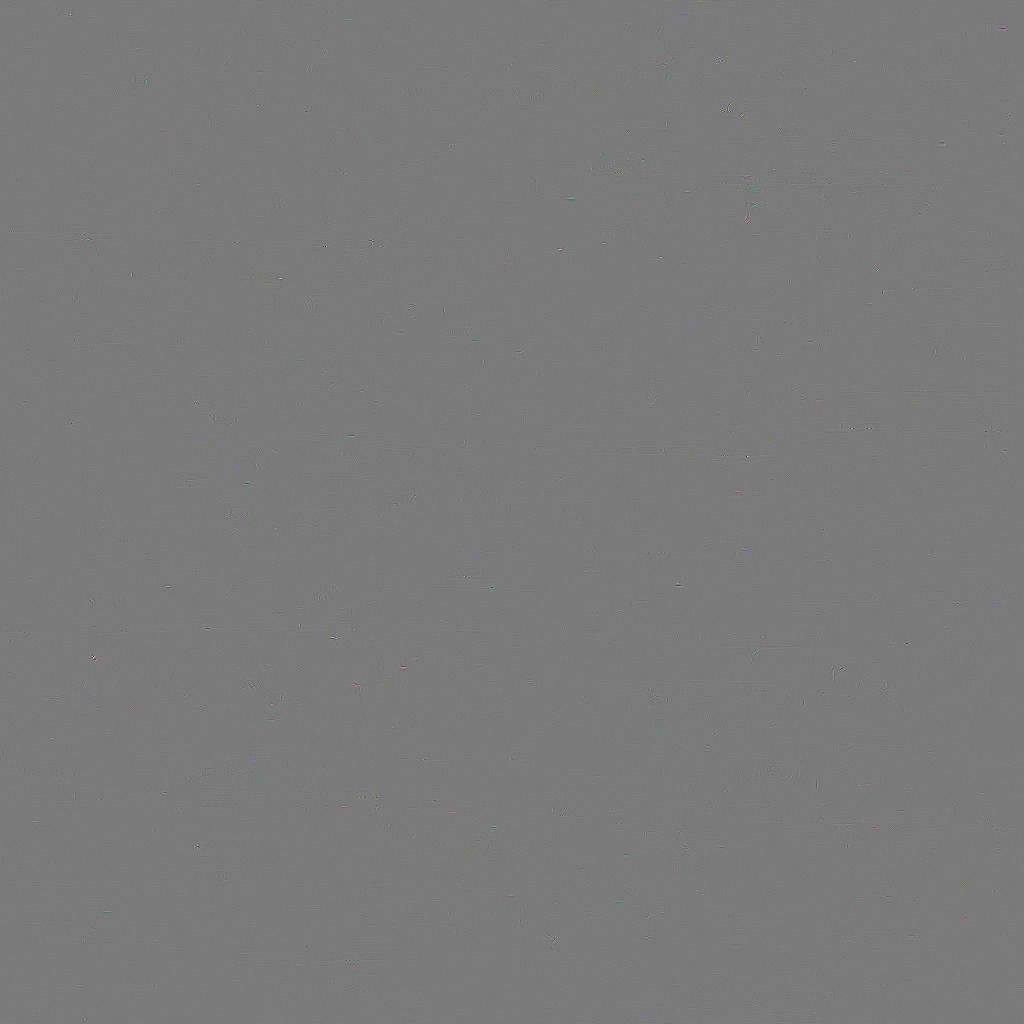}
		\end{minipage}
	}
\subfloat{
	\label{fig:subfig_g}
	\begin{minipage}[t]{0.2\textwidth}
		\centering
		\includegraphics[angle=0,width=1\textwidth]{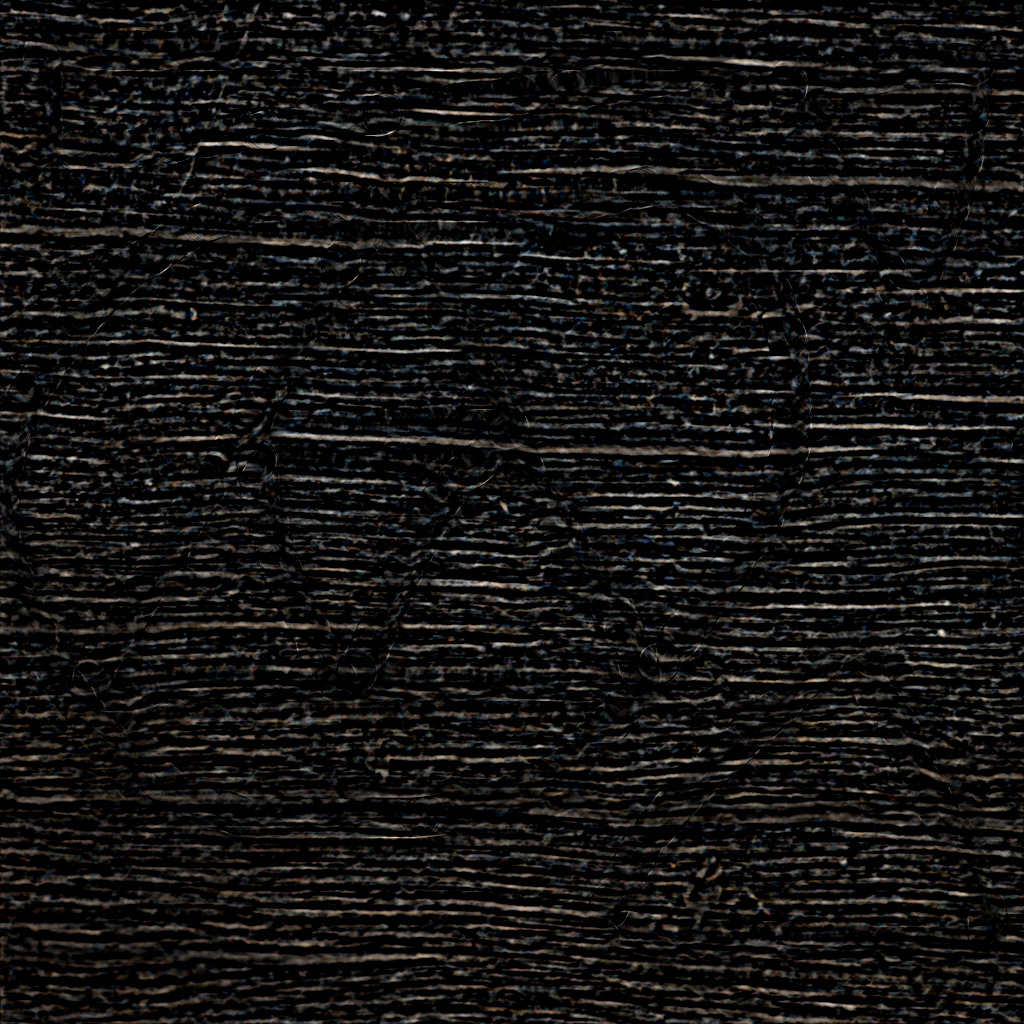}
	\end{minipage}
}
	\vfill
	\subfloat{
		\label{fig:subfig_h}
		\begin{minipage}[t]{0.2\textwidth}
			\centering
			\includegraphics[angle=0,width=1\textwidth]{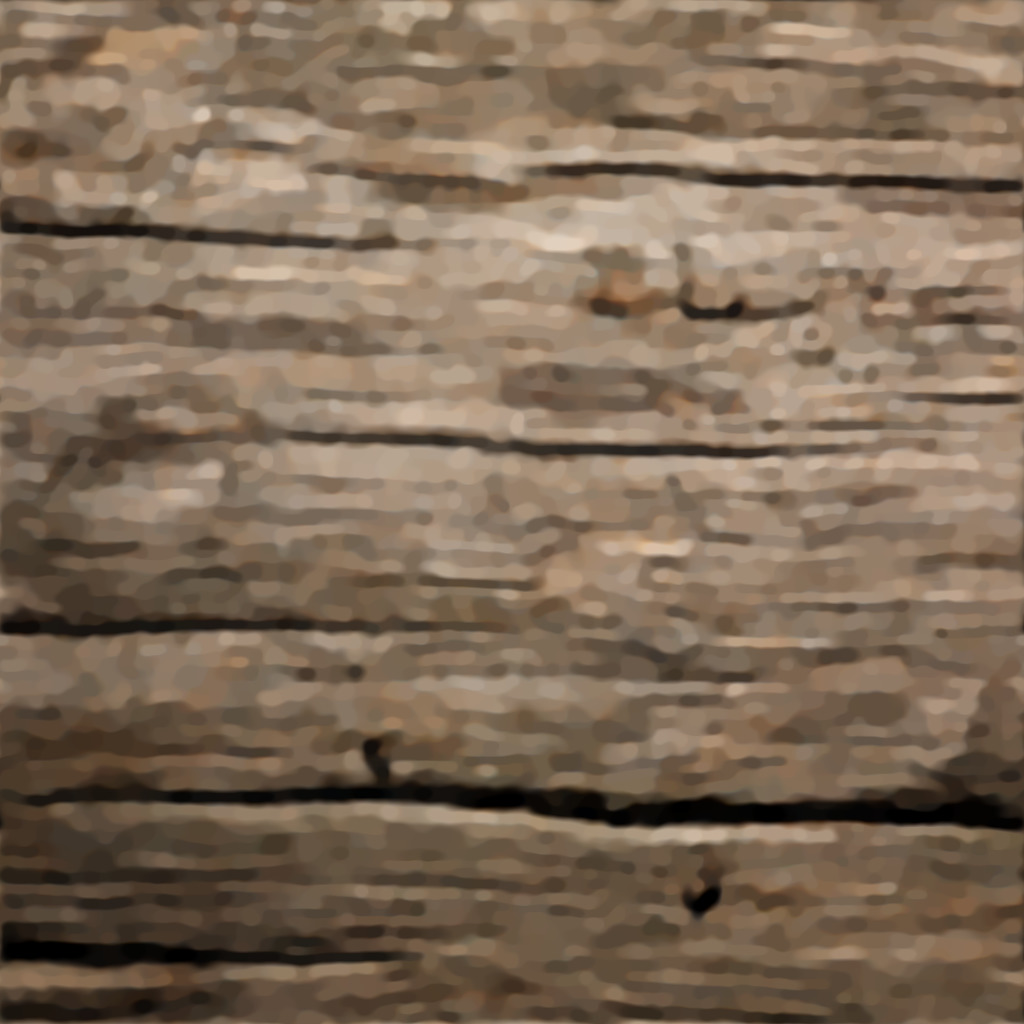}
		\end{minipage}
	}
	\subfloat{
		\label{fig:subfig_h}
		\begin{minipage}[t]{0.2\textwidth}
			\centering
			\includegraphics[angle=0,width=1\textwidth]{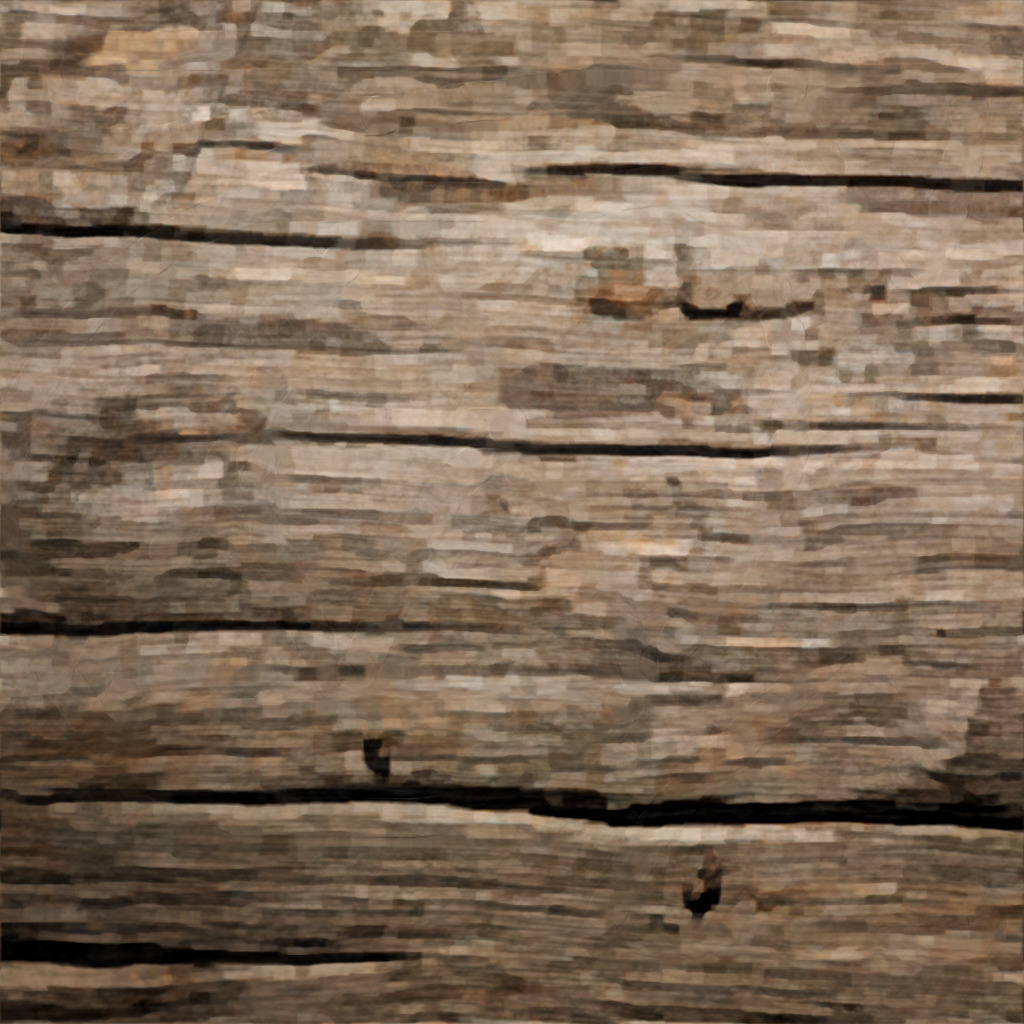}
		\end{minipage}
	}
	\caption{Image decomposition and restoration on the image with blurry and missing pixels (\emph{Wood} image in Fig.~\ref{test_jpg} with blurry (``Out-of-focus(20)'') and missing pixels). From top to bottom are the degraded image, cartoon parts, texture parts and restored images, respectively. Left column: Decomposition by the ADMGB. Right column: Decomposition by the dADMM.}
	\label{blurring_inpainted_jpg}
\end{figure}

We implement two algorithms (the ADMGB and dADMM) in this part. For the ADMGB, we take $\tau = 5\times10^{-5}$, $\mu =1\times10^{-5}$ and $\beta_{1} = \beta_{2} =\beta_{3} = 1\times10^{-2}$ . And for the dADMM, we take $\tau = 5\times10^{-3}$, $\mu =3\times10^{-3}$ and $\sigma=3\times10^{3}$.  All results in this case are displayed in Table~\ref{deblurring_inpainting_ADMGB_ADMMD_table} and Fig.~\ref{blurring_inpainted_jpg}. And the result images (cartoon parts, texture parts and restored images) for \emph{Wood} image with ``Out-of-focus(20)'' plus missing pixels are showed in Fig.~\ref{blurring_inpainted_jpg}.  We report in Table~\ref{deblurring_inpainting_ADMGB_ADMMD_table} the detailed numerical results for the ADMGB and dADMM on the image decomposition and restoration of the images with blurry and missing pixels. It can be observed from Table~\ref{deblurring_inpainting_ADMGB_ADMMD_table} that the PSNR values by the dADMM are higher than those by the ADMGB. Moreover, the dADMM is obviously faster than the ADMGB.

Now we take a further look at the variations of the PSNR values and the KKT residuals with respect to the iterations. Fig.~\ref{psnr_kkt_KSH_jpg} shows the variations of the PSNR values and the KKT residuals with respect to the iterations for the ADMGB and dADMM for (\emph{h}) in Fig.~\ref{test_jpg} with blurry (``Out-of-focus(20)'') and missing pixels, respectively. Firstly, for the ADMGB, the PSNR value can hardly improve after the first several iterations, while for the dADMM, the PSNR value increase all the time though it increases very slow at the later stage. Secondly, the KKT residuals by the dADMM is obviously smaller than that by the ADMGB. Note that the reason lies in the fact that several variables are introduced for the ADMGB which directly lead to a smaller iteration step size than the dADMM.

\begin{figure}[htbp]
	\centering	
	\subfloat{
		\label{fig:subfig_e}
		\begin{minipage}[t]{0.45\textwidth}
			\centering
			\includegraphics[angle=0,width=1\textwidth,height=0.2\textheight]{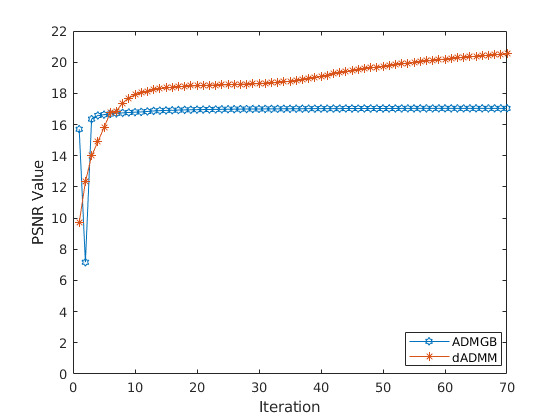}
		\end{minipage}
	}
	\vfill
	\subfloat{
		\label{fig:subfig_f}
		\begin{minipage}[t]{0.45\textwidth}
			\centering
			\includegraphics[angle=0,width=1\textwidth,height=0.2\textheight]{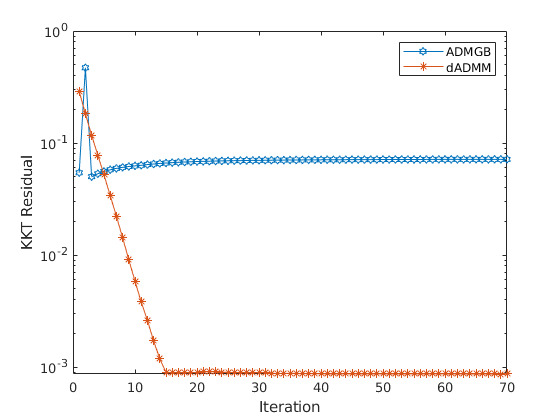}
		\end{minipage}
	}
	\caption{From top to bottom are the variations of the PSNR values and KKT residuals with respect to the iterations for the ADMGB and dADMM for (\emph{h}) in Fig.~\ref{test_jpg} with blurry (``Out-of-focus(20)'') and missing pixels, respectively. }
	\label{psnr_kkt_KSH_jpg}
\end{figure}

\section{Conclusion}
In this paper, we developed the dADMM to solve the image decomposition and restoration problem with blurry and/or missing pixels. The global convergence and the local linear convergence rate of the algorithm were also given. The numerical simulation results also demonstrated that our proposed algorithm is robust and efficient, and can obtain relatively more higher SNRs for various image decomposition and restoration problems. Thus, it shows that even for the same model problem, the choice of the algorithm is also important.

%\appendices
\section*{Appendices}
Note that to compute $\mbox{Prox}_{\sigma p}(\cdot)$ is equivalent to solve the following optimization problem
\begin{eqnarray}
\bar{x}=\arg\min_{x}\,\,\sigma\||\nabla x|\|_{1} + \frac{1}{2}\|x-y\|_{2}^{2}.\label{eq:Proxp}
\end{eqnarray}
Therefore,we adopt the algorithm in \cite{Condat} to solve problem \eqref{eq:Proxp}. For more detailed information, one may see \cite{Condat}.

Furthermore, to compute $\mbox{Prox}_{\sigma q}(\cdot)$ is equivalent to solve the following optimization problem
\begin{eqnarray*}
\bar{x}=\arg\min_{x}\,\,\sigma\||x|\|_{s} + \frac{1}{2}\|x-y\|_{2}^{2},
\end{eqnarray*}
where $\sigma>0$.
\begin{itemize}
	\item  If $s=1$,
	\[
	\bar{x} = y-\max\{\min\{y,\sigma\},-\sigma\}.
	\]
	\item If $s=2$,
	\[
	\bar{x}=y-\min\{\||y|\|,\sigma \}\frac{y}{\||y|\|}.
	\]
	\item If $s=\infty$,
	\[
	\bar{x} = y-\mathcal{P}_{\Omega}(y),
	\]
	where $\mathcal{P}_{\Omega}(\cdot)$ denotes the projection operator onto
	\[
	 \Omega=\{y\,\,|\,\,\||y|\|_{1}\le\sigma \}.
	 \]
\end{itemize}

\section*{Acknowledgements}
We would like to thank Professor Wenxing Zhang at University of Electronic Science and Technology of China for many useful discussions and sharing the code for us.

\begin{IEEEbiography}[{\includegraphics[width=1in,height=1.25in,clip,keepaspectratio]{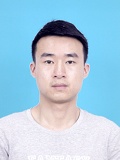}}]{Qingsong Wang}received the Bachelor's degree from Southwest Jiaotong University, Chengdu, China.
He is now a Master student at School of Mathematics, Southwest Jiaotong University. His research interests include numerical optimization, image processing.
\end{IEEEbiography}

\begin{IEEEbiography}[{\includegraphics[width=1in,height=1.25in,clip,keepaspectratio]{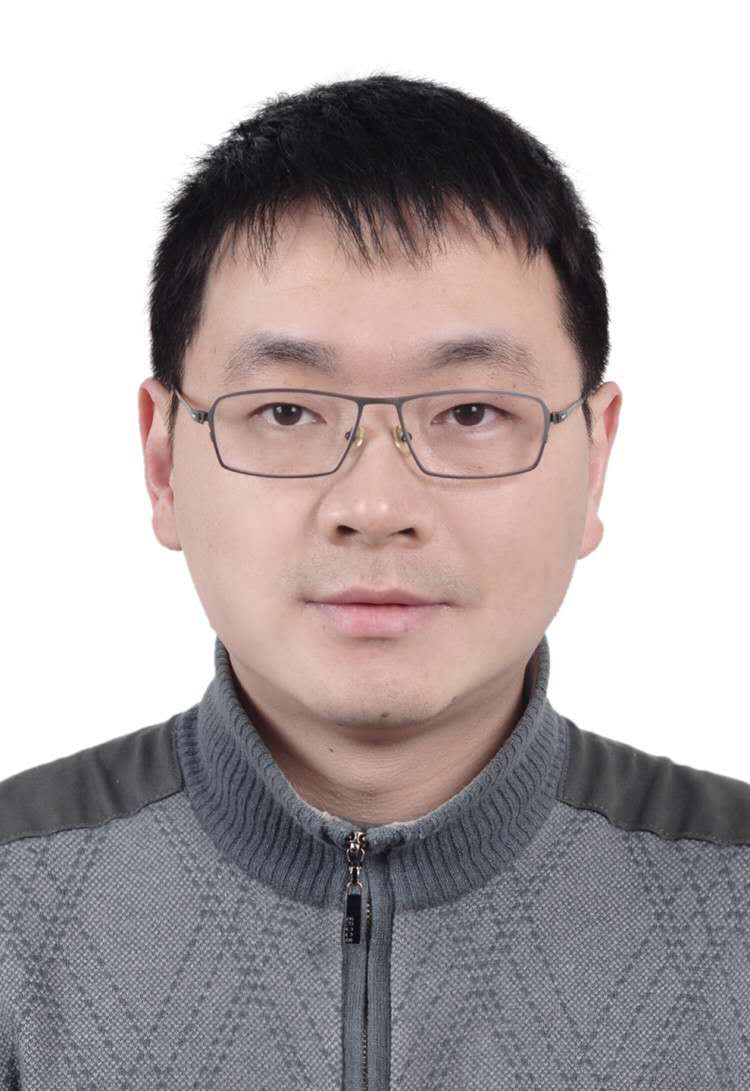}}]{Chengjing Wang}
	received the B.S. and the Ph.D. degrees from Zhejiang University, Hangzhou, China. He is currently an Associate Professor at School of Mathematics, Southwest Jiaotong University, Chengdu, China. His current research interests focus on the theories and algorithms of large-scale optimization and its applications in statistics learning.
\end{IEEEbiography}

\begin{IEEEbiography}[{\includegraphics[width=1in,height=1.25in,clip,keepaspectratio]{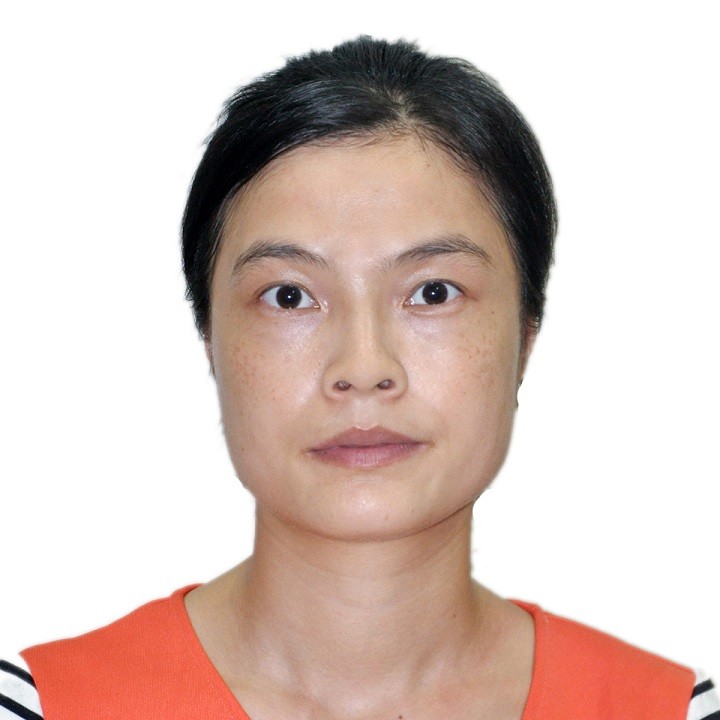}}]{Peipei Tang}
	received the B.S. and the Ph.D. degrees from Zhejiang University, Hangzhou, China.
She is currently an Associate Professor at School of Computer and Computing Science, Zhejiang University City College, Hangzhou, China. Her current research interests include numerical optimization, with its applications in data mining and statistics learning.
\end{IEEEbiography}
 \vspace{.5em}
\begin{IEEEbiography}[{\includegraphics[width=1in,height=1.25in,clip,keepaspectratio]{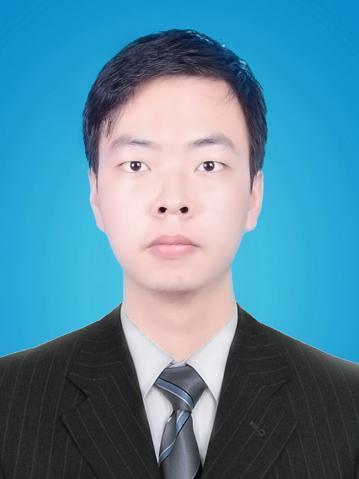}}]{Dunbiao Niu}
	received the B.S. degree in statistics from Southwest Jiaotong University, Chengdu, China in 2016. He is currently pursuing the M.S. degree in probability theory and mathematical statistics at Sichuan University. His current research interests mainly include statistical learning and large-scale optimization.
\end{IEEEbiography}
% You can push biographies down or up by placing
% a \vfill before or after them. The appropriate
% use of \vfill depends on what kind of text is
% on the last page and whether or not the columns
% are being equalized.
%\vfill
% Can be used to pull up biographies so that the bottom of the last one
% is flush with the other column.
%\enlargethispage{-5in}

% that's all folks
\end{document}